\def\row#1#2{{#1}_1,\ldots ,{#1}_{#2}}
\def\row#1#2{{#1}_1,\ldots ,{#1}_{#2}}
\def\2vec#1#2{\left(\begin{array}{c}{#1}\\{#2}\end{array}\right)}
\newtheorem{theorem}{Theorem}
\newtheorem{corollary}{Corollary}
\newtheorem{lemma}{Lemma}
\newtheorem{proposition}{Proposition}
\newtheorem{example}{Example}
\newtheorem{definition}{Definition}
\begin{document}

\setcounter{page}{0}

\thispagestyle{empty}

\begin{center}
{\Large \bf A Characterization of Ideal Weighted Secret Sharing Schemes}
\end{center}
\bigskip

\begin{center}
{\large \bf Ali Hameed and Arkadii Slinko}
\end{center}

\bigskip

\bigskip
\bigskip
\noindent
\begin{center}
\begin{minipage}{14cm}
{\bf Abstract.}  Beimel, Tassa and Weinreb (2008) and Farras and Padro (2010) partially characterized access structures of ideal weighted threshold secret sharing schemes in terms of the operation of composition. They classified indecomposable ideal weighted threshold access structures, and proved that any other ideal weighted threshold access structure is a composition of indecomposable  ones. It remained unclear  which compositions of indecomposable weighted threshold access structures are weighted. In this paper we fill the gap. Using game-theoretic techniques we determine which compositions of indecomposable ideal access structures are weighted, and obtain an if and only if characterization of ideal weighted threshold secret sharing schemes. 
\end{minipage}
\end{center}
\bigskip

\section{Introduction}

Secret sharing schemes are modifications of cooperative games to the situation when not money but information is shared. Instead of dividing a certain sum of money between participants a secret sharing scheme divides a secret into shares---which is then distributed among participants---so that some coalitions of participants have enough information to recover the secret (authorised coalitions) and some (nonauthorised coalitions) do not.  A scheme is perfect if it gives no information to nonauthorised coalitions whatsoever. A perfect scheme is most informationally efficient if the shares contain the same number of bits as the secret \cite{Karnin83}; such schemes are called ideal. The set of authorised coalitions is said to be the access structure.

However, not all access structures can can carry an ideal secret sharing scheme \cite{Stinson:1992}. Finding a description of those which can carry appeared to be quite difficult. A major milestone in this direction was the paper by \cite{BD91} who showed that all ideal secret sharing schemes can be obtained from matroids. Not all matroids, however, define ideal schemes \cite{Seymour1992} so the problem is reduced to classifying those matroids that do. There was little further progress, if any, in this direction. 

Several authors attempted to classify all ideal access structures in subclasses of secret sharing schemes. These include access structures defined by graphs \cite{BD91}, weighted threshold access structures \cite{beimel:360, padro:2010}, hierarchical access structures \cite{padro:2010}, bipartite and tripartite access structures \cite{Padro:1998, PadroS04,FMP2012}. While in the classes of bipartite and tripartite access structures the ideal ones were given explicitly, for the case of weighted threshold access structures \cite{beimel:360} suggested a new kind of description. This method uses the operation of composition of access structures \cite{martin:j:new-sss-from-old}. The idea is that sometimes all players can be classified into 'strong' players and 'weak' players and the access structure can be decomposed into the main game that contains strong players and the auxiliary game which contains weak players.  Under this approach the first task is obtaining a characterisation of indecomposable structures. \citeA{beimel:360} proved that  every ideal indecomposable secret sharing scheme is either disjunctive hierarchical or tripartite. \citeA{padro:2010,FarrasP12} later gave a more precise classification which was complete (but some access structures that they viewed as indecomposable later appeared to be decomposable). 

If a composition of two weighted access structures were again a weighted structure there will not be need to do anything else. However, we will show that this is not true.  Since the composition of two weighted access structures may not be again weighted, it is not clear which indecomposable structures and in which numbers can be combined to obtain more complex weighted access structures. To answer this question in this paper we undertake a thorough investigation of the operation of composition. \par\medskip

Since the access structure of any secret sharing scheme is a simple game in the sense of \citeA{vNM:b:theoryofgames}, we found it more convenient to use game-theoretic methods and terminology. 

Section~2 of the paper gives the background in simple games. We  introduce some important concepts from game theory like Isbel's desirability relation on players, which will play in this paper an important role. We remind the reader of the concept of complete simple game which is a simple game for which Isbel's desirability relation is complete\footnote{In \cite{padro:2010} such games are called hierarchical.}. We introduce the technique of trading transforms and certificates of nonweightedness \cite{GS2011} for proving that a simple game is a weighted threshold games. 

In Section 3, we give the motivation for the concept of composition $C=G\circ_g H$ of two games $G$ and $H$ over an element $g\in G$, give the definition and examples. The essence of this construction is as follows: in the first game $G$ we choose an element $g\in G$ and replace it with the second game $H$. The winning coalitions in the new game are of two types. Firstly, every winning coalition in $G$ that does not contain $g$ remains winning in $C$. A winning coalition in $G$ which contained $g$ needs a winning coalition of $H$ to be added to it to become winning in $C$. We prove several properties of this operation, in particular, we prove that the operation of composition of games is associative. 

Section~4 presents preliminary results regarding the compositions of ideal games and weighted games in general. We start with reminding the reader  that the composition of two games is ideal if and only if the two games being composed are ideal \cite{beimel:360}. Then we show that if a weighted game is composed of two games, then the two composed games are also weighted. Finally, we prove the first sufficient condition for a composition to be weighted.

Section~5 is devoted to compositions in the class of complete games. We prove that, with few possible exceptions, the composition of two complete games is complete if and only if the composition is over the weakest player relative to the desirability relation of the first game. We show that the composition of two weighted threshold simple games may not be weighted threshold even if we compose over the weakest player. We give some sufficient conditions for the composition of two weighted games to be weighted. 

In Section~6 we prove that onepartite games are indecomposable, and also prove the uniqueness of some decompositions.

In Section~7 we recap the classification of indecomposable ideal weighted simple games given by \citeA{padro:2010}. According to it all ideal indecomposable games are either $k$-out-of-$n$ games or belong to one of the six classes: $\bf B_1$, $\bf B_2$, $\bf B_3$, $\bf T_1$, $\bf T_2$, $\bf T_3$. We show that some of the games in their list are in fact decomposable, and hence arrive at a refined list of all indecomposable ideal weighted simple games.

In Section~8 we investigate which of the games from the refined list can be composed to obtain a new ideal weighted simple game. The result is quite striking; the composition of two indecomposable weighted games is weighted only in two cases: when the first game is a $k$-out-of-$n$ game, or if the first game is of type $\bf B_2$ (from the Farras and Padro list) 
and the second game is an anti-unanimity game where all players are passers i.e., players that can win without forming a coalition with other players. This has a major implication for the refinement of Beimel-Tassa-Weinreb-Farras-Padro theorem.

In Section~9, using the results of Section~8, we show that a game $G$ is an ideal weighted simple game if and only if it  is a composition
\[
G=H_1\circ  \cdots  \circ  H_s\circ  I\circ A_n,
\]
where $H_i$ is a $k_i$-out-of-$n_i$ game for each $i=1,2,\ldots, s$, $A_n$ is an anti-unanimity game, and $I$ is an indecomposable game of types $\bf B_1$, $\bf B_2$, $\bf B_3$, $\bf T_1$, and $\bf T_{3}$. Any of these may be absent but $A_n$ may appear only if $I$ is of type $\bf B_2$. 
The main surprise in this result is that in the decomposition there may be at most one game of types $\bf B_1$, $\bf B_2$, $\bf B_3$, $\bf T_1$, $\bf T_3$.

\section{Preliminaries}

\subsection{Simple Games}

The main motivation for this work comes from secret sharing. However, the access structure on the set of users  is a {\em simple game} on that set so we will use game-theoretic terminology.

\begin{definition}[von Neumann \& Morgenstern, 1944]
A simple game is a pair $G=(P_G,W_G)$, where $P_G$ is a set of players and $W_G\subseteq 2^{P_G}$ is a nonempty set of coalitions which satisfies the monotonicity condition:
\[
\text{if $X\in W_G$ and $X\subseteq Y$, then $Y\in W_G$}.
\]
Coalitions from set $W_G$ are called {\em winning} coalitions of $G$, the remaining ones are called {\em losing}. 
\end{definition}

A typical example of a simple game is the United Nations Security Council, which consists of five permanent members and 10 nonpermanent. 
The passage of a resolution requires that all five permanent members vote for it, and also at least nine members in total. The book by \citeA{tz:b:simplegames} gives many other interesting examples.

A simple game will be called just a game. The set $W_G$ of winning coalitions of a game $G$ is completely determined by the set $W_G^{\text{min}} $ of its minimal winning coalitions. A player which does not belong to any minimal winning coalitions is called a {\em dummy}. He can be removed from any winning coalition without making it losing. A player who is contained in every minimal winning coalition is called a {\em vetoer}. A game with a unique minimal winning coalition is called an {\em oligarchy}. In an oligarchy every player is either a vetoer or a dummy. A player who alone forms a winning coalition is called a {\em passer}. A game in which all minimal winning coalitions are singletons is called {\em anti-oligarchy}. In an anti-oligarchy every player is either a passer or a dummy.

\begin{definition}
A simple game $G$ is called {\em weighted threshold game}  if there exist nonnegative weights $\row wn$ and a real number $q$, called {\em quota}, such that 
\begin{equation}
\label{WMG}
X\in W_G \Longleftrightarrow \sum_{i\in X}w_i\ge q.
\end{equation}
This game is denoted $[q;\row wn]$. We call such a game simply {\em weighted}.
\end{definition}
It is easy to see that the United Nation Security Council can be defined in terms of weights as $[39; 7,\ldots,7,1,\ldots,1]$.
In secret sharing weighted threshold access structures were introduced by \cite{shamir:1979,Blakley1979}.\par\medskip

For $X \subset P$ we will
denote its complement $P \setminus X$ by $X^c$. 

\begin{definition}
Let $G=(P,W)$ be a simple game and $A\subseteq P$. Let us define subsets 
\[
W_{\text{sg}}=\{X\subseteq  A^c\mid X\in W\}, \quad  
W_{\text{rg}}=\{X\subseteq A^c\mid X\cup A\in W\}.
\]
Then the game $G_A=(A^c,W_\text{sg})$ is called a {\em subgame} of $G$ and $G^A=(A^c,W_\text{rg})$ is called a {\em reduced game} of $G$. 
\end{definition}



The two main concepts of the theory of games that we will need here are as follows. 

Given a simple game $G$ on the set of players $P$ we define a relation $\succeq_G$ on $P$ by setting $i \succeq_G j$ if for every set $X\subseteq P$ not containing $i$ and~$j$ 
\begin{equation}
\label{condition}
X\cup \{j\}\in W_G \Longrightarrow X\cup \{i\} \in W_G.
\end{equation}
In such case we will say that $i$ is at least as {\em desirable} (as a coalition partner) as $j$.  In the United Nations Security Council every permanent member will be more desirable than any nonpermanent one.
This relation is reflexive and transitive but not always complete (total) (e.g., see \citeA{CF:j:complete}). The corresponding equivalence relation on $[n]$ will be denoted $\sim_{G} $ and the strict desirability relation as $\succ_G$. If this can cause no confusion we will omit the subscript $G$. 

\begin{definition}
Any game with complete desirability relation is called {\em complete}. 
\end{definition}
\begin{example}
Any weighted game is complete.
\end{example}

We note that in \eqref{condition} we can choose $X$ which is minimal with this property in which case $X\cup\{i\}$ will be a minimal winning coalition. Hence the following is true.

\begin{proposition}
Given a simple game $G$ on the set of players $P$ and two players $i.j\in P$, the relation $i\succ_G j$ is equivalent to the existence of a minimal winning coalition $X$ which contains $i$ but not $j$ such that $(X\setminus \{i\})\cup \{j\}$ is losing.
\end{proposition}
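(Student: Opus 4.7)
The plan is to prove the two directions of the equivalence; both rely on monotonicity of $W_G$ combined with the remark (already made just before the proposition) that in the definition of desirability one can restrict attention to minimal witnessing coalitions.

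For the forward direction, I would start from $i \succ_G j$, which unpacked means there is some $Y \subseteq P \setminus \{i,j\}$ with $Y \cup \{j\} \notin W_G$ yet $Y \cup \{i\} \in W_G$. Since $Y \cup \{i\}$ is winning I can shrink it to a minimal winning subset $X \subseteq Y \cup \{i\}$. I then need to check three things: $j \notin X$ (immediate because $j \notin Y \cup \{i\}$); $i \in X$ (if not, then $X \subseteq Y \subseteq Y \cup \{j\}$, and monotonicity would force $Y \cup \{j\}$ to be winning, contradicting our choice of $Y$); and $(X \setminus \{i\}) \cup \{j\}$ losing (because $X \setminus \{i\} \subseteq Y$ implies $(X \setminus \{i\}) \cup \{j\} \subseteq Y \cup \{j\}$, and subsets of losing coalitions are losing by the contrapositive of monotonicity).

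For the reverse direction, assume such a minimal winning $X$ exists. Put $Y := X \setminus \{i\}$, which is a subset of $P \setminus \{i,j\}$. Then $Y \cup \{i\} = X$ is (minimal) winning while $Y \cup \{j\} = (X \setminus \{i\}) \cup \{j\}$ is losing by hypothesis; this single $Y$ already witnesses $i \succ_G j$ in the sense of the paper's strict desirability relation.

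There is no serious obstacle: the only point requiring a moment's care is the step showing $i \in X$ in the forward direction, and this is a clean application of monotonicity. Everything else is bookkeeping between subsets of $Y \cup \{i\}$ and $Y \cup \{j\}$.
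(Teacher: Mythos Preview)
Your forward direction is correct and is exactly the paper's argument: the single sentence preceding the proposition (``we can choose $X$ which is minimal with this property in which case $X\cup\{i\}$ will be a minimal winning coalition'') is precisely your shrinking step, and your check that $i\in X$ via monotonicity is the right way to make that sentence rigorous.

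The reverse direction, however, has a gap. The coalition $Y = X \setminus \{i\}$ witnesses only $j \not\succeq_G i$; to obtain $i \succ_G j$ in the sense the paper defines it (namely $i \succeq_G j$ together with $j \not\succeq_G i$) you must also verify $i \succeq_G j$, and nothing in the hypothesis gives that. In fact the stated equivalence fails without completeness: on $P=\{1,2,3,4\}$ with minimal winning coalitions $\{1,3\}$ and $\{2,4\}$, the coalition $X=\{1,3\}$ satisfies the condition of the proposition for $(i,j)=(1,2)$, yet $1$ and $2$ are $\succeq_G$-incomparable. The paper itself offers no argument for this direction either; in every place where the proposition is invoked the game in question is already known to be complete, and for complete games $j \not\succeq_G i$ \emph{is} equivalent to $i \succ_G j$, so in that setting your reverse-direction argument suffices.
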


We recap that a sequence of coalitions
\begin{equation}
\label{tradingtransform}
{\cal T}=(\row Xj;\row Yj)
\end{equation}
is a trading transform \cite{tz:b:simplegames} if the coalitions $\row Xj$ can be converted into the coalitions $\row Yj$ by rearranging players. This latter condition can also be expressed as 
\[
|\{i:a\in X_i\}| = |\{i:a\in Y_i\}|\qquad \text{for all $a\in P$}.
\]
It is worthwhile to note that while in (\ref{tradingtransform}) we can consider that no $X_i$ coincides with any of $Y_k$, it is perfectly possible that the sequence $\row Xj$ has some terms equal, the sequence  $\row Yj$ can also contain equal terms. 

\citeA{Elgot60} proved (see also \citeA{tz:b:simplegames})  the following fundamental fact.

\begin{theorem}
A game $G$ is a weighted threshold game if for no integer $j$  there exists a trading transform \eqref{tradingtransform} such that all coalitions $\row Xj$ are winning and all $\row Yj$ are losing.
\end{theorem}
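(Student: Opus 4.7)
The plan is to establish the characterization in two directions, one straightforward and one resting on linear programming duality. First I would dispose of the easy direction (weighted implies no bad trading transform), which also sanity-checks the theorem. Assume $G$ has weights $w_1,\ldots,w_n$ and quota $q$, and suppose for contradiction that a trading transform $(X_1,\ldots,X_j;Y_1,\ldots,Y_j)$ exists with every $X_i$ winning and every $Y_i$ losing. Writing $w(X)=\sum_{a\in X}w_a$, we have $\sum_{i=1}^{j} w(X_i)\ge jq$ while $\sum_{i=1}^{j} w(Y_i)< jq$. Since $\sum_{i} \mathbf{1}_{X_i}=\sum_{i} \mathbf{1}_{Y_i}$ coordinatewise by the trading transform property, summing the $w_a$'s on each side yields the same real number, contradicting the comparison.

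For the substantive direction I would translate existence of weights into a linear feasibility question. Consider the system in unknowns $(w_1,\ldots,w_n,q)$ given by $\sum_{a\in X}w_a - q\ge 0$ for each winning $X\in W$ and $q-\sum_{a\in Y}w_a\ge 1$ for each losing $Y\notin W$. Because $W$ is finite, a real solution here produces weights realising $G$ as a weighted threshold game (the gap $1$ can be absorbed into a rescaling). Now I would apply Motzkin's theorem of the alternative (equivalently Farkas' lemma in the appropriate packaging): either the system is feasible, or there exist nonnegative reals $\alpha_X$ for $X\in W$ and $\beta_Y$ for $Y\notin W$, not all zero, such that
\[
\sum_{X\in W}\alpha_X\,\mathbf{1}_X \;=\; \sum_{Y\notin W}\beta_Y\,\mathbf{1}_Y,\qquad \sum_{X\in W}\alpha_X \;=\; \sum_{Y\notin W}\beta_Y,\qquad \sum_{Y\notin W}\beta_Y \;>\;0.
\]

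The final step extracts a trading transform from such a dual certificate. The LP has integer data, so by rational feasibility of linear programs I may take the $\alpha_X$ and $\beta_Y$ rational and then clear denominators to obtain nonnegative integer multipliers satisfying the same equalities. Forming the list $X_1,\ldots,X_j$ by repeating each winning $X$ exactly $\alpha_X$ times, and the list $Y_1,\ldots,Y_j$ by repeating each losing $Y$ exactly $\beta_Y$ times, the common length $j=\sum_X\alpha_X=\sum_Y\beta_Y$ is a positive integer, and the coordinatewise identity $\sum_{i}\mathbf{1}_{X_i}=\sum_{i}\mathbf{1}_{Y_i}$ is precisely the trading transform property. This contradicts the hypothesis of the theorem, so the separating system is feasible and $G$ is weighted.

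The main obstacle, in my view, is the bookkeeping in the Motzkin/Farkas step: one must phrase the losing-side constraint with a strict gap (here $\ge 1$) rather than a plain $<$, so that the dual certificate carries positive mass on the losing side and therefore yields a nondegenerate trading transform with $j\ge 1$. Once that setup is arranged, the combinatorial reinterpretation by clearing denominators is essentially automatic, and the rest of the argument is a routine application of the \emph{theorem of the alternative}.
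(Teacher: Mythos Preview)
The paper does not supply its own proof of this theorem; it is stated as a known result with a citation to Elgot (1960) and the Taylor--Zwicker book. Your argument via linear programming duality is precisely the standard proof found in those references: encode weightedness as feasibility of a finite linear system, invoke a theorem of the alternative, and read off a trading transform from the rational (hence integer, after clearing denominators) dual certificate. The bookkeeping you flag---inserting the gap $\ge 1$ on the losing side so that the dual carries strictly positive mass there---is exactly the point that makes the extracted transform nondegenerate, and you handle it correctly.

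One small item worth tightening: the paper's Definition~2 requires \emph{nonnegative} weights, whereas your feasibility system in $(w_1,\ldots,w_n,q)$ imposes no sign constraint on the $w_i$. This is harmless for monotone games---if some $w_i<0$, replacing it by $0$ preserves the representation, since for any losing $Y\ni i$ the coalition $Y\setminus\{i\}$ is also losing and $w(Y\setminus\{i\})=w(Y)-w_i<q$---but a sentence to this effect would close the gap between your feasible system and the paper's definition.
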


Due to this theorem any trading transform \eqref{tradingtransform} where all coalitions $\row Xj$ are winning and all $\row Yj$ are losing is called a {\em certificate of nonweightedness} \cite{GS2011}.

Completeness can also be characterized in terms of trading transforms \cite{tz:b:simplegames}. 

\begin{theorem}
A game $G$ is complete if no certificate of nonweightedness exists of the form
\begin{equation}
\label{certinc}
{\cal T}=(X\cup \{x\}, Y\cup \{y\}; X\cup \{y\}, Y\cup \{x\}). 
\end{equation}
\end{theorem}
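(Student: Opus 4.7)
The plan is to prove the (implicit) equivalence by unpacking the definition of the desirability relation in both directions. The key observation is that a game fails to be complete precisely when it has a pair of mutually incomparable players, and such an incomparability is witnessed by exactly two inequalities, each of which produces one half of the proposed trading transform.

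First I would handle the contrapositive of the stated direction: assume $G$ is not complete, so there exist players $x,y\in P$ with $x\not\succeq_G y$ and $y\not\succeq_G x$. By the definition of the desirability relation, $x\not\succeq_G y$ gives some $Y\subseteq P\setminus\{x,y\}$ with $Y\cup\{y\}\in W_G$ and $Y\cup\{x\}\notin W_G$; symmetrically, $y\not\succeq_G x$ gives some $X\subseteq P\setminus\{x,y\}$ with $X\cup\{x\}\in W_G$ and $X\cup\{y\}\notin W_G$. Then
\[
{\cal T}=(X\cup\{x\},\,Y\cup\{y\};\;X\cup\{y\},\,Y\cup\{x\})
\]
has two winning coalitions on the left and two losing coalitions on the right. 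It is a trading transform because, as $x,y\notin X\cup Y$, each player in $X\cup Y\cup\{x,y\}$ appears the same number of times on each side (every element of $X$ or $Y$ appears once per side, while $x$ and $y$ each appear exactly once on each side). Hence ${\cal T}$ is a certificate of nonweightedness of the forbidden form.

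For the converse (the direction written in the theorem), suppose such a certificate ${\cal T}$ exists. Then $X\cup\{x\}\in W_G$ while $X\cup\{y\}\notin W_G$, with $X$ containing neither $x$ nor $y$, which is exactly the negation of the implication defining $y\succeq_G x$; hence $y\not\succeq_G x$. Dually, $Y\cup\{y\}\in W_G$ and $Y\cup\{x\}\notin W_G$ yield $x\not\succeq_G y$. So $x$ and $y$ are incomparable under $\succeq_G$, and $G$ is not complete.

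There is no real obstacle here: the whole argument is a direct translation of the definition of $\succeq_G$ into the language of trading transforms, and the only point requiring minor care is verifying the multiset condition that makes ${\cal T}$ a genuine trading transform, which is automatic once we insist $X,Y\subseteq P\setminus\{x,y\}$ — and this can always be arranged because the defining condition \eqref{condition} for $\succeq_G$ already restricts to such witnesses.
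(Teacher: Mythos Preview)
Your proof is correct and is the standard argument for this characterisation. Note, however, that the paper does not actually supply its own proof of this theorem: it is quoted from Taylor and Zwicker's book as a known result and then merely rephrased in terms of swap robustness. So there is no proof in the paper to compare against; your argument would serve perfectly well as a self-contained justification.

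One small point you assert without checking in the converse direction is that $X$ and $Y$ contain neither $x$ nor $y$; this is not literally written into the displayed form of $\mathcal{T}$, but it is forced. Monotonicity together with the winning/losing pattern gives $x\notin X$ and $y\notin Y$ (since $X\cup\{x\}$ winning and $X\cup\{y\}$ losing preclude $X\cup\{x\}\subseteq X\cup\{y\}$), and then counting occurrences of $x$ and of $y$ on each side of the trading transform rules out $x\in Y$ and $y\in X$. With that routine check in place, both directions go through exactly as you wrote them.
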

We call \eqref{certinc} a {\em certificate of incompleteness}. This theorem says that completeness is equivalent to the impossibility for two winning coalitions to swap two players and become both losing. This latter property is also called {\em swap robustness}.\par\medskip

A complete game $G=(P,W)$ can be compactly represented using multisets. All its players are split into equivalence classes of players of equal desirability. If, say, we have $m$ equivalence classes, i.e., $P=P_1\cup P_2\cup \ldots \cup P_m$ with $|P_i|=n_i$, then we can think that $P$ is the multiset
\[
\{1^{n_1},2^{n_2},\ldots,m^{n_m}\}.
\]
A submultiset $\{1^{\ell_1},2^{\ell_2},\ldots,m^{\ell_m}\}$ will then denote the class of coalitions where $\ell_i$ players come from $P_i$, $i=1,\ldots,m$. All of them are either winning or all losing. We may enumerate classes so  that $1\succ_G 2\succ_G \cdots \succ_G m$. The game with $m$ classes is called {\em $m$-partite}.

If a game $G $ is complete, then we define {\em  shift-minimal}  \cite{CF:j:complete} winning coalitions  as follows. By a {\em shift} we mean a replacement of a player  of a coalition by a less desirable player which did not belong to it.  Formally, given a  coalition $X$, player $p\in X$ and another player $q\notin X$ such that $q\prec_{G}p$, we say that the coalition $ (X\setminus \{p\})\cup \{q\} $ is obtained from $X$ by a {\em shift}. A winning coalition $X$ is {\em shift-minimal} if  every coalition strictly contained in it and every coalition obtained from it by a shift are losing. A complete game is fully defined by its shift-minimal winning coalitions.

\begin{example}[Onepartite games]
Let $H_{n,k}$ be the game where there are $n$ players and it takes $k$ or more to win. Such games are called  {\em $k$-out-of-$n$ games}.  Alternatively they can be characterised as the class of complete 1-partite games, i.e., the games with a single class of equivalent players. The game $H_{n,n}$ is special and is called the {\em unanimity game} on $n$ players. We will denote it as $U_n$. The game $H_{n,1}$ does not have a name in the literature. We will call it {\em anti-unanimity game} and denote $A_n$.
\end{example}

\begin{example}[Bipartite games]
Here we introduce two important types of bipartite games. A hierarchical disjunctive game $H_\exists ({\bf n},{\bf k})$ with ${\bf n}=(n_1,n_2)$ and ${\bf k}=(k_1,k_2)$  on a multiset $P=\{1^{n_1},2^{n_2}\}$ is defined by the set of winning coalitions 
\[
W_\exists = \{ \{1^{\ell_1},2^{\ell_2}\} \mid (\ell_1\ge k_1) \vee (\ell_1+\ell_2\ge k_2)  \},
\]
where $1\le k_1<k_2$, $k_1\le n_1$ and $k_2-k_1 < n_2$. A hierarchical conjunctive game $H_\forall ({\bf n},{\bf k})$ with ${\bf n}=(n_1,n_2)$ and ${\bf k}=(k_1,k_2)$  on a multiset $P=\{1^{n_1},2^{n_2}\}$ is defined by the set of winning coalitions 
\[
W_\forall = \{ \{1^{\ell_1},2^{\ell_2}\} \mid (\ell_1\ge k_1) \wedge (\ell_1+\ell_2\ge k_2)  \},
\]
where $1\le k_1\le k_2$,  $k_1\le n_1$ and $k_2-k_1 < n_2$. In both cases,  if the restrictions on ${\bf n}$ and ${\bf k}$ are not satisfied the game becomes 1-partite  \cite{gha:t:hierarchical}).
\end{example}

\begin{example}[Tripartite games]
\label{ex}
Here we introduce two types of tripartite games. Let ${\bf n}=(n_1,n_2,n_3)$ and ${\bf k}=(k_1,k_2,k_3)$, where $n_1,n_2,n_3$ and $k_1,k_2,k_3$ are positive integers. The game $\Delta_1({\bf n},{\bf k})$ is defined on the multiset $P=\{1^{n_1},2^{n_2},3^{n_3}\}$ with the set of winning coalitions
\[
\{ \{1^{\ell_1},2^{\ell_2},3^{\ell_3}\} \mid (\ell_1\ge k_1) \vee [(\ell_1+\ell_2\ge k_2)\wedge (\ell_1+\ell_2+\ell_3\ge k_3)  \},
\]
where 
\begin{equation}
\label{delta_cond_1}
k_1<k_3,\quad k_2<k_3,\quad n_1 \geq k_1,\quad n_2 >k_2- k_1  \quad \text{and $\quad n_3> k_3-k_2$}.
\end{equation}
These, in particular, imply $n_1+n_2\ge k_2$.\smallskip

The game $\Delta_2({\bf n},{\bf k})$ is for the case when $n_2 \leq k_2 -k_1$, and it is defined on the multiset $P=\{1^{n_1},2^{n_2},3^{n_3}\}$ with the set of winning coalitions
\[
\{ \{1^{\ell_1},2^{\ell_2},3^{\ell_3}\} \mid (\ell_1+\ell_2\ge k_2) \vee [(\ell_1\ge k_1)\wedge (\ell_1+\ell_2+\ell_3\ge k_3)  \}.
\]
where 
\begin{equation}
\label{delta_cond_2}
k_1< k_2<k_3, \quad n_1+n_2\ge k_2,\quad  n_3> k_3-k_2,  \quad \text{and $\quad n_2+n_3> k_3-k_1$}.
\end{equation}
These conditions, in particular, imply $n_1\ge k_1$  and $n_3\ge 2$.

In both cases,  if the restrictions on ${\bf n}$ and ${\bf k}$ are not satisfied the game either contains dummies or becomes 2-partite or even 1-partite (see a justification of this claim in the appendix). 
\end{example}


The games in these three examples play a crucial role in classification of ideal weighted secret sharing schemes  \cite{beimel:360,padro:2010}.

\section{The Operation of Composition of Games}

The most general type of compositions of simple games was defined by \citeA{Shapley62}. We need a very partial case of that concept here, which is in the context of secret sharing, was introduced by \citeA{martin:j:new-sss-from-old}.

\begin{definition}
\label{decompo}
Let $G$ and $H$ be two games defined on disjoint sets of players and $g \in P_{G}$. We define the composition game $C=G\circ_g H$ by defining $P_{C}=(P_{G}\setminus \{g\}) \cup P_{H}$ and
\[
W_{C}= \{X \subseteq P_C\mid X_G \in W_{G} \text{ or $X_G \cup \{g\} \in W_{G}$ and $X_H \in W_{H}$} \},
\]
where $X_G = X \cap P_{G}$  and  $X_H = X \cap P_{H}$. 
\end{definition}


This is a substitution of the game $H$ instead of a single element $g$ of the first game. All winning compositions in $G$ not containing $g$ remain winning in $C$. If a winning coalition of $G$ contained $g$, then it remains winning in $C$ if $g$ is replaced with a winning coalition of $H$. One might imagine that, if a certain issue is voted in $G$, then voters of $H$ are voted first and then their vote is counted in the first game as if it was a vote of player $g$. Such situation appears, for example, if a very experienced expert resigns from a company, they might wish to replace him with a group of experts.

\begin{definition}
A game $G$ is said to be {\em indecomposable} if there does not exist two games $H$ and $K$ and $h\in P_H$ such that $\min(|H|,|K|)>1$ and $G\cong H\circ_h K$. Alternatively, it is called {\em decomposable}. 
\end{definition}

\begin{example}
\label{vetoers}
Let $G=(P,W)$ be a simple game and $A\subseteq P$ be the set of all vetoers in this game. Let $|A|=m$. Then $G\cong U_{m+1}\circ_u G_A$, where $u$ is any player of $U_{m+1}$. So any game with vetoers is decomposable.
\end{example}

\begin{example}
\label{passers}
Let $G=(P,W)$ be a simple game and $A\subseteq P$ be the set of all passers in this game. Let $|A|=m$. Then $G\cong A_{m+1}\circ_a G_A$, where $a$ is any player of $A_{m+1}$. So any game with passers is decomposable.
\end{example}

Suppose $G=(P,W)$ and $G'=(P',W')$ be two games and $\sigma\colon P\to P'$ is a bijection. We say that $\sigma$ is an isomorphism of $G$ and $G'$, and denote this as $G\cong G'$, if $X\in W$ if and only if $\sigma(X)\in W'$.

It is easy to see that if $|H|=1$, then $ H\circ_h K\cong K$ and, if $|K|=1$, then  $H\circ_h K\cong H$.

\begin{proposition}
\label{prop1}
Let $G,H$ be two games defined on the disjoint set of players and $g\in P_G$. Then 
\[
W_{G\circ_g H}^\text{min}=\{X\mid X\in W_G^\text{min} \text{ and $g\notin X$}\}\cup \{X\cup Y \mid \text{$X\cup \{g\}\in W_G^\text{min}$ and $Y\in W_H^\text{min}$ with $g\notin X$}\}.
\]
\end{proposition}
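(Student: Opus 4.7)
The plan is to prove both set inclusions between $W_{G\circ_g H}^\text{min}$ and the right-hand side, using the two clauses of Definition~\ref{decompo} to stratify the argument.

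For the containment ``$\supseteq$'', I would first verify that each candidate set consists of winning coalitions of $C = G\circ_g H$. If $X \in W_G^\text{min}$ with $g\notin X$, then $X_G = X \in W_G$ triggers the first clause. If $Z = X\cup Y$ with $X\cup\{g\}\in W_G^\text{min}$, $Y\in W_H^\text{min}$, and $g\notin X$, then $Z_G = X$ and $Z_H = Y$, so the second clause applies since $X\cup\{g\}\in W_G$ and $Y\in W_H$. Minimality then reduces to a case split on a hypothetical proper subset $Z'\subsetneq Z$. For the first family, $Z'\subseteq P_G\setminus\{g\}$ forces $Z'_H=\emptyset$, so either $Z'\in W_G$ (contradicting minimality of $X$ in $G$) or $Z'\cup\{g\}\in W_G$ together with $\emptyset\in W_H$, which is ruled out by the standard non-triviality convention $\emptyset\notin W_H$. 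For the second family, if $Z'_G\in W_G$ then $Z'_G\cup\{g\}$ is a proper winning subset of $X\cup\{g\}$, contradicting its minimality in $G$; and if $Z'_G\cup\{g\}\in W_G$ with $Z'_H\in W_H$, then minimality of $X\cup\{g\}$ in $G$ and of $Y$ in $H$ forces $Z'_G=X$ and $Z'_H=Y$, i.e., $Z'=Z$.

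For the containment ``$\subseteq$'', pick $Z\in W_C^\text{min}$ and split on which clause certifies $Z$ as winning. If $Z_G\in W_G$, then $Z_G$ alone is winning in $C$, so minimality of $Z$ in $C$ forces $Z=Z_G\subseteq P_G\setminus\{g\}$, and any strictly smaller winning coalition of $G$ inside $Z_G$ would again be winning in $C$, so $Z\in W_G^\text{min}$. Otherwise $Z_G\notin W_G$ and necessarily $Z_G\cup\{g\}\in W_G$ with $Z_H\in W_H$. For a proper subset $A'\subsetneq Z_G\cup\{g\}$ with $A'\in W_G$: if $g\notin A'$ then $A'\subseteq Z_G\subsetneq Z$ (strict because $Z_H\neq\emptyset$ by non-triviality of $H$) is already winning in $C$, while if $g\in A'$ then $(A'\setminus\{g\})\cup Z_H\subsetneq Z$ is winning in $C$ via the second clause. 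Both contradict the minimality of $Z$, so $Z_G\cup\{g\}\in W_G^\text{min}$; an analogous shrinking argument on $Z_H$ shows $Z_H\in W_H^\text{min}$, placing $Z$ in the second set.

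The only delicate point is the non-triviality assumption $\emptyset\notin W_H$ (and symmetrically for $G$), needed to prevent degenerate ``collapse'' arguments at the minimality step in both directions. Beyond that invocation, the proof is pure combinatorial bookkeeping on the two clauses of the composition, so I expect the main effort to be organizing the case analysis cleanly rather than any individual deduction.
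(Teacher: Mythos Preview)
Your proof is correct and is precisely the natural unpacking of the paper's one-line proof, which reads in full: ``Follows directly from the definition.'' There is no alternative route being taken here---you have simply written out carefully the case analysis that the authors deemed routine, and your observation that the argument tacitly uses $\emptyset\notin W_H$ (so that $Z_H\neq\emptyset$ in the relevant places) is a genuine point the paper leaves implicit.

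One small wording issue: in the ``$\supseteq$'' direction for the second family, when you assume $Z'_G\in W_G$ and conclude that ``$Z'_G\cup\{g\}$ is a proper winning subset of $X\cup\{g\}$'', the properness is not immediate as stated (a priori $Z'_G$ could equal $X$). The cleaner phrasing is that $Z'_G$ itself, not containing $g$, is already a proper winning subset of $X\cup\{g\}$, contradicting minimality of the latter. This is cosmetic; the logic is sound.
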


\begin{proof}
Follows directly from the definition.
\end{proof}

\begin{proposition}
Let $G,H,K$ be three games defined on the disjoint set of players and $g\in P_G$, $h\in P_H$. Then
\[
(G\circ_g H)\circ_h K \cong G\circ_g (H\circ_h K), 
\] 
that is the two compositions are isomorphic.
\end{proposition}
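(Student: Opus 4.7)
The plan is to prove the associativity by exhibiting the identity map between the two sides as an explicit isomorphism. Both compositions have player set $(P_G\setminus\{g\})\cup(P_H\setminus\{h\})\cup P_K$: on the left, $L=G\circ_g H$ has player set $(P_G\setminus\{g\})\cup P_H$, and since $h\in P_H$ is a player of $L$, the further composition $L\circ_h K$ replaces $h$ by $P_K$. On the right, $M=H\circ_h K$ has player set $(P_H\setminus\{h\})\cup P_K$, and $G\circ_g M$ replaces $g$ by $P_M$. So the underlying sets agree, and it remains to check that a coalition $X$ is winning in one composition if and only if it is winning in the other.

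For any such coalition $X$, write $X_G=X\cap P_G$, $X_H=X\cap P_H$, $X_K=X\cap P_K$ (so $g\notin X_G$ and $h\notin X_H$). I would then unfold the definition of winning coalition on each side by applying Definition~\ref{decompo} twice. For $(G\circ_g H)\circ_h K$, the condition $X\in W_{L\circ_h K}$ expands to ``$X_G\cup X_H\in W_L$, or $X_G\cup X_H\cup\{h\}\in W_L$ and $X_K\in W_K$'', and each disjunct expands further by the definition of $W_L$. For $G\circ_g(H\circ_h K)$, the condition $X\in W_{G\circ_g M}$ expands to ``$X_G\in W_G$, or $X_G\cup\{g\}\in W_G$ and $X_H\cup X_K\in W_M$'', with the last clause again expanded by the definition of $W_M$.

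The goal is then to observe that both expansions collapse (after absorbing redundant cases by monotonicity) to the same three-way disjunction:
\begin{enumerate}
\item[(i)] $X_G\in W_G$; or
\item[(ii)] $X_G\cup\{g\}\in W_G$ and $X_H\in W_H$; or
\item[(iii)] $X_G\cup\{g\}\in W_G$, $X_H\cup\{h\}\in W_H$, and $X_K\in W_K$.
\end{enumerate}
This matches the intuitive reading: a winning coalition either wins in $G$ without $g$, or it wins in $G$ using $g$, in which case $g$'s ``vote'' is delegated to $H$, and if $H$ itself wins using $h$, then $h$'s vote is further delegated to $K$.

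There is no real obstacle here; the only care needed is bookkeeping, namely recognising that a case like ``$X_G\in W_G$ and $X_K\in W_K$'' appearing on one side of the unfolding is already subsumed by case (i) by monotonicity, so that the two Boolean expressions really do become syntactically identical.
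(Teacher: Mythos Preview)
Your proof is correct and follows essentially the same route as the paper: unfold the composition definition twice on each side and observe that both reduce to the same three-case description of winning coalitions. The only cosmetic difference is that the paper carries out the comparison on \emph{minimal} winning coalitions via Proposition~\ref{prop1}, whereas you work directly with all winning coalitions from Definition~\ref{decompo}; your version is arguably a hair cleaner since the absorption step (the case ``$X_G\in W_G$ and $X_K\in W_K$'') is pure propositional logic ($A\vee(A\wedge B)\equiv A$) and does not actually require monotonicity.
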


\begin{proof}
Let us classify the minimal winning coalitions of the game $(G\circ_g H)\circ_h K$. By Proposition~\ref{prop1} they can be of the following types:
\begin{itemize}
\item $X\in W_G^\text{min}$ with $g\notin X$;
\item $X\cup Y$, where $X\cup \{g\}\in W_G^\text{min}$ and $Y\in W_H^\text{min}$ with $g\notin X$ and $h\notin Y$;
\item $X\cup Y\cup Z$, where $X\cup \{g\}\in W_G^\text{min}$, $Y\cup \{h\}\in W_H^\text{min}$ and $Z\in W_K^\text{min}$  with $g\notin X$ and $h\notin Y$.
\end{itemize}
It is easy to see that the game $G\circ_g (H\circ_h K)$ has exactly the same minimal winning coalitions.
\end{proof}


\begin{proposition}
Let $G,H$ be two games defined on the disjoint set of players. Then  $G\circ_g H$ has no dummies if and only if both $G$ and $H$ have no dummies. 
\end{proposition}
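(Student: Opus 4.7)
The plan is to derive everything from Proposition~\ref{prop1}, which tells us exactly what the minimal winning coalitions of $G\circ_g H$ look like. Recall that a player is a dummy precisely when it belongs to no minimal winning coalition, so it suffices to track which players from $(P_G\setminus\{g\})\cup P_H$ appear in the two families described in Proposition~\ref{prop1}.

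\textbf{The ``if'' direction.} Assume $G$ and $H$ have no dummies. I would take an arbitrary player $p$ of $G\circ_g H$ and split into two cases. If $p\in P_H$, then since $p$ is not a dummy of $H$ there is $Y\in W_H^{\text{min}}$ containing $p$; since $g$ is not a dummy of $G$, there is $X\cup\{g\}\in W_G^{\text{min}}$ with $g\notin X$. By Proposition~\ref{prop1}, $X\cup Y$ is a minimal winning coalition of $G\circ_g H$ containing $p$. If $p\in P_G\setminus\{g\}$, pick $X\in W_G^{\text{min}}$ containing $p$; if $g\notin X$ then $X$ itself is minimal winning in $G\circ_g H$, and if $g\in X$ then, using any $Y\in W_H^{\text{min}}$ (which exists as soon as $H$ has no dummies), the set $(X\setminus\{g\})\cup Y$ is a minimal winning coalition of $G\circ_g H$ containing $p$.

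\textbf{The ``only if'' direction.} I would argue by contrapositive. Suppose first that $H$ has a dummy $q\in P_H$. Any minimal winning coalition of $G\circ_g H$ containing $q$ must be of the second type in Proposition~\ref{prop1}, namely $X\cup Y$ with $Y\in W_H^{\text{min}}$; but $q\notin X$ (disjoint supports) and $q\notin Y$ (dummy of $H$), contradiction, so $q$ is a dummy of $G\circ_g H$. Next suppose $G$ has a dummy $p\in P_G$. If $p\ne g$, then $p$ does not belong to any $X\in W_G^{\text{min}}$ nor to any $X$ with $X\cup\{g\}\in W_G^{\text{min}}$, so by Proposition~\ref{prop1} $p$ sits in no minimal winning coalition of $G\circ_g H$.

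The one mildly delicate subcase, and the only place that really requires care, is $p=g$: then no minimal winning coalition of $G$ contains $g$, so the second family in Proposition~\ref{prop1} is empty and every minimal winning coalition of $G\circ_g H$ lies in $P_G\setminus\{g\}$. Provided $P_H\ne\emptyset$ (which is part of $H$ being a game), any element of $P_H$ is then a dummy of $G\circ_g H$, giving the desired contradiction. This completes the plan; I do not expect any step to be genuinely hard, the content of the lemma being essentially a bookkeeping consequence of Proposition~\ref{prop1}.
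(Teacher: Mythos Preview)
Your argument is correct and is precisely the routine verification the paper has in mind; the paper's own proof is simply the word ``Straightforward.'' Your case analysis via Proposition~\ref{prop1}, including the small subcase where $g$ itself is a dummy of $G$, is exactly the kind of bookkeeping that justifies that one-word proof.
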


\begin{proof} 
Straightforward.
\end{proof}

\section{Decompositions of  Weighted Games and Ideal Games}

The following result was proved in  \cite{beimel:360} and was a basis for this new type of description. 

\begin{proposition}
\label{splitprop}
Let $C=G\circ_g H$ be a decomposition of a game $C$ into two games $G$ and $H$ over an element $g\in P_G$, which is not dummy. Then, $C$ is ideal if and only if $G$ and $H$ are also ideal.
\end{proposition}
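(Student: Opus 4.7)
The plan is to establish the two implications separately, with the forward direction giving an explicit nested construction and the reverse direction requiring a combination of restriction and information-theoretic compression.

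\emph{Forward direction.} Assume $G$ and $H$ are both ideal, and fix ideal schemes $\Sigma_G$ and $\Sigma_H$ over a common finite secret domain $S$. I would build $\Sigma_C$ by \emph{nested sharing}: to share $s\in S$, first run $\Sigma_G$ to produce shares $(t_p)_{p\in P_G}$; then feed $t_g$ to $\Sigma_H$ as its secret to produce shares $(u_q)_{q\in P_H}$; distribute $t_p$ to each $p\in P_G\setminus\{g\}$ and $u_q$ to each $q\in P_H$, and discard $t_g$. All issued shares have size $|S|$, so $\Sigma_C$ is ideal. Reconstruction for an authorized $T\in W_C$ follows the two clauses of Definition~\ref{decompo}: if $T\cap P_G\in W_G$, then $T\cap P_G$ recovers $s$ directly through $\Sigma_G$; otherwise $(T\cap P_G)\cup\{g\}\in W_G$ and $T\cap P_H\in W_H$, so $T\cap P_H$ first recovers $t_g$ via $\Sigma_H$ and then $T\cap P_G$ together with $t_g$ recovers $s$ via $\Sigma_G$. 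Perfectness for a losing $Z$ reduces to two subcases: if $(Z\cap P_G)\cup\{g\}\notin W_G$, perfectness of $\Sigma_G$ hides $s$ from $Z\cap P_G$ even given $t_g$; if $(Z\cap P_G)\cup\{g\}\in W_G$ but $Z\cap P_H\notin W_H$, perfectness of $\Sigma_H$ makes $t_g$ independent of $Z\cap P_H$'s shares, so $t_g$ acts as fresh randomness and $s$ remains hidden.

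\emph{Reverse direction, the easy half.} Now assume $\Sigma_C$ is ideal. Since $g$ is not a dummy in $G$, there exists $X_0\subseteq P_G\setminus\{g\}$ with $X_0\cup\{g\}\in W_G$ and $X_0\notin W_G$; fix also any $Y_0\in W_H^{\text{min}}$. The key structural observation is that, for $Y\subseteq P_H$, $X_0\cup Y\in W_C\Leftrightarrow Y\in W_H$, and for $X\subseteq P_G\setminus\{g\}$, $X\in W_C\Leftrightarrow X\in W_G$, while $X\cup Y_0\in W_C\Leftrightarrow X\in W_G$ or $X\cup\{g\}\in W_G$. To extract an ideal $\Sigma_H$, broadcast the $\Sigma_C$-shares $(t_p)_{p\in X_0}$ as public information and give each $q\in P_H$ their private $\Sigma_C$-share $u_q$. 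The structural observation together with perfectness of $\Sigma_C$ yields a perfect realization of $W_H$, and since every private share has size $|S|$, the resulting $\Sigma_H$ is ideal.

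\emph{The main obstacle: building $\Sigma_G$.} Dually, I would give each $p\in P_G\setminus\{g\}$ the share $t_p$ and assign $g$ a single share of size $|S|$ that plays the role of $Y_0$ in $\Sigma_C$. The naive choice of the whole vector $(u_q)_{q\in Y_0}$ has entropy generally greater than $\log|S|$ and would violate ideality. The hard part is to compress $(u_q)_{q\in Y_0}$ into a single symbol of $S$ that still permits reconstruction jointly with every minimal winning $X\cup\{g\}\in W_G^{\text{min}}$. This is precisely where ideality, rather than mere perfectness, of $\Sigma_C$ enters: the tight entropy identities $H(u_q)=\log|S|$ for each $q$, $H(s\mid t_X,u_{Y_0})=0$ for each minimal winning $X\cup\{g\}$, and $H(s\mid t_X)=\log|S|$ force the $s$-relevant information in $(u_q)_{q\in Y_0}$ given the other $P_G$-shares to equal exactly $\log|S|$ bits; a canonical representative of the corresponding equivalence classes on $Y_0$-share vectors can be taken as $g$'s share. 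Verifying that the resulting scheme realizes $W_G$ perfectly is then a routine case check on winning and losing coalitions of $G$, but pinning down the compression is the main technical hurdle of the proof.
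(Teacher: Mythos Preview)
The paper does not prove this proposition; it attributes the result to Beimel, Tassa and Weinreb and moves on, so there is no in-paper argument to compare against. Your forward direction via nested sharing and your extraction of an ideal $\Sigma_H$ in the reverse direction are both correct and standard.

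The gap is in the ``compression'' step for $\Sigma_G$. You claim the entropy identities force the $s$-relevant information in $(u_q)_{q\in Y_0}$, given the other $P_G$-shares, to be exactly $\log|S|$ bits, and that a representative of the resulting equivalence classes can serve as $g$'s share. But the equivalence relation is never defined, and the qualifier ``given the other $P_G$-shares'' makes it depend on which coalition $X$ is in play---whereas $g$'s share must be fixed by the dealer once, as a function of the randomness alone, and must work simultaneously for \emph{every} minimal winning $X\cup\{g\}$. The equality $I(s;u_{Y_0}\mid t_X)=\log|S|$ for each $X$ separately does not by itself yield a single map on $u_{Y_0}$ that is sufficient for all $X$ at once, so the argument as written is incomplete. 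The standard remedy avoids compression entirely: pick a single $q\in Y_0$, give $g$ the $\Sigma_C$-share $u_q$, and make the remaining shares $(u_r)_{r\in Y_0\setminus\{q\}}$ public (equivalently, condition $\Sigma_C$ on fixed values of these shares). Then a winning $X\cup\{g\}$ in $G$ holds the $\Sigma_C$-shares of $X\cup Y_0\in W_C$ and recovers $s$; a losing $X\subseteq P_G\setminus\{g\}$ holds only the shares of $X\cup(Y_0\setminus\{q\})\notin W_C$, since $Y_0$ is minimal in $H$; and a losing $X\cup\{g\}$ holds the shares of $X\cup Y_0\notin W_C$, since $X\cup\{g\}\notin W_G$. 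Now $g$'s share is literally one $\Sigma_C$-share of size $|S|$, and ideality of $\Sigma_G$ is immediate.
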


Suppose we have a class of games ${\cal C}$ such that if the composition $G\circ_g H$ belongs to $\cal C$, then both $G$ and $H$ belong to $\cal C$. 
This proposition means that in any class of games $\cal C$ with the above property we may represent any game as a composition of indecomposable ideal games also belonging to $\cal C$. The class of weighted games as the following lemma shows satisfies the above property,   Hence, if we would like to describe ideal games in the class of weighted games we should look at indecomposable weighted games first.

\begin{lemma}
\label{splitlemma}
Let $C=G\circ_g H$ be a decomposition of a game $C$ into two games $G$ and $H$ over an element $g\in P_G$, which is not dummy. Then,   
 if $C$ is weighted, then $G$ and $H$ are weighted.
\end{lemma}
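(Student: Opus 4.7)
The plan is to construct explicit weightings of $G$ and $H$ directly from a witnessing weighting $(w,q)$ of $C$. For $H$, I would use the non-dummy hypothesis on $g$ to pick a minimal winning coalition $A\in W_G^{\min}$ with $g\in A$, and set $A'=A\setminus\{g\}$. Minimality of $A$ forces $A'\notin W_G$. For any $Y\subseteq P_H$, Definition~\ref{decompo} applied to $A'\cup Y$ gives
\[
A'\cup Y\in W_C\iff A'\in W_G\ \text{or}\ (A\in W_G\ \text{and}\ Y\in W_H),
\]
which collapses to $Y\in W_H$. Hence $Y\in W_H\iff w(A')+w(Y)\ge q\iff w(Y)\ge q-w(A')$, so $H$ is weighted with the restricted weights $w|_{P_H}$ and quota $q-w(A')$.

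For $G$, I would pick any $Y_0\in W_H^{\min}$ and set $w_G(p):=w(p)$ for $p\in P_G\setminus\{g\}$ and $w_G(g):=w(Y_0)$, keeping the quota $q$. For $X\subseteq P_G$ with $g\notin X$, Definition~\ref{decompo} (under the standard convention $\emptyset\notin W_H$) yields $X\in W_C\iff X\in W_G$, which gives $X\in W_G\iff w(X)\ge q$. For $X=X'\cup\{g\}$ with $X'\subseteq P_G\setminus\{g\}$, the key idea is to probe $W_C$ at the coalition $X'\cup Y_0\subseteq P_C$ in two different ways. Combinatorially, Definition~\ref{decompo} gives
\[
X'\cup Y_0\in W_C\iff X'\in W_G\ \text{or}\ X\in W_G,
\]
which by monotonicity of $W_G$ is equivalent to $X\in W_G$. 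Numerically, the weighting of $C$ gives $X'\cup Y_0\in W_C\iff w(X')+w(Y_0)\ge q\iff w_G(X)\ge q$. Equating the two, $X\in W_G\iff w_G(X)\ge q$, so $G$ is weighted.

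The argument is essentially a careful unpacking of Definition~\ref{decompo}; no new machinery is needed. The only real obstacle is the mild edge case $\emptyset\in W_H$, which makes $H$ trivial and breaks the $g\notin X$ case of the argument for $G$ because then $C$ no longer distinguishes $X\in W_G$ from $X\cup\{g\}\in W_G$. This case is standardly excluded by the convention that simple games are nontrivial, and in any event can be handled separately since then $H$ is trivially weighted and one can dispense with the $Y_0$-probe step entirely.
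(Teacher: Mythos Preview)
Your proof is correct and takes a genuinely different route from the paper. The paper argues contrapositively via Elgot's theorem: assuming $H$ (respectively $G$) is not weighted, it lifts a certificate of nonweightedness for $H$ (resp.\ $G$) to one for $C$ by padding each coalition with a fixed $A'=A\setminus\{g\}$ (resp.\ by replacing each occurrence of $g$ with a fixed minimal winning coalition $W$ of $H$). Your argument instead constructs the weightings explicitly: restrict $w$ to $P_H$ with quota $q-w(A')$, and for $G$ reuse $w$ on $P_G\setminus\{g\}$ while assigning $w_G(g)=w(Y_0)$ for a minimal $Y_0\in W_H$.

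The two approaches are essentially dual. Yours is more elementary in that it does not invoke the trading-transform characterisation of weightedness at all; it also yields concrete weights, which is a small bonus. The paper's approach, on the other hand, fits its overall methodology (certificates are the workhorse throughout Sections~5 and~8) and generalises slightly more readily to statements about completeness or swap-robustness where direct weight constructions are unavailable. Your observation about the edge case $\emptyset\in W_H$ is apt: the paper's proof has the same hidden assumption, since if $W=\emptyset$ and some losing $Y_i$ with $g\notin Y_i$ satisfies $Y_i\cup\{g\}\in W_G$, then $Y_i'=Y_i$ is actually winning in $C$.
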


\begin{proof}
Suppose first that $C$ is weighted but $H$ is not. Then we have a certificate of nonweightedness $(\row Uj;\row Vj)$ for the game $H$. Let also $X$ be any minimal winning coalition of $G$ containing $g$ (since $g$ is not a dummy, it exists). Let $X'=X\setminus \{g\}$. Then 
\[
(X'\cup U_1,\ldots, X'\cup U_j; X'\cup V_1,\ldots, X'\cup V_j)
\]
is a certificate of nonweightedness for $C$. Suppose now that $C$ is weighted but $G$ is not. Then let $(\row Xj;\row Yj)$ be a certificate of nonweightedness for $G$ and $W$ be a fixed minimal winning coalition $W$ for $H$. Define
\[
X_i'=
\begin{cases} 
X_i\setminus \{g\}\cup W & \text{if $g\in X_i$}\\
X_i & \text{if $g\notin X_i$}
\end{cases}
\]
and
\[
Y_i'=
\begin{cases} 
Y_i\setminus \{g\}\cup W & \text{if $g\in Y_i$}\\
Y_i & \text{if $g\notin Y_i$}
\end{cases}
\]
Then, since $|\{i\mid g\in X_i\}|=|\{i\mid g\in Y_i\}|$, the following
\[
(X_1',\ldots, X_j'; Y_1',\ldots, Y_j')
\]
 is a trading transform in $C$.
Moreover, it is a certificate of nonweightedness for $C$ since all $X_1',\ldots, X_j';$ are winning in $C$ and all $Y_1',\ldots, Y_j'$ are losing in $C$. So both assumptions are impossible.
\end{proof}

\begin{corollary}
Every weighted game is a composition of indecomposable weighted games.\footnote{As usual we assume that if a game $G$ is indecomposable, its decomposition into a composition of indecomposable games is $G=G$, i.e., trivial.}
\end{corollary}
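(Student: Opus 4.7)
The plan is to prove the corollary by strong induction on the number of players $n = |P_G|$ of a weighted game $G$. The statement is essentially packaged into the preceding Lemma~\ref{splitlemma}; what remains is a clean recursion argument, so I expect no real obstacle.

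For the base case, I would observe that any game on a single player is trivially indecomposable (the definition of decomposability requires $\min(|H|,|K|)>1$), so the statement holds with the convention stated in the footnote. For the inductive step, suppose the result holds for every weighted game on fewer than $n$ players, and let $G$ be a weighted game on $n$ players. If $G$ is indecomposable, then $G=G$ is the required trivial decomposition. Otherwise, by definition of decomposability there exist games $H$ and $K$ on disjoint player sets and an element $h\in P_H$ with $\min(|H|,|K|)>1$ such that
\[
G \;\cong\; H\circ_h K.
\]
Here we may assume $h$ is not a dummy in $H$: if it were, then by Proposition~\ref{prop1} the minimal winning coalitions of $H\circ_h K$ would coincide with those of $H$, and we could replace the decomposition with one in which no such dummy arises. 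Then Lemma~\ref{splitlemma} applies and tells us that both $H$ and $K$ are weighted. Since $|H|<n$ and $|K|<n$, the inductive hypothesis yields decompositions
\[
H \;\cong\; H_1\circ H_2 \circ \cdots \circ H_s, \qquad K \;\cong\; K_1\circ K_2 \circ \cdots \circ K_t,
\]
each factor being indecomposable and weighted (with the appropriate elements of composition understood).

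To assemble the final decomposition, I would invoke associativity of composition (the Proposition just after Proposition~\ref{prop1}). Since the composition operation is associative, the two decompositions above can be inserted into $G\cong H\circ_h K$ to produce an expression
\[
G \;\cong\; H_1\circ H_2 \circ \cdots \circ H_s \circ K_1\circ K_2 \circ \cdots \circ K_t,
\]
in which every factor is indecomposable and weighted. (Care is only needed to track the particular player over which each composition takes place, but associativity ensures these choices do not affect the isomorphism type.) This completes the induction and proves the corollary. The only step that demands any attention is verifying that the decomposition produced by the definition of decomposability can be arranged so that the distinguished element is not a dummy, which is handled by the observation above.
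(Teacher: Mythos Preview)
Your argument is the natural one and matches what the paper intends; the paper itself gives no proof, treating the corollary as immediate from Lemma~\ref{splitlemma}. Two small loose ends are worth noting. First, your handling of the dummy case is not quite right: there exist decomposable weighted games (take a single non-dummy player together with several dummies) for which \emph{every} nontrivial decomposition $H\circ_h K$ has $h$ a dummy in $H$, so one cannot always ``replace the decomposition with one in which no such dummy arises.'' The fix is simply to observe that when $h$ is a dummy, $H$ is still weighted (it is essentially a subgame of $G$ with one added dummy) and the structure of $K$ is irrelevant to $G$, so $K$ may be taken weighted; the induction then proceeds. Second, the final linear display $H_1\circ\cdots\circ H_s\circ K_1\circ\cdots\circ K_t$ tacitly assumes that $h$ lies in the last factor $H_s$, which need not hold; but this is harmless, since the corollary only asserts that $G$ is \emph{some} iterated composition of indecomposable weighted games, and $G\cong H\circ_h K$ with $H$ and $K$ already so expressed is enough.
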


The converse is however not true. As we will see in the next section, the composition $C=G\circ_g H$  of two weighted games $G$ and $H$ is seldom weighted. Thus we will pay attention to those cases where compositions are weighted. One of those which we will now consider is when $G$ is a $k$-out-of-$n$ game. In this case all players of $G$ are equivalent and we will often omit $g$ and write the composition as $C=G\circ H$.

\begin{theorem}
Let $H=H_{n,k}$ be a $k$-out-of-$n$ game and G is a weighted simple game. Then 
$C=H\circ G$ is also a weighted game.
\end{theorem}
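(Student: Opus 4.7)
The plan is to construct an explicit weighted representation of $C=H\circ G$ by inheriting the weights from $G$ and assigning all players of $H$ (other than the substituted one) a common large weight.

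First, I would fix a weighted representation $[q;w_1,\ldots,w_m]$ of $G$, and set $W_G=\sum_{i=1}^m w_i$ and $M_L=\max\{w(S)\mid S\text{ losing in }G\}$, noting that $M_L<q$. I would then choose a real number $W$ sufficiently large, e.g., $W>\max(q,W_G-q)$, and define the representation of $C$ by giving each player of $P_H\setminus\{g\}$ the weight $W$, each player $i\in P_G$ the original weight $w_i$, and taking the quota to be $Q=(k-1)W+q$.

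Next, I would verify correctness by decomposing an arbitrary coalition $X\subseteq P_C$ according to $m=|X\cap(P_H\setminus\{g\})|$ and $S=X\cap P_G$, so the total weight of $X$ is $mW+w(S)$. Using Proposition~\ref{prop1} (or the definition of composition), $X\in W_C$ if and only if $m\geq k$, or $m=k-1$ and $S\in W_G$. I would then check four cases: (i) if $m\geq k$, the weight is at least $kW\geq(k-1)W+q=Q$ since $W\geq q$; (ii) if $m=k-1$ and $S\in W_G$, the weight is at least $(k-1)W+q=Q$; (iii) if $m=k-1$ and $S\notin W_G$, the weight is at most $(k-1)W+M_L<Q$; (iv) if $m\leq k-2$, the weight is at most $(k-2)W+W_G<(k-1)W+q=Q$, where the last inequality uses $W>W_G-q$. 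So $X$ meets the quota exactly when $X\in W_C$.

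The only real design choice is picking $W$ large enough to kill case (iv), i.e., to guarantee that no coalition with fewer than $k-1$ players from $P_H\setminus\{g\}$ can cross the quota even when it grabs all of $P_G$. Once $W$ is chosen to dominate $W_G-q$ (and $q$), all four cases fall into line. The argument treats the boundary values $k=1$ (where case (iv) is vacuous) and $k=n$ (where case (i) may be vacuous) uniformly, so no separate verification is needed for edge cases.
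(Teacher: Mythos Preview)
Your proof is correct, but it takes a genuinely different route from the paper's own argument. The paper proceeds by contradiction via Elgot's characterization: assuming a certificate of nonweightedness $(\row Xm;\row Ym)$ for $C$ with the $X_i$ minimal winning, it observes that each $X_i$ can contain at most $k-1$ players from $P_H\setminus\{g\}$ (otherwise some $Y_j$ would already be winning), so $X_i=U_i\cup S_i$ with $|U_i|=k-1$ and $S_i\in W_G$; the trading-transform condition then forces $|Y_i\cap P_H|=k-1$ for all $i$ as well, and stripping off these $H$-parts yields a certificate of nonweightedness $(\row Sm;\row Tm)$ for $G$, contradicting the weightedness of $G$.

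Your argument is instead entirely constructive: you exhibit explicit weights (a large common weight $W$ on $P_H\setminus\{g\}$, the original weights on $P_G$) and a quota $(k-1)W+q$, and verify the four cases directly. This is more elementary in that it never invokes Elgot's theorem, and it produces an actual weighted representation rather than merely certifying existence. On the other hand, the paper's trading-transform argument is the natural one in context, since trading transforms are the main tool used throughout the paper (in particular for the converse direction showing nonweightedness of other compositions). Both arguments are short; yours is arguably cleaner as a stand-alone proof, while the paper's fits better into its overall methodology.
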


\begin{proof} 
Let $\row Xm$ be winning and $\row Ym$ be losing coalitions of $C$ such that 
\[
(\row Xm;\row Ym)
\]
is a trading transform. Without loss of generality we may assume that $\row Xm$ are minimal winning coalitions. Let $U_i=X_i\cap H$, then $U_i$ is either winning in $H$ or winning with $h$, hence $|U_i|=k$ or $|U_i|=k-1$. If for a single $i$ we had $|U_i|=k$, then all of the sets $\row Ym$ could not be losing since at least one of them would contain $k$ elements from $H$. Thus $|U_i|=k-1$ for all $i$. In this case we have $X_i=U_i\cup S_i$, where $S_i$ is winning in $G$. Let $Y_i=V_i\cup T_i$, where $V_i\subseteq H$ and $T_i\subseteq G$. Since all coalitions $\row Ym$ are losing in $C$, we get $|V_i|=k-1$ which implies that all $T_i$ are losing in $G$. But now we have obtained a trading transform $(\row Sm;\row Tm)$ in $G$ such that all $S_i$ are winning and all $T_i$ are losing. This contradicts to $G$ being weighted. 
\end{proof}

\section{Compositions of complete games}

We will start with the following observation. It says that if $g\in P_G$ is not the least desirable player of $G$, then the composition $G\circ_g H$ is almost never swap robust, hence is almost never complete.

\begin{lemma}
\label{not_complete}
Let $G,H$ be two games on disjoint sets of players and $H$ is neither a unanimity nor an anti-unanimity. If for two elements $g,g'\in P_G$ we have $g \succ g'$ and $g'$ is not a dummy, then $G\circ_g H$ is not complete.
\end{lemma}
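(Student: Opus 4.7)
The plan is to exhibit a certificate of incompleteness for $C := G \circ_g H$, that is, two winning coalitions that become losing after swapping a single pair of players. The swap will exchange $g'$ on the $G$-side with a carefully chosen player $h \in P_H$.

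For the setup, I use the characterization of strict desirability in terms of minimal winning coalitions to fix a MWC $X$ of $G$ with $g \in X$, $g' \notin X$, and $X' := (X \setminus \{g\}) \cup \{g'\}$ losing in $G$; write $X_0 := X \setminus \{g\}$. Since $g'$ is not a dummy, I fix a MWC $Z$ of $G$ with $g' \in Z$. Under the standing no-dummy convention on $H$, the hypotheses $H \ne U_n$ and $H \ne A_n$ together allow the choice of a single $h \in P_H$ that is simultaneously (i) a non-passer, so $\{h\}$ is losing in $H$, and (ii) a non-vetoer, so some MWC of $H$ avoids $h$; a short case split on whether $H$ contains a passer confirms this. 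I then fix a MWC $M \ni h$ (automatically of size $\ge 2$) and a MWC $M' \not\ni h$.

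The two winning coalitions of $C$ are
\[
A_1 := X_0 \cup M, \qquad A_2 := \begin{cases} Z & \text{if } g \notin Z, \\ (Z \setminus \{g\}) \cup M' & \text{if } g \in Z. \end{cases}
\]
Swapping $h \in A_1$ with $g' \in A_2$ yields $A_1' := X_0 \cup (M \setminus \{h\}) \cup \{g'\}$ and $A_2' := (A_2 \setminus \{g'\}) \cup \{h\}$, and the quadruple $(A_1, A_2; A_1', A_2')$ has exactly the shape required for a certificate of incompleteness. Checking that $A_1'$ loses in $C$ is routine: its $P_G$-part is the losing coalition $X'$, while the alternative test through $g$ demands $M \setminus \{h\}$ to be winning in $H$, which fails by minimality of $M$. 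Checking that $A_2'$ loses in $C$ falls into two subcases: if $g \notin Z$, the alternative test through $g$ collapses on the $H$-side to the singleton $\{h\}$, which loses because $h$ is a non-passer; if $g \in Z$, it collapses on the $G$-side to $Z \setminus \{g'\}$, losing by minimality of $Z$.

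The main obstacle, and the reason for the double hypothesis on $H$, is precisely this dichotomy on whether every MWC of $G$ containing $g'$ also contains $g$. In case A, since $g \succeq g'$, the replacement $(Z \setminus \{g'\}) \cup \{g\}$ still wins in $G$ and the $G$-side of the alternative test survives; so $A_2'$ must be defeated on the $H$-side, forcing $h$ to be a non-passer. In case B, the $G$-side collapses by minimality of $Z$, but then the $P_H$-projection of $A_2'$ must itself be winning, forcing $M' \not\ni h$ and hence $h$ to be a non-vetoer. The simultaneous availability of these two properties in one $h$ is exactly what ``$H$ neither $U_n$ nor $A_n$'' guarantees, which is the structural heart of the argument and explains why both exclusions appear in the statement.
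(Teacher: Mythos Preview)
Your proof is correct and follows essentially the same approach as the paper's: both construct a certificate of incompleteness by swapping $g'$ with an element of $P_H$, splitting into the two cases according to whether the minimal winning coalition $Z\ni g'$ of $G$ also contains $g$. The only cosmetic difference is that you fix $h$ up front as a simultaneous non-passer and non-vetoer (making the no-dummy convention on $H$ explicit), whereas the paper performs the case split first and tailors its choice of $H$-coalitions separately in each case.
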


\begin{proof}
As $g$ is more desirable than $g'$, there exists a coalition $X\subseteq P_G$, containing neither $g$ nor $g'$ such that $X\cup \{g\}\in W_G$ and $X\cup \{g'\}\notin W_G$. We may take $X$ to be minimal with this property, then $X\cup \{g\}$ is a minimal winning coalition of $G$. Since $g'$ is not dummy, there exist a  minimal winning coalition $Y$ containing $g'$. The coalition $Y$ may contain $g$ or may not. Firstly, assume that it does contain $g$. Since $H$ is not an oligarchy there exist two distinct winning coalitions of $H$, say $Z_1$ and $Z_2$. Then we can find $z\in Z_1\setminus Z_2$. Then the coalitions $U_1=X\cup Z_1$ and $U_2=(Y\setminus \{g\})\cup Z_2$ are winning in $G\circ_g H$ and coalitions $V_1=(X\cup \{g'\})\cup (Z_1\setminus \{z\})$ and  $V_2=Y\setminus \{g,g'\}\cup (Z_2\cup \{z\})$ are losing in this game since $Z_1\setminus \{z\}$ is losing in $H$ and $Y\setminus \{g'\} = Y\setminus \{g,g'\} \cup \{g\}$ is losing in $G$. Since $V_1$ and $V_2$ are obtained when $U_1$ and $U_2$ swap players $z$ and $g'$, the sequence of sets
$
(U_1,U_2;V_1,V_2)
$
is a certificate of incompleteness for $G\circ_g H$. 

Suppose now $Y$ does not contain $g$. Let $Z$ be any minimal winning coalition of $H$ that has more than one player (it exists since $H$ is not an anti-oligarchy). Let $z\in Z$. Then 
\[
(X\cup Z, Y; X\cup\{g'\}\cup (Z\setminus \{z\}), Y\setminus \{g'\}\cup \{z\})
\]
is a certificate of incompleteness for $G\circ_g H$. 
\end{proof}

This lemma shows that if a composition $G\circ_g H$ of two weighted games is weighted, then almost always $g$ is one of the least desirable players of $G$. The converse as we will see in Section~\ref{inde} is not true. If we compose two weighted games  over the weakest player of the first game, the result will be always complete but not always weighted.  

\begin{theorem}
\label{threecases}
Let $G$ and $H$ be two complete games, $g\in G$ be one of the least desirable players in $G$ but not a dummy. Then for the game $C=G\circ_gH$
\begin{enumerate}
\item[(i)] for $x,y\in P_G\setminus \{g\}$ it holds that $x\succeq_Gy$ if and only if $x\succeq_Cy$. Moreover, $x\succ_Gy$ if and only if $x\succ_Cy$;
\item[(ii)] for $x,y\in P_H$ it holds that $x\succeq_Hy$ if and only if $x\succeq_Cy$. Moreover, $x\succ_Hy$ if and only if $x\succ_Cy$;
\item[(iii)] for $x\in P_G\setminus \{g\}$ and $y\in P_H$, then $x\succeq_Cy$;  if $y$ is not a passer or vetoer in $H$, then $x\succ_Cy$.
\end{enumerate}
In particular, $C$ is complete.
\end{theorem}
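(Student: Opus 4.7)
My plan is to deduce (i), (ii), (iii) directly from the dichotomy that defines membership in $W_C$: a coalition $X\subseteq P_C$ belongs to $W_C$ iff either $X_G\in W_G$, or $X_G\cup\{g\}\in W_G$ together with $X_H\in W_H$. The hypothesis that $g$ is not a dummy will let me fix a minimal winning coalition $W\cup\{g\}\in W_G^{\text{min}}$ with $g\notin W$, which serves as a bridge for translating desirability tests in $H$ into tests in $C$. The hypothesis that $g$ is least desirable in $G$ gives $x\succeq_G g$ for every $x\in P_G\setminus\{g\}$, and will drive part (iii). Completeness of $C$ is going to fall out of (i)--(iii) at the end.

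For (i), the forward direction will be a short case split on the dichotomy: if $x,y\in P_G\setminus\{g\}$ satisfy $x\succeq_G y$ and $X\cup\{y\}\in W_C$, replacing $y$ by $x$ inside $X_G$ (possibly with $g$ adjoined, in the second branch) preserves winningness in $G$, so $X\cup\{x\}\in W_C$. For the reverse direction, given $X\subseteq P_G$ disjoint from $\{x,y\}$ with $X\cup\{y\}\in W_G$, I lift to $X\cup\{y\}\in W_C$ via the first branch (its $H$-part is empty), apply $x\succeq_C y$ to get $X\cup\{x\}\in W_C$, and note that since the $H$-part stays empty only the first branch can hold, forcing $X\cup\{x\}\in W_G$. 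Part (ii) is structurally analogous, and its reverse direction is where I expect the main difficulty: given $Y\subseteq P_H$ with $Y\cup\{y\}\in W_H$, I promote it to $W\cup Y\cup\{y\}\in W_C$, apply $x\succeq_C y$ to obtain $W\cup Y\cup\{x\}\in W_C$, and then observe that $W$ is losing in $G$ by minimality of $W\cup\{g\}$, so only the second branch is possible and $Y\cup\{x\}\in W_H$. The strict parts of both (i) and (ii) follow from the weak parts by contrapositive.

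For the weak half of (iii), if $X\cup\{y\}\in W_C$ with $x\in P_G\setminus\{g\}$ and $y\in P_H$, then in the first branch $X_G\in W_G$ monotonicity gives $X_G\cup\{x\}\in W_G$; in the second branch $X_G\cup\{g\}\in W_G$ together with $X_H\cup\{y\}\in W_H$, the inequality $x\succeq_G g$ upgrades this to $X_G\cup\{x\}\in W_G$. Either way $X\cup\{x\}\in W_C$, so $x\succeq_C y$. For the strict inequality, using that $y$ is not a passer (in particular $\{y\}\notin W_H$), I take $Z=W$: then $W\cup\{x\}\in W_G$ because $x\succeq_G g$ and $W\cup\{g\}\in W_G$, so $Z\cup\{x\}\in W_C$; whereas $W\cup\{y\}$ has losing $G$-part $W$ by minimality and losing $H$-part $\{y\}$, so neither branch applies and $Z\cup\{y\}\notin W_C$, which shows $y\not\succeq_C x$. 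Completeness of $C$ is then immediate: any two players of $P_C$ are comparable by one of (i), (ii), (iii), and the given total orders on $P_G$ and $P_H$ assemble into a total desirability order on $P_C$ with all of $P_G\setminus\{g\}$ sitting above all of $P_H$.
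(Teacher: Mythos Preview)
Your treatment of the weak comparisons in (i)--(iii) follows the same dichotomy argument as the paper and is essentially correct, with one small oversight: in the reverse direction of (i), your test coalition $X\subseteq P_G$ may contain $g$, in which case $X\cup\{y\}$ is not a subset of $P_C$ and cannot be ``lifted'' to $W_C$ as written. This is easily repaired---replace $g$ in $X$ by any winning coalition of $H$ and run your bridge trick---so it is not a serious problem.

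The genuine gap is in the strict half of (iii). Your step ``$W\cup\{x\}\in W_G$ because $x\succeq_G g$ and $W\cup\{g\}\in W_G$'' silently assumes $x\notin W$. Nothing guarantees this: $x$ may lie in every minimal winning coalition of $G$ that contains $g$ (take $G=U_3$), and then $W\cup\{x\}=W\notin W_G$ and your witness $Z=W$ fails. More importantly, you never invoke the hypothesis that $y$ is not a vetoer in $H$, and that hypothesis is not decorative: with $G=U_3$ and $H=U_2$ one gets $C\cong U_4$, all players of $C$ are equivalent, and $x\succ_C y$ is false even though $y$ is not a passer in $H$. So ``not a passer'' alone cannot suffice. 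The paper's fix is to start instead from a minimal winning coalition $X$ of $G$ containing $x$ and split on whether $g\in X$: if $g\notin X$, then $Z=X\setminus\{x\}$ works and only ``not a passer'' is needed; if $g\in X$, one takes a winning coalition $Y$ of $H$ with $y\notin Y$---this is exactly where ``not a vetoer'' enters---and uses $Z=(X\setminus\{g,x\})\cup Y$, since then $Z\cup\{x\}\in W_C$ via the second branch while $(Z\cup\{y\})_G\cup\{g\}=X\setminus\{x\}$ is losing by minimality of $X$.
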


\begin{proof}
(i) Suppose $x\succeq_Gy$ but not $x\succeq_Cy$. Then there exist $Z\subseteq C$ such that $Z\cup \{y\}\in W_C$ but  $Z\cup \{x\}\notin W_C$. We can take $Z$ minimal with this property. Consider $Z'=Z\cap P_G$. Then either $Z'\cup \{y\}$ is winning in $G$, or else $Z'\cup \{y\}$ is losing in $G$ but $Z'\cup \{y\}\cup \{g\}$ is winning in $G$. In the latter case $Z\cap P_H\in W_H$. In the first case, since $x \succeq_G y$, we have also $Z'\cup \{x\}\in W_G$, which contradicts $Z\cup \{x\}\notin W_C$. Similarly, in the second case we have $Z'\cup \{x\}\cup \{g\}\in W_G$ and since $Z\cap P_H\in W_H$, this contradicts $Z\cup \{x\}\notin W_C$ also. Hence $x\succeq_Cy$.\par

If $x\succ_Gy$, then there exists $S\subseteq P_G$ such that $S\cap \{x,y\}=\emptyset$ and $S\cup \{x\} \in W_G$ but $S\cup \{x\} \notin W_G$. We may assume $S$ is minimal with this property.  If $S$ does not contain $g$, then $S$ is also winning in $C$ and $x\succ_Cy$, so we are done. 
(ii) This case is similar to the previous one. If $S$ contains $g$, then consider any winning coalition $K$ in $H$. Then $(S\setminus \{g\})\cup \{x\}\cup K$ is winning in $C$ whille $(S\setminus \{g\})\cup \{y\}\cup K$ is  losing in $C$. Hence $x\succ_Cy$.

(iii) We have $x\succeq_Gg$ since $g$ is from the least desirable class in $G$. Let us consider a coalition $Z\subset C$ such that $Z\cap \{x,y\}=\emptyset$, and suppose there exists $Z\cup \{y\}\in W_C$ but $Z\cup \{x\}\notin W_C$. Then $Z$ must be losing in $C$, and hence $Z\cap P_G$ cannot be winning in $G$, but $Z\cap P_G\cup \{g\}$ must be winning in $G$. However, since $x\succeq_Gg$, the coalition $Z\cap P_G\cup \{x\}$ is also winning in $G$. But then $Z\cup \{x\}$ is winning in $C$, a contradiction. This shows that if $Z \cup \{y\}$ is winning in $C$, then $Z \cup \{x\}$ is also winning in $C$, meaning $x \succeq_C y$. Thus $C$ is a complete game.

Moreover, suppose that $y$ is not a passer or a vetoer in $H$, we will show that $x \succ_C y$. Since $g$ is not a dummy, then $x$ is not a dummy either. Let $X$ be a minimal winning coalition of $G$ containing $x$. If $g\notin X$, then $X$ is also winning in $C$. However, $X\setminus \{x\}\cup \{y\}$ is losing in $C$, since $y$ is not a passer in $H$. Thus it is not true that $y \succeq_C x$ in this case. If $g\in X$, then consider a winning coalition $Y$ in $H$ not containing $y$ (this is possible since $y$ is not a vetoer in $H$). Then $X\setminus \{g\}\cup Y\in W_C$ but 
\[
X\setminus \{x\}\cup \{g\}\cup \{y\}\cup Y\notin W_C,
\]
whence it is not true that $y \succeq_C x$ in this case as well. Thus $x \succ_C y$ in case $y$ is neither a passer nor a vetoer in $H$.
\end{proof}

\section{Indecomposable onepartite games and uniqueness of some decompositions}

\begin{theorem}
A game $H_{n,k}$ for $n\ne k \ne 1$ is indecomposable.
\end{theorem}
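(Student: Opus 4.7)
Suppose, for contradiction, that $H_{n,k}\cong G\circ_g H$ with $|P_G|,|P_H|\ge 2$; set $C=G\circ_g H$ and $b=|P_H|$. Since $H_{n,k}$ is weighted and, because $k\le n$, has no dummies, Lemma~\ref{splitlemma} gives that $G$ and $H$ are weighted and hence complete, while the proposition at the end of Section~3 (compositions are dummy-free iff their factors are) guarantees that both $G$ and $H$ are dummy-free. In particular $g$ is not a dummy of $G$.

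The first step of my plan is to show that $H$ must be either the unanimity game $U_b$ or the anti-unanimity game $A_b$. Since $G$ is complete and dummy-free, exactly one of the following holds: either $g\succ_G g'$ for some $g'\in P_G\setminus\{g\}$, or $g$ lies in the $\preceq_G$-least desirable equivalence class of $G$. In the first case Lemma~\ref{not_complete} together with the completeness of $C$ forces $H\in\{U_b,A_b\}$ directly. In the second case Theorem~\ref{threecases}(iii) applies; since all players of $C=H_{n,k}$ are equivalent under $\succeq_C$, no strict relation $x\succ_C y$ can hold for $x\in P_G\setminus\{g\}$ and $y\in P_H$, so part (iii) forces every $y\in P_H$ to be a passer or a vetoer of $H$. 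In any dummy-free game with $b\ge 2$ players, passers and vetoers cannot coexist (a singleton $\{h_1\}$ that is minimal winning would miss any vetoer $h_2\ne h_1$), so $H$ is either entirely passers ($H=A_b$) or entirely vetoers ($H=U_b$).

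The second step rules out each alternative using Proposition~\ref{prop1}, which describes $W_C^{\min}$ as the union of type-(a) coalitions lying in $P_G\setminus\{g\}$ and type-(b) coalitions of the form $X\cup Y$ with $X\cup\{g\}\in W_G^{\min}$ and $Y\in W_H^{\min}$. Suppose $H=A_b$: then $W_H^{\min}$ is the set of singletons of $P_H$. Picking distinct $h_1,h_2\in P_H$ (possible since $b\ge 2$) and completing with $k-2$ further players of $P_C$ (possible since $n-2\ge k-2$), we obtain a size-$k$ minimal winning coalition $T$ of $C=H_{n,k}$ with $|T\cap P_H|\ge 2$. Such $T$ cannot be of type (a) (it meets $P_H$) nor of type (b) (as $T\cap P_H\in W_H^{\min}$ would force $|T\cap P_H|=1$), a contradiction. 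Now suppose $H=U_b$: then $W_H^{\min}=\{P_H\}$. Setting $s=\min(k,b-1)$, the inequalities $b\ge 2$ and $k<n$ allow us to construct a size-$k$ subset $T$ of $P_C$ with $|T\cap P_H|=s$, so that $T\cap P_H$ is a nonempty proper subset of $P_H$; then $T$ is again neither of type (a) nor of type (b), a contradiction.

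The main obstacle I anticipate is the first step, where combining Lemma~\ref{not_complete} and Theorem~\ref{threecases}(iii) to pin $H$ down to $U_b$ or $A_b$ requires some care with the dummy and desirability hypotheses. Once this is done, the second step is a direct counting argument about the possible sizes of $T\cap P_H$ inside a size-$k$ subset of $P_C$.
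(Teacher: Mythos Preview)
Your proof is correct, but it takes a genuinely different route from the paper's own argument. The paper gives a short self-contained counting proof: assuming $H_{n,k}=K\circ_g L$ with $|P_K|,|P_L|\ge 2$, it first shows $|P_L|<k$ (otherwise a $k$-subset of $P_L$ wins while swapping one of its players for a player of $P_K\setminus\{g\}$ loses), and then exhibits a $(k-|P_L|)$-subset $Z$ of $P_K\setminus\{g\}$ for which $Z\cup P_L$ has size $k$ yet is losing, since $Z\cup\{g\}$ has fewer than $k$ players. No use is made of Lemma~\ref{splitlemma}, Lemma~\ref{not_complete}, Theorem~\ref{threecases}, or Proposition~\ref{prop1}.

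Your argument, by contrast, leverages the structural theory of compositions built up elsewhere in the paper: you first force $H\in\{U_b,A_b\}$ via the desirability/completeness machinery (Lemma~\ref{not_complete} when $g$ is not least desirable, Theorem~\ref{threecases}(iii) when it is), and then eliminate each option by inspecting how Proposition~\ref{prop1} constrains $|T\cap P_H|$ for a size-$k$ minimal winning $T$. This works and is a nice illustration that indecomposability of $H_{n,k}$ is already a consequence of the general composition theory; the price is dependence on several earlier results, whereas the paper's proof is elementary and could stand on its own. One cosmetic point: you invoke Lemma~\ref{splitlemma} before stating that $g$ is not a dummy, though you do supply that fact in the same sentence; reordering so that the dummy-free proposition comes first would make the logic read more cleanly.
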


\begin{proof}
Suppose $H_{n,k}$ is decomposable into $H_{n,k} = K \circ_g L$, where $K=(P_K,W_K), L=(P_L,W_L)$ with $n_1=|P_K| \ge 2$ and $n_2=|P_L|\ge 2$. If $g$ is a passer in $K$, then it is the only passer, otherwise if there is another passer $g'$ in $K$, then $\{g'\}$ is winning in the composition, contradicting $k \ne 1$. 
\par
We will firstly show that $n_2 < k$. Suppose that $n_2 \geq k$, and choose a player $h \in P_K$ different from $g$. Consider a coalition $X$ containing $k$ players from $P_L$, then $X$ is winning in the composition and $g$ is a passer, and it is also true that $X$ is a minimal winning coalition in $L$. Now replace a player $x$ in $X$ from $P_L$ with $h$. The resulting coalition, although it has $k$ players, is losing in the composition, because $x$ is not a passer in $K$, and $k-1$ players from $P_L$ are losing in $L$. Therefore $k > n_2$.
\par
We also have $|P_K \setminus \{g\}|=n-n_2 > k -n_2>0$. Let us choose any coalition $Z$ in $P_K \setminus \{g\}$ with $k - n_2$ players. Note that it does not win with $g$ as $|Z \cup \{g\}|= k - n_2 + 1 < k$ players. This is why $Z \cup P_L$ is also losing despite having $k$ players in total, contradiction.
\end{proof}

If the first component of the composition is a $k$-out-of-$n$ game, there is a uniqueness of decomposition.

\begin{theorem}
\label{uniH}
Let $H_{n_1,k_1}$ and $H_{n_2,k_2}$ be two $k$-out-of-$n$ games which are not unanimity games. Then, if $G=H_{n_1,k_1}\circ G_1= H_{n_2,k_2}\circ G_2$, with $G_1$ and $G_2$ having no passers,  then $n_1=n_2$, $k_1=k_2$ and $G_1= G_2$. If $G=U_{n_1}\circ  G_1=U_{n_2}\circ G_2$ and $G_1$ and $G_2$ does not have vetoers, then $n_1=n_2$ and $G_1=G_2$.
\end{theorem}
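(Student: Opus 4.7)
The plan is to recover the first factor $H_{n_i,k_i}$ canonically from $G$, using Proposition~\ref{prop1} together with the hypothesis on $G_i$. Let $g_1\in P_{H_{n_1,k_1}}$ and $g_2\in P_{H_{n_2,k_2}}$ be the (irrelevant) composition points in the two decompositions.

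For the non-unanimity case, I would classify the minimal winning coalitions of $G=H_{n_1,k_1}\circ G_1$ via Proposition~\ref{prop1} into two types: (A) the $k_1$-subsets of $P_{H_{n_1,k_1}}\setminus\{g_1\}$, and (B) the sets $X\cup Y$ with $X$ a $(k_1-1)$-subset of $P_{H_{n_1,k_1}}\setminus\{g_1\}$ and $Y\in W_{G_1}^{\min}$. Since $G_1$ has no passers, every $Y\in W_{G_1}^{\min}$ has size at least $2$, so every Type-B coalition has size at least $k_1+1$, strictly larger than the common size $k_1$ of Type-A coalitions. Consequently the minimum-size minimal winning coalitions of $G$ are precisely those of Type~A, they all have size $k_1$, and their union equals $P_{H_{n_1,k_1}}\setminus\{g_1\}$ (trivial when $k_1=n_1-1$, and by varying the omitted element when $k_1<n_1-1$; here we use $k_1<n_1$).

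These two quantities---the minimum size of a minimal winning coalition, and the union of all minimum-size minimal winning coalitions---are intrinsic to $G$, so the second decomposition must yield the same values: $k_1=k_2$, $n_1=n_2$, and the same distinguished subset of $P_G$ playing the role of $P_{H_{n_i,k_i}}\setminus\{g_i\}$. Hence also $P_{G_1}=P_{G_2}$. Fixing any $(k_1-1)$-subset $T$ of that distinguished subset, a direct check using the definition of composition gives, for every $Y\subseteq P_{G_1}$,
\[
Y\in W_{G_1} \iff T\cup Y \in W_G,
\]
because $T$ is losing in $H_{n_1,k_1}$ while $T\cup\{g_1\}$ is winning. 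The identical equivalence holds for $G_2$ with the same $T$ (since $k_1=k_2$), so $W_{G_1}=W_{G_2}$ and hence $G_1=G_2$.

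For the unanimity case $G=U_{n_1}\circ G_1$, Proposition~\ref{prop1} forces every minimal winning coalition of $G$ to have the form $(P_{U_{n_1}}\setminus\{g_1\})\cup Y$ with $Y\in W_{G_1}^{\min}$. Hence every player of $P_{U_{n_1}}\setminus\{g_1\}$ is a vetoer in $G$; conversely, since $G_1$ has no vetoers, no $G_1$-player lies in every minimal winning coalition of $G$. Therefore the set $V$ of vetoers of $G$ equals $P_{U_{n_1}}\setminus\{g_1\}=P_{U_{n_2}}\setminus\{g_2\}$, giving $n_1=n_2$ and $P_{G_1}=P_{G_2}$. Finally, $Y\in W_{G_1}\iff V\cup Y\in W_G$ by the composition definition, so $G_1=G_2$.

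The main obstacle is a conceptual rather than technical one: ensuring that the first factor can be canonically identified inside $G$. The ``no passers'' hypothesis in the first case is exactly what rules out $G_i$ contributing extra minimal winning coalitions of size as small as $k_i$, and the ``no vetoers'' hypothesis in the second case is exactly what rules out $G_i$ contributing spurious vetoers; once this separation is secured, uniqueness of $n_i$, $k_i$, and $G_i$ follows at once from the composition formula.
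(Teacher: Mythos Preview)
Your proof is correct and follows essentially the same approach as the paper: recover $k_i$ as the minimum size of a minimal winning coalition of $G$, recover $P_{H_{n_i,k_i}}\setminus\{g_i\}$ as the union of the minimum-size minimal winning coalitions (using the no-passers hypothesis), and in the unanimity case identify $P_{U_{n_i}}\setminus\{g_i\}$ with the set of vetoers of $G$ (using the no-vetoers hypothesis). Your write-up is in fact more complete than the paper's, which only sketches the recovery of $k$ and $n$ and does not spell out the step $G_1=G_2$ that you derive via the equivalence $Y\in W_{G_1}\iff T\cup Y\in W_G$.
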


\begin{proof}
Suppose that we know that $G=H\circ G_1$, where $H$ is a $k$-out-of-$n$ game but not a unanimity game. Then all winning coalitions in $G$ of smallest cardinality have $k$ players, so $k$ in this case can be recovered unambiguously. 

If $G_1$ does not have passers, then $n$ can be also recovered since the set of all players that participate in winning coalitions of size $k$ will have cardinality $n-1$. So there cannot exist two decompositions $G=H_{n_1,k_1}\circ G_1$ and $G=H_{n_2,k_2}\circ G_2$ of $G$, where $k_1\ne k_2$ with $k_1\ne n_1$ and $k_2\ne n_2$. 

Let us consider now the game $G=U\circ G_1$, where $U$ is a unanimity game. Due to Example~\ref{vetoers} if $G_1$ does not have vetoers, then $U$ consists of all vetoers of $G$ and uniquely recoverable.
\end{proof}

\section{Indecomposable Ideal Weighted Simple Games}
\label{inde}

The following theorem was proved in~\cite[p.234]{padro:2010} and will be of a major importance in this chapter.

\begin{theorem}[Farr\`{a}s-Padr\'{o}, 2010]
\label{FP2010}
Any indecomposable ideal weighted simple game belongs to one of the seven following types:
\begin{description}
\item[{\bf H}:] Simple majority or $k$-out-of-$n$ games.

\item[\textbf{B$_1$}:]  Hierarchical conjunctive games $H_\forall(n,k)$ with $\textbf{n}=(n_1,n_2)$, $\textbf{k} = (k_1,k_2)$, where $k_1 < n_1$ and $k_2 - k_1 = n_2 - 1 > 0$. Such games have the only shift-minimal winning coalition $\{1^{k_1},2^{k_2-k_1}\}$.

\item[\textbf{B$_2$}:] Hierarchical disjunctive games $H_\exists(n,k)$ with $\textbf{n}=(n_1,n_2), \textbf{k}=(k_1,k_2)$, where $1 < k_1 \leq n_1$,  $k_2 \leq n_2$, and $k_2=k_1+1$. 
The shift-minimal winning coalitions have the forms $\{1^{k_1}\}$ and $\{2^{k_2}\}$.
\label{p_list_1}
\item[\textbf{B$_3$}:] Hierarchical disjunctive games $H_\exists(n,k)$ with $ \textbf{n}=(n_1,n_2), \textbf{k}=(k_1,k_2)$, where $k_1 \leq n_1$,  $k_2 > n_2 > 2$ and $k_2=k_1+1$.  
The shift-minimal winning coalitions have the forms $\{1^{k_1}\}$ and $\{1^{k_2-n_2},2^{n_2}\}$.


\item[\textbf{T$_{1}$}:]  Tripartite games $\Delta_1({\bf n},{\bf k})$ with $k_1> 1$, $k_2 < n_2$, $k_3=k_1+1 $ and $n_3= k_3-k_2+1 > 2$.  It has two types of shift-minimal winning coalitions: $\{1^{k_1}\}$ and $\{2^{k_2},3^{k_3-k_2}\}$. It follows from \eqref{delta_cond_1} that $k_1\le n_1$ and $k_3-k_2\le n_3$.


\item[\textbf{T$_{2}$}:]  Tripartite games $\Delta_1({\bf n},{\bf k})$ with $n_3= k_3-k_2+1 > 2$ and $k_3=k_1+1$. It has two types of shift-minimal winning coalitions: $\{1^{k_1}\}$ and $\{1^{k_2-n_2},2^{n_2},3^{k_3-k_2}\}$. It follows from \eqref{delta_cond_1} that $k_1\le n_1$,  $k_2-n_2\le k_1$, and $k_3-k_2\le n_3$.


\item[\textbf{T$_{3}$}:] Tripartite games $\Delta_2({\bf n},{\bf k})$ with $k_3-k_1 = n_2+n_3-1$ and $k_3=k_2+ 1$ and $k_2-n_2 > k_1$, $n_3 > 1$. It has two types of shift-minimal winning coalitions  $\{1^{k_2-{n_2}},2^{n_2}\}$ and $\{1^{k_1},2^{k_3-k_1-n_3},3^{n_3}\}$ (the case when $k_3-k_1=n_3$ and $n_2=1$ is not excluded). It follows from \eqref{delta_cond_2} that $k_1\le n_1$, $k_2-n_2\le n_1$, and $k_3-k_1-n_3< n_2$.
\end{description}
\end{theorem}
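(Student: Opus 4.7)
The plan is to start from the refined Beimel--Tassa--Weinreb classification, which exhibits every ideal weighted simple game as a composition of building blocks drawn from five ``atomic'' families: the $k$-out-of-$n$ games $H_{n,k}$, the hierarchical conjunctive games $H_\forall({\bf n},{\bf k})$, the hierarchical disjunctive games $H_\exists({\bf n},{\bf k})$, and the tripartite families $\Delta_1({\bf n},{\bf k})$ and $\Delta_2({\bf n},{\bf k})$ of Example~\ref{ex}. By Proposition~\ref{splitprop} and Lemma~\ref{splitlemma}, both the ideal and the weighted properties are inherited by every component of any such decomposition, so an indecomposable ideal weighted simple game must itself lie in one of these five families. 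The task therefore reduces to determining, within each family, the exact parameter regime in which indecomposability actually holds.

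For the onepartite family, Theorem~\ref{uniH}'s companion result (the preceding theorem on $H_{n,k}$) already shows that $H_{n,k}$ is indecomposable whenever $n\ne k$ and $k\ne 1$, which is type \textbf{H}. For each multipartite family I would then proceed in two steps. First, I would rule out the ``obvious'' decompositions coming from Examples~\ref{vetoers} and~\ref{passers}: whenever the set of vetoers (resp.\ passers) is a nonempty proper subset of the player set, the game splits as $U_{m+1}\circ G_A$ (resp.\ $A_{m+1}\circ G_A$). Applied to $H_\forall$ this forces $k_1<n_1$; applied to $H_\exists$ it forces $k_1>1$; analogous inequalities appear in $\Delta_1$ and $\Delta_2$.

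Second, I would rule out the subtler decompositions of the form $G\cong H_{n,k}\circ_g G'$ obtained by peeling off a cluster of equivalent players from one of the extreme classes, noting (via Lemma~\ref{not_complete}) that such a peel-off must happen at the least desirable class. Each peel-off demands that the boundary between two classes of equivalent players behave like a threshold block, and this translates into the rigid equalities appearing in the statement: $k_2-k_1=n_2-1$ in the conjunctive case (type \textbf{B}$_1$), $k_2=k_1+1$ in the disjunctive case (types \textbf{B}$_2$ and \textbf{B}$_3$, distinguished by whether $n_2\ge k_2$), and the analogous equations $k_3=k_1+1$, $n_3=k_3-k_2+1$ in the tripartite case yielding \textbf{T}$_1$ and \textbf{T}$_2$, together with $k_3-k_1=n_2+n_3-1$ and $k_3=k_2+1$ for \textbf{T}$_3$. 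In each case the proof amounts to exhibiting an explicit decomposition whenever the rigid equality fails, and checking via Proposition~\ref{prop1} that no decomposition is possible when it holds.

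The hard part will be the tripartite analysis. Unlike the hierarchical setting, $\Delta_1$ and $\Delta_2$ carry two structurally distinct shift-minimal winning coalitions, and one must rule out decomposability against every possible composition boundary: peeling from class~$1$, from class~$3$, or grouping classes $1,2$ or $2,3$ together. For each candidate boundary one must produce either an explicit decomposition, or a pair of shift-minimal winning coalitions whose simultaneous existence is incompatible with $G$ being of the form $H_{n,k}\circ_g G'$ for that boundary. The combinatorial bookkeeping of these shift-minimal coalitions across the various parameter regimes is the main burden; once it is in place, verifying that each listed configuration in \textbf{B}$_1$--\textbf{B}$_3$ and \textbf{T}$_1$--\textbf{T}$_3$ is exactly the threshold at which no further decomposition is possible becomes a direct check on the formulas in~\eqref{delta_cond_1} and~\eqref{delta_cond_2}.
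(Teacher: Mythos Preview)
The paper does not prove Theorem~\ref{FP2010}; it is quoted as a result of Farr\`{a}s and Padr\'{o} and cited to \cite[p.234]{padro:2010}. There is therefore no proof in the paper to compare your proposal against.

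Your proposal is also aiming at the wrong target. Theorem~\ref{FP2010} is a one-directional statement: it asserts only that every indecomposable ideal weighted game lies in one of the seven listed families. It does \emph{not} claim that every game in those families is indecomposable. Your plan, by contrast, is to pin down ``the exact parameter regime in which indecomposability actually holds'' and to verify that ``each listed configuration \ldots\ is exactly the threshold at which no further decomposition is possible.'' That is a strictly stronger (if-and-only-if) claim---and it is false as stated: the paper proves in Proposition~\ref{Prop_T2} that \emph{every} game of type~\textbf{T}$_2$ is decomposable, and Propositions~\ref{Prop_B1} and~\ref{Prop_UA} exhibit further decomposable games within the listed types. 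The correct if-and-only-if refinement is Theorem~\ref{list_all}, which drops~\textbf{T}$_2$ and narrows~\textbf{H} and~\textbf{B}$_1$; even there the paper does not carry out the indecomposability verifications, writing ``We leave this routine work to the reader.''

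So if your intent was to supply a self-contained proof of Theorem~\ref{FP2010}, you would need to reproduce the Farr\`{a}s--Padr\'{o} matroid-theoretic argument (or an equivalent), not the decomposition analysis you outline; and if your intent was to characterize indecomposability, you should be proving Theorem~\ref{list_all} instead, with the corrected list of types.
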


\noindent Farras and Padro \citeyear{FarrasP12} wrote these families more compactly but equivalently. However, we found it more convenient to use their earlier classification. The list above contains some decomposable games as we will now show.

\begin{proposition}
\label{Prop_B1}
The game of type ${{\bf B}_1}$ for $k_2 - k_1 = n_2 - 1 =1$ is decomposable.
\end{proposition}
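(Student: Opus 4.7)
The plan is to exhibit an explicit nontrivial decomposition $H_\forall(\mathbf{n},\mathbf{k}) \cong H_{n_1+1,\,k_1+1} \circ A_2$, where $A_2$ is the anti-unanimity game on two players, and then verify it by matching minimal winning coalitions on both sides via Proposition~\ref{prop1}.

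First I would unpack the hypothesis: $k_2 - k_1 = n_2 - 1 = 1$ forces $n_2 = 2$ and $k_2 = k_1 + 1$, while $k_1 < n_1$ gives $n_1 \geq k_1 + 1 \geq 2$. Using the defining condition of $H_\forall$, a coalition $\{1^{\ell_1},2^{\ell_2}\}$ is winning iff $\ell_1 \geq k_1$ and $\ell_1 + \ell_2 \geq k_1 + 1$. Enumerating $\ell_2 \in \{0,1,2\}$, the minimal winning coalitions are precisely $\{1^{k_1+1}\}$ and $\{1^{k_1},2^1\}$.

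Next I would form the candidate decomposition. Let $G = H_{n_1+1,\,k_1+1}$ be the $(k_1+1)$-out-of-$(n_1+1)$ game on players $\{p_1,\ldots,p_{n_1},g\}$, and let $H = A_2$ be the anti-unanimity game on two players $\{a_1,a_2\}$, whose minimal winning coalitions are the singletons $\{a_1\}$ and $\{a_2\}$. By Proposition~\ref{prop1}, the minimal winning coalitions of $C = G \circ_g H$ are of two kinds: (a) minimal winning coalitions of $G$ not containing $g$, which are the $(k_1+1)$-subsets of $\{p_1,\ldots,p_{n_1}\}$; and (b) unions $X \cup Y$ where $X \cup \{g\}$ is a minimal winning coalition of $G$ and $Y$ is a minimal winning coalition of $H$, i.e., $X$ is a $k_1$-subset of $\{p_1,\ldots,p_{n_1}\}$ and $Y$ is a single $a_j$. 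Identifying the $p_i$'s with class-1 players and the $a_j$'s with class-2 players, these are exactly $\{1^{k_1+1}\}$ and $\{1^{k_1},2^1\}$, matching the minimal winning coalitions of $H_\forall(\mathbf{n},\mathbf{k})$ computed above. Hence the two games are isomorphic.

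Finally I would observe that both factors are nontrivial: $|P_G| = n_1 + 1 \geq 3$ and $|P_H| = 2$, so by Definition~\ref{decompo} the game $H_\forall(\mathbf{n},\mathbf{k})$ is decomposable. No real obstacle arises; the only care needed is the (short) case analysis showing that no other minimal winning coalitions exist on either side.
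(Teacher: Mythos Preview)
Your proof is correct and takes essentially the same approach as the paper: both exhibit the decomposition $H_\forall(\mathbf{n},\mathbf{k}) \cong H_{n_1+1,\,k_1+1} \circ A_2$ and verify it by matching (shift-)minimal winning coalitions. Your version is slightly more explicit in computing all minimal winning coalitions and in checking the nontriviality of the factors, but the argument is the same.
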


\begin{proof}
The decomposition is as follows: Assume $k_2 - k_1 = n_2 - 1 = 1$, so $n_2=2 \ \text{and} \ k_2=k_1+1$, then we have $\textbf{k} = (k_1,k_1+1), \textbf{n}=(n_1,2)$, and the only shift-minimal winning coalition here is $\{1^{k_1},2\}$. Let the first game $G=(P_G,W_G)$, be one-partite with $P_G=\{1^{n_1+1}\}$, $W_G = \{1^{k_1+1}\}$, and let the second game be $H=(P_H,W_H), P_H=\{2^{2}\}, W_H = \{2\}$. Then the composition $G \circ_1 H$ over a player $1 \in P_G$ gives two minimal winning coalitions $\{1^{k_1+1}\}$ and $\{1^{k_1},2\}$, of which only $\{1^{k_1},2\}$ is shift-minimal. Hence the composition is of type ${{\bf B}_1}$. This proves that a game of type \textbf{B$_1$} is decomposable in this case.   
\end{proof}

\begin{proposition}
\label{Prop_UA}
The unanimity games $U_n$ and anti-unanimity $A_n$ for $n>2$ are decomposable. $U_2$ and $A_2$ are indecomposable.
\end{proposition}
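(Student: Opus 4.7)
The plan is to exhibit explicit nontrivial decompositions for the $n>2$ cases and to rule out decompositions for $n=2$ by a player-count argument.

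First, for $U_n$ with $n>2$, I would check that $U_n \cong U_2 \circ_g U_{n-1}$, where $g$ is either player of $U_2$. Writing $U_2=(\{g,g'\},\{\{g,g'\}\})$, the only minimal winning coalition of $U_2$ containing $g$ is $\{g,g'\}$, so the ``removed'' piece is $\{g'\}$. Applying Proposition~\ref{prop1}, the only minimal winning coalitions of $U_2 \circ_g U_{n-1}$ are of the form $\{g'\} \cup Y$ with $Y\in W_{U_{n-1}}^{\text{min}}=\{P_{U_{n-1}}\}$ (there are no minimal winning coalitions of $U_2$ avoiding $g$). Hence the composition has a unique minimal winning coalition consisting of all $1+(n-1)=n$ players, i.e.\ it is $U_n$. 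Since $\min(|U_2|,|U_{n-1}|)=\min(2,n-1)>1$ when $n>2$, this is a valid decomposition.

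Second, for $A_n$ with $n>2$, I would verify similarly that $A_n \cong A_2\circ_g A_{n-1}$. In $A_2=(\{g,g'\},\{\{g\},\{g'\}\})$, the minimal winning coalitions are singletons; the one avoiding $g$ is $\{g'\}$, and the one containing $g$ is $\{g\}$ with the $g$ removed leaving $\emptyset$. By Proposition~\ref{prop1}, the minimal winning coalitions of $A_2\circ_g A_{n-1}$ are $\{g'\}$ together with $\emptyset\cup Y=\{y\}$ for each $y\in P_{A_{n-1}}$. These are exactly the $n$ singletons over the composite player set, so the composition is $A_n$. Again $\min(|A_2|,|A_{n-1}|)>1$ when $n>2$.

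Finally, for the $n=2$ case, I would use a simple size count. If $U_2\cong H\circ_h K$ or $A_2\cong H\circ_h K$ with $\min(|H|,|K|)>1$, then
\[
2=|P_{H\circ_h K}|=(|P_H|-1)+|P_K|\ge (2-1)+2=3,
\]
a contradiction. Hence $U_2$ and $A_2$ are indecomposable.

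The only step that requires any care is the verification that the compositions produced for $n>2$ really recover $U_n$ and $A_n$, and this is an immediate application of the minimal-winning-coalition formula of Proposition~\ref{prop1}; no substantive obstacle arises.
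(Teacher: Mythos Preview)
Your proposal is correct and follows essentially the same approach as the paper: the paper simply notes the general identities $U_n\circ U_m\cong U_{n+m-1}$ and $A_n\circ A_m\cong A_{n+m-1}$, of which your decompositions $U_2\circ U_{n-1}$ and $A_2\circ A_{n-1}$ are the special cases needed. Your explicit player-count argument for the indecomposability of $U_2$ and $A_2$ actually fills in a step the paper leaves implicit.
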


\begin{proof}
We note that 
\[
U_n\circ U_m\cong U_{n+m-1}
\]
for any $u\in U_n$. In particular, the only indecomposable unanimity game is $U_2$. Similarly,  
\[
A_n\circ A_m\cong A_{n+m-1}
\]
for any $a\in A_n$ with the only indecomposable anti-unanimity game is $A_2$.
\end{proof}

\begin{proposition}
\label{Prop_T2}
All games of type ${\bf T}_2$ are decomposable. 
\end{proposition}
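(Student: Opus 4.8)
The plan is to imitate the proof of Proposition~\ref{Prop_B1}: display each type-$\mathbf{T}_2$ game explicitly as a composition $G\circ_g H$ with both factors having more than one player, and then confirm the decomposition by comparing minimal (equivalently, shift-minimal) winning coalitions via Proposition~\ref{prop1}. The shape of the two shift-minimal winning coalitions, $\{1^{k_1}\}$ and $\{1^{k_2-n_2},2^{n_2},3^{k_3-k_2}\}$, together with the defining relations $k_3=k_1+1$ and $n_3=k_3-k_2+1$, points to the factors: I would take $H$ to be an anti-unanimity game $A_{n}$ placed on the players of class~$3$ (the weakest class, all of whose players are passers in $H$), and $G$ to be a hierarchical disjunctive game of type $\mathbf{B}_2$ carried by classes~$1$ and~$2$ with one additional weakest placeholder $g$ standing in for the class-$3$ block. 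Composing over this weakest $g$ guarantees, by Theorem~\ref{threecases}, that the result is complete and that classes~$1$, $2$, $3$ retain their relative desirability, so the partition into desirability classes comes out correctly for free.

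The verification then splits, by Proposition~\ref{prop1}, into the two advertised families. First, the minimal winning coalitions of $G$ that avoid $g$ must reproduce $\{1^{k_1}\}$, which fixes the class-$1$ threshold of the $\mathbf{B}_2$ factor. Second, every minimal winning coalition of $G$ that uses $g$, with $g$ replaced by a minimal winning coalition of $H$, must reproduce $\{1^{k_2-n_2},2^{n_2},3^{k_3-k_2}\}$; since the minimal winning coalitions of $A_{n}$ are exactly its singletons, this pins down the class-$2$ threshold of $\mathbf{B}_2$ and determines how the class-$3$ players enter. I would carry out this matching, read off the parameters of $G$ and $H$ directly from $(k_1,k_2,k_3,\mathbf{n})$, and then check using the monotonicity built into Definition~\ref{decompo} that no unintended minimal winning coalitions are created. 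Once the two lists coincide the games are isomorphic, so $\mathbf{T}_2$ is decomposable; and since $A_{n}$ for $n>2$ itself decomposes by Proposition~\ref{Prop_UA}, the size requirement $\min(|G|,|H|)>1$ is comfortably satisfied.

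The step I expect to be the real obstacle is showing that the class-$3$ players genuinely behave as a single anti-unanimity block, i.e.\ that in the reduced game on class~$3$ their only role is the all-or-nothing, single-passer contribution recorded by $A_{n}$, rather than a graded threshold on how many of them are present. This is precisely where the equalities $k_3=k_1+1$ and $n_3=k_3-k_2+1$ must be invoked: they force the class-$3$ contribution to be exactly the one extra player needed to push a class-$1$/class-$2$ coalition over the total threshold $k_3$, so that the trace of the game on class~$3$ is the same for every borderline context and equals $A_{n}$. Establishing this context-independence cleanly---and checking simultaneously that the induced class-$1$/class-$2$ structure is exactly a $\mathbf{B}_2$ game and not some larger hierarchical game---is the crux; everything else is the routine bookkeeping of comparing two lists of shift-minimal coalitions.
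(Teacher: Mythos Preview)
Your decomposition does not work. The second shift-minimal winning coalition of a $\mathbf{T}_2$ game is $\{1^{k_2-n_2},2^{n_2},3^{k_3-k_2}\}$, and the defining inequality $n_3=k_3-k_2+1>2$ forces $k_3-k_2\ge 2$: at least two class-$3$ players are required, never just one. If $H=A_n$, whose minimal winning coalitions are singletons, then by Proposition~\ref{prop1} every minimal winning coalition of $G\circ_g H$ that meets class~$3$ contains exactly one class-$3$ player, so the coalition $\{1^{k_2-n_2},2^{n_2},3^{k_3-k_2}\}$ can never be produced. Your heuristic that $k_3=k_1+1$ makes the class-$3$ contribution ``exactly one extra player'' is the source of the error: the number of class-$3$ players in the shift-minimal coalition is $k_3-k_2$, not $k_3-k_1$, and these differ.

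The paper instead takes $H$ to be the $(k_3-k_2)$-out-of-$n_3$ game on the class-$3$ players, and $G$ to be the bipartite game on $\{1^{n_1},2^{n_2+1}\}$ with shift-minimal winning coalitions $\{1^{k_1}\}$ and $\{1^{k_2-n_2},2^{n_2+1}\}$, composing over a level-$2$ player $p$. Replacing $p$ by a minimal winning coalition $\{3^{k_3-k_2}\}$ of $H$ yields exactly $\{1^{k_2-n_2},2^{n_2},3^{k_3-k_2}\}$, while the coalition $\{1^{k_1}\}$ of $G$ avoids $p$ and passes through unchanged. Note that this $G$ is not of type $\mathbf{B}_2$ either: its second shift-minimal coalition contains level-$1$ players (since $k_2-n_2\ge 1$ in $\mathbf{T}_2$), so your identification of the first factor was also off.
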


\begin{proof}
Let $\Delta=\Delta_1({\bf n},{\bf k})$ be of type ${\bf T}_2$. Then we have the following decomposition for it. The first game will be $G=(P_G,W_G)$, which is bipartite with the multiset representation on $ \{1^{n_1},2^{n_2+1}\}$  and shift-minimal winning coalitions of types $\{1^{k_1}\}$ and $ \{1^{k_2-n_2},2^{n_2+1}\}$. The second game will be $(k_3-k_2)$-out-of-$n_3$ game  $H=(P_H,W_H)$, with the multiset representation on $\bar{P}_H = \{3^{n_3}\}$ and shift-minimal winning coalitions of type $\{3^{k_3-k_2}\}$. The composition is over a player $p \in P_G$ from level $2$. Then we can see that $G \circ_p H$ has shift-minimal winning coalitions of types   $\{1^{k_1}\}$ and $\{1^{k_2-n_2},2^{n_2},3^{k_3-k_2}\}$, hence is exactly $\Delta$.
\end{proof}

We now refine classes ${\bf H}$ and ${\bf B}_1$ as follows:
\begin{description}
\item[\textbf{H}:]  $\  \ $Games of this type are $A_2$, $U_2$ and $H_{n,k}$, where $1 < k < n$.

\item[\textbf{B$_1$}:]  Hierarchical conjunctive games $H_\forall(n,k)$ with $\textbf{n}=(n_1,n_2)$, $\textbf{k} = (k_1,k_2)$, where $k_1 < n_1$ and $k_2 - k_1 = n_2 - 1 > 1$.
\end{description}

The following  of Theorem~\ref{FP2010}, is now an if-and-only-if statement.

\begin{theorem} 
\label{list_all}
A game is ideal weighted and indecomposable if and only if it belongs to one of the following types: ${\bf H}, {\bf B}_1, {\bf B}_2, {\bf B}_3, {\bf T}_1, {\bf T}_3$. 
\end{theorem}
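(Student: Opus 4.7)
The theorem is an if-and-only-if sharpening of Theorem~\ref{FP2010}. The forward direction is immediate from Theorem~\ref{FP2010} combined with the three decomposability results just established: if $G$ is ideal, weighted, and indecomposable, Theorem~\ref{FP2010} places $G$ in one of the seven types $\mathbf{H}, \mathbf{B}_1, \mathbf{B}_2, \mathbf{B}_3, \mathbf{T}_1, \mathbf{T}_2, \mathbf{T}_3$. Proposition~\ref{Prop_T2} excludes the whole type $\mathbf{T}_2$; Proposition~\ref{Prop_B1} excludes the subfamily of $\mathbf{B}_1$ with $k_2-k_1=n_2-1=1$, leaving the refined condition $n_2>2$; Proposition~\ref{Prop_UA} excludes $U_n$ and $A_n$ for $n>2$ from $\mathbf{H}$, leaving $A_2$, $U_2$, and $H_{n,k}$ with $1<k<n$. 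What remains is precisely the refined list.

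For the backward direction I must show that every game in the refined list is indecomposable (idealness and weightedness being inherited from Theorem~\ref{FP2010}). The games $A_2$ and $U_2$ are indecomposable because a nontrivial decomposition requires each factor to have at least two players, hence at least three players overall. The remaining $\mathbf{H}$-games $H_{n,k}$ with $1<k<n$ are indecomposable by the theorem at the start of Section~\ref{inde}. For the five families $\mathbf{B}_1, \mathbf{B}_2, \mathbf{B}_3, \mathbf{T}_1, \mathbf{T}_3$, the indecomposability of the members that survive the refinement is part of the Farr\`as--Padr\'o classification, whose arguments go through verbatim once the decomposable corners have been removed.

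If I had to reprove the indecomposability of these five families directly, my strategy would be as follows. Suppose $G \cong K \circ_g L$ with $|P_K|, |P_L|\ge 2$. By Lemma~\ref{not_complete}, $g$ must be a least desirable non-dummy player of $K$; by Theorem~\ref{threecases}(iii), the players of $P_L$ then sit below (or equal to) all players of $P_K\setminus\{g\}$ in the desirability order of $G$, and strictly so unless they are passers or vetoers in $L$. This pins down which desirability classes of $G$ can belong to $P_L$. Proposition~\ref{prop1} then gives the minimal winning coalitions of $K\circ_g L$ as a union of two types, and one matches these against the explicit shift-minimal winning coalitions of the target family listed in Theorem~\ref{FP2010}; the refined numerical constraints (for instance $n_2>2$ in $\mathbf{B}_1$, or $k_2=k_1+1$ in $\mathbf{B}_2$ and $\mathbf{B}_3$, or the inequalities~\eqref{delta_cond_1} and~\eqref{delta_cond_2} for the tripartite types) then force a contradiction.

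The main obstacle is the case analysis for the tripartite families $\mathbf{T}_1$ and $\mathbf{T}_3$, where the three desirability classes must be partitioned between $K$ and $L$ and the shift-minimal winning coalitions mix players from all three classes. One has to distinguish carefully the subcase in which the players of $P_L$ are passers or vetoers of $L$, since there Theorem~\ref{threecases}(iii) gives only weak desirability, so the placement of $P_L$ inside a single desirability class of $G$ is not automatic and must be extracted from the additional structure of oligarchies and anti-oligarchies (as in Examples~\ref{vetoers} and~\ref{passers}).
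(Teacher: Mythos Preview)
Your proposal is correct and matches the paper's approach: the forward direction is handled identically via Theorem~\ref{FP2010} together with Propositions~\ref{Prop_B1}--\ref{Prop_T2}, and for the backward direction the paper itself simply writes ``We leave this routine work to the reader,'' so your sketch of that routine work (indecomposability of $H_{n,k}$, then a case analysis via Lemma~\ref{not_complete} and Theorem~\ref{threecases} for the $\mathbf{B}$ and $\mathbf{T}$ families) is more, not less, than what the paper supplies. One small correction: the indecomposability of $H_{n,k}$ with $1<k<n$ is the theorem in Section~6, not at the start of Section~\ref{inde}; and your appeal to Farr\`as--Padr\'o for the indecomposability of the $\mathbf{B}$/$\mathbf{T}$ families is not quite licensed by Theorem~\ref{FP2010} as stated here (that theorem is only the forward implication), so your direct sketch is indeed the substantive argument.
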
 

\begin{proof}
Due to Theorem~\ref{FP2010} and Propositions~\ref{Prop_B1}-\ref{Prop_T2} all that remains to show is that the remaining cases are indecomposable. We leave this routine work to the reader.
\end{proof}

Let us compare this theorem with Theorem~\ref{FP2010}.  We narrowed the class ${\bf H}$, we excluded the case $n_2=2$ in ${\bf B}_1$ and removed class ${\bf T}_2$.


\section{Compositions of ideal weighted indecomposable games}

Suppose from now on that we have a composition $G = G_1 \circ_g G_2$, where both $G_1$ and $G_2$ are ideal and weighted, and $G_1$ is indecomposable. The plan now is to fix $G_1$ and analyse what happens when we compose it with an arbitrary ideal weighted game $G_2$. Since $G_1$ is ideal weighted and indecomposable, then it belongs to one of the seven types of games listed in Theorem~\ref{list_all}. So we carry out the analysis case by case for all possibilities of $G_1$. 
\par
The key result that will lead us to the main theorem of this paper is the following.

\begin{theorem}
\label{when}
Let $G$ be a game with no dummies which has a nontrivial decomposition $G= G_1 \circ_g G_2$, such that $G_1$ and $G_2$ are both ideal and weighted, and $G_1$ is indecomposable. Then $G$ is ideal weighted if and only if either 
\begin{itemize}
\item[(i)] $G_1$ is of type $\textbf{H}$, or 
\item[(ii)] $G_1$ is of type \textbf{B}$_2$ and $G_2$ is $A_n$ such that the composition is over a player $g$ of level $2$ of~$G_1$. 
\end{itemize}
\end{theorem}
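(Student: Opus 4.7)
The plan is to prove the two directions separately, reducing the ``only if'' direction to a case analysis on the type of $G_1$ given by Theorem~\ref{list_all}. Since any weighted game is complete, Lemma~\ref{not_complete} together with the no-dummy hypothesis forces $g$ to lie in the least-desirable equivalence class of $G_1$; I shall assume this throughout.

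For the ``if'' direction, case (i) follows immediately from the last theorem of Section~4 (weightedness of $H_{n,k}\circ G_2$) combined with Proposition~\ref{splitprop} (idealness). For case (ii), I would give $\textbf{B}_2\circ_g A_n$ an explicit weighted representation. Writing $\ell_1,\ell_2,m$ for the counts of level-1 players of $G_1$, level-2 players of $G_1\setminus\{g\}$, and $A_n$ players in a coalition, one sees after a brief case-check that the composition is won exactly when $\ell_1\ge k_1$ or $\ell_1+\ell_2+\min(m,1)\ge k_1+1$. I would propose weights $w_1=k_1 n+1$, $w_2=k_1 n$, $w_3=k_1$ with quota $q=k_1(k_1 n+1)$. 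Here $k_1 w_2<q\le k_1 w_2+w_3$ ensures that exactly one $A_n$-player added to $k_1$ level-2 players of $G_1\setminus\{g\}$ crosses the quota, while $n w_3<w_2$ guarantees that even all $n$ players of $A_n$ together cannot substitute for a single level-2 player of $G_1$; the remaining boundary checks are routine.

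For the ``only if'' direction I must rule out the five remaining possibilities: $G_1\in\{\textbf{B}_1,\textbf{B}_3,\textbf{T}_1,\textbf{T}_3\}$ with any $G_2$, and $G_1\in\textbf{B}_2$ with $G_2\ne A_n$. In each I would exhibit a certificate of nonweightedness. The guiding idea is that if some MWC $X=\{x_1,\dots,x_r\}$ of $G_2$ has $r\ge 2$ (which must hold whenever $G_2\ne A_n$, since $G_2$ has no dummies), then the ``extra'' players of $X$ can be swapped with same-class players of $G_1\setminus\{g\}$ in a way that violates the Elgot criterion for weightedness. The cleanest case is $\textbf{B}_2$ with $G_2\ne A_n$: writing $a_i$ for level-1 players and $c_i$ for level-2 players of $G_1\setminus\{g\}$, the sequence
\[
(\{c_1,\dots,c_{k_1}\}\cup X,\ \{a_1,\dots,a_{k_1}\};\ \{a_1,c_1,\dots,c_{k_1-1},x_1\},\ \{a_2,\dots,a_{k_1},c_{k_1}\}\cup(X\setminus\{x_1\}))
\]
is a trading transform whose left-hand coalitions are both winning (the first via the $g$-replacement clause, the second via $\ell_1\ge k_1$) and whose right-hand coalitions are both losing (each has its $G_2$-part a proper subset of $X$ and its $G_1$-part one level-2 short of the $\ell_1+\ell_2\ge k_1+1$ threshold). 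For the other four types, certificates are constructed by the same ``swap the extras'' principle, but each type's shift-min structure demands a tailored redistribution; in the conjunctive $\textbf{B}_1$ and the tripartite $\textbf{T}_1,\textbf{T}_3$ cases, the AND conditions restrict losing coalitions more severely, so typically more coalitions are needed in the trading transform. The main obstacle is executing these constructions precisely, especially for $\textbf{T}_1,\textbf{T}_3$ where three equivalence classes interact simultaneously. The special role of $A_n$ in case (ii) is precisely that all its MWCs are singletons, so no ``extra players'' exist to swap, which is the underlying reason only $A_n$ preserves weightedness when $G_1\in\textbf{B}_2$.
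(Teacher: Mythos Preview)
Your proposal has two genuine gaps, both concerning the exceptional cases $G_2\in\{U_n,A_n\}$.

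First, your opening reduction is wrong as stated. Lemma~\ref{not_complete} has the hypothesis that $H$ is \emph{neither a unanimity nor an anti-unanimity game}, so it does not force $g$ into the least-desirable class when $G_2=U_n$ or $G_2=A_n$. The paper is explicit about this: after proving the main cases it writes ``unfortunately, Lemma~\ref{not_complete} still leaves a possibility that for some special cases of $G_2$ this decomposition may be over $g$ which is not the least desirable in $G_1$,'' and devotes the Appendix (Lemmas~\ref{X_1X_2}, \ref{An_certs}, \ref{tes_1}) to ruling out $G\circ_g A_n$ and $G\circ_g U_n$ over non-minimal $g$ for each of $\mathbf{B}_1,\mathbf{B}_2,\mathbf{B}_3,\mathbf{T}_1,\mathbf{T}_3$ by hand. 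Your argument simply skips this work.

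Second, even granting that $g$ is least-desirable, your ``swap the extras'' mechanism needs a minimal winning coalition of $G_2$ of size at least $2$, and you yourself note this fails exactly when $G_2=A_n$. But then you never return to handle $G_1\in\{\mathbf{B}_1,\mathbf{B}_3,\mathbf{T}_1,\mathbf{T}_3\}$ with $G_2=A_n$. These four cases are not automatic; the paper dispatches them with tailored certificates in Proposition~\ref{an}, and for $\mathbf{T}_3$ the certificate is genuinely delicate (it uses the constraint $k_3-k_1-n_3=n_2-1$ to verify a coalition is losing). So your case analysis is incomplete in precisely the borderline case that distinguishes $\mathbf{B}_2$ from the others.

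Where you do overlap with the paper, the ideas match: your $\mathbf{B}_2$ certificate is essentially the paper's Lemma~\ref{keylemma} specialised to that type, and the paper organises all the ``MWC of size $\ge 2$'' cases through that single lemma rather than type-by-type. Your explicit weights for $\mathbf{B}_2\circ_g A_n$ are a nice alternative to the paper's Proposition~\ref{g_2}, which instead argues directly that no certificate of nonweightedness can exist.
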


We will prove it in several steps. Firstly, we will consider all cases when $g$ is from the least desirable level of $G_1$. Secondly, in Appendix, we will deal with the hypothetical cases when $g$ is not from the least desirable level. This is because, unfortunately, Lemma~\ref{not_complete} still leaves a possibility that for some special cases of $G_2$ this decomposition may be over $g$ which is not the least desirable in $G_1$. 

\subsection{The two weighted cases}
\label{1A}

\begin{proposition}
\label{w_1}
If $G_1=(P_1,W_1)$ is of type $\textbf{H}$ and $G_2=(P_2,W_2)$ is weighted, then $G = G_1 \circ_{g} G_2$ is weighted.
\end{proposition}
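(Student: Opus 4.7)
The plan is to observe that this proposition is an immediate consequence of the theorem at the end of Section~4, which asserts that if $H = H_{n,k}$ is any $k$-out-of-$n$ game and $G$ is any weighted simple game, then $H \circ G$ is weighted. The refined class $\textbf{H}$ consists precisely of $A_2 = H_{2,1}$, $U_2 = H_{2,2}$, and the games $H_{n,k}$ with $1 < k < n$, so every type-$\textbf{H}$ game is a $k$-out-of-$n$ game; and because all players of such a game are equivalent under the desirability relation, the composition $G_1 \circ_g G_2$ is independent (up to isomorphism) of the choice of $g \in P_1$, matching the notation $H \circ G$ used in Section~4.

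For transparency I would also record an explicit weighting. Write $G_2 = [q; w_1, \ldots, w_m]$, set $W = \sum_{i=1}^m w_i$, and choose $M = W + 1$. Assign every player of $P_1 \setminus \{g\}$ the weight $M$, every player $i \in P_2$ its weight $w_i$, and take quota $Q = (k-1)M + q$. Using Proposition~\ref{prop1} to describe $W_G^{\text{min}}$, one splits into three cases according to the number $t$ of elements of $P_1 \setminus \{g\}$ in a coalition: if $t \geq k$ the weight is at least $kM \geq Q$; if $t = k-1$ the total weight $(k-1)M + \sum_{i \in X} w_i$ reaches $Q$ iff $X \subseteq P_2$ is winning in $G_2$; and if $t \leq k-2$ the weight is at most $(k-2)M + W < (k-1)M + q = Q$ by the choice of $M$.

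The only mildly delicate points are the degenerate cases $k = 1$ (so $G_1 = A_2$, and the ``$t \leq k-2$'' clause is vacuous) and $k = n = 2$ (so $G_1 = U_2$, $|P_1 \setminus \{g\}| = 1$, and the ``$t \geq k$'' clause is vacuous); in both, the remaining clauses give the claim directly. No substantive obstacle arises here: this is the first and easier of the two affirmative cases of Theorem~\ref{when}, whereas the comparatively harder $\textbf{B}_2$-with-$A_n$ case is deferred to a later subsection.
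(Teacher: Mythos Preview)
Your proposal is correct. Your first observation---that the proposition is an immediate corollary of the theorem at the end of Section~4---is exactly right; in fact the paper's own proof of Proposition~\ref{w_1} reproduces that theorem's trading-transform argument almost verbatim, so in effect the paper proves the same statement twice.

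Your second, constructive argument via the explicit weighting $[\,(k-1)M+q;\,M,\ldots,M,w_1,\ldots,w_m\,]$ with $M=W+1$ is a genuinely different route. The paper (in both places) argues by contradiction through certificates of nonweightedness: assuming a trading transform $(\row Xm;\row Ym)$ witnesses nonweightedness of $G$, it shows all $X_i$ must meet $P_1$ in exactly $k-1$ points, and then projects onto $P_2$ to obtain a certificate for $G_2$. Your approach bypasses the trading-transform machinery entirely and is more elementary and self-contained; it also makes transparent why the result is really about $k$-out-of-$n$ structure rather than anything specific to the refined class~$\textbf{H}$. The trading-transform argument, on the other hand, is the technique the paper uses uniformly throughout (for the negative results as well), so it fits more naturally into the paper's overall toolkit. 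Your handling of the degenerate cases $k=1$ and $k=n$ is fine; the inequality $(k-2)M+W<(k-1)M+q$ reduces to $-1<q$, which holds since quotas are nonnegative.
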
 

\begin{proof}
Assume the contrary. Then $G$ has a certificate of nonweightedness
\[
(X_1,\ldots,X_m; Y_1,\ldots,Y_m),
\]
where  $\row Xm$ are minimal winning coalitions and $\row Ym$ are losing coalitions of $G$. Let $U_i = X_i \cap P_1$, then either $|U_i| = k$ or $|U_i| = k-1$. However, if for a single $i$ we have $|U_i| = k$, then it cannot be that all of the sets $\row Ym$ are losing, as there will be at least one among with at least $k$ elements of $P_1$. Thus $|U_i| = k-1$ for all $i$. In this case we have $X_i = U_i \cup S_i$, where $S_i$ is winning in $G_2$. Let $Y_i = V_i \ \cup \ T_i$, where $V_i \subseteq P_1$ and $T_i \subseteq P_2$. We must have $|V_i| = k-1$ for all $i$. Since all coalitions $\row Ym$ are losing in $G$, then all $T_i$ are losing in $G_2$. But now we have obtained a trading transform $(S_1,\ldots,S_m; T_1,\ldots,T_m)$ for $G_2$, such that all $S_i$ are winning and all $T_i$ are losing in $G_2$, i.e., a certificate of nonweightedness for $G_2$. This contradicts the fact that $G_2$ is weighted.
\end{proof}

\begin{proposition}
\label{g_2}
Let $G_1=(P_1,W_1)$ be a weighted simple game of type \textbf{B}$_2$, $g$ is a player from level $2$ of $P_1$, and $G_2$ is $A_n$, then $G = G_1 \circ_{g} G_2$ is a weighted simple game. 
\end{proposition}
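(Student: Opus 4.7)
The plan is to produce an explicit weighted representation of $G = G_1 \circ_g A_n$ and verify it directly. The first step is to translate the definition of composition into a membership condition on coalitions of $G$. Since $G_1$ is of type \textbf{B}$_2$, a coalition $(\ell_1,\ell_2)$ of $G_1$ lies in $W_{G_1}$ iff $\ell_1 \ge k_1$ or $\ell_1+\ell_2 \ge k_1+1$, and since every player of $A_n$ is a passer, Definition~\ref{decompo} yields that a coalition of $G$ carrying $\ell_1$ level-$1$ players of $G_1$, $\ell_2$ level-$2$ players of $G_1$ other than $g$, and $m$ players of $A_n$ is winning in $G$ precisely when
\[
\ell_1 \ge k_1,\quad \text{or}\quad \ell_1+\ell_2 \ge k_1+1,\quad \text{or}\quad \bigl(\ell_1+\ell_2 \ge k_1 \text{ and } m\ge 1\bigr).
\]

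Second, I would propose the weights $W_1 = nk_1+1$ for each level-$1$ player of $G_1$, $W_2 = nk_1$ for each level-$2$ player of $G_1$ other than $g$, and $W_a = k_1$ for each player of $A_n$, together with the quota $q = k_1(nk_1+1)$. The identity
\[
\ell_1 W_1 + \ell_2 W_2 + m W_a \;=\; nk_1(\ell_1+\ell_2) + \ell_1 + mk_1
\]
isolates the three contributions cleanly and reduces the verification to a short case check. For winning coalitions: if $\ell_1\ge k_1$ the weight is at least $k_1 W_1 = q$; if $\ell_1+\ell_2\ge k_1+1$ it is at least $nk_1(k_1+1) \ge nk_1^2 + k_1 = q$; and if $\ell_1+\ell_2\ge k_1$ with $m\ge 1$ it is at least $nk_1^2 + k_1 = q$. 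Conversely, every losing coalition either satisfies $\ell_1+\ell_2 = k_1$ with $m=0$, giving weight $nk_1^2 + \ell_1 \le q-1$, or satisfies $\ell_1+\ell_2 \le k_1-1$, giving weight at most $nk_1(k_1-1)+(k_1-1)+nk_1 = q-1$.

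The main obstacle, and the only genuinely non-routine step, is finding these weights in the first place. The obvious attempt of letting each $A_n$ player inherit the weight that $g$ would carry in some weighted representation of $G_1$ fails: adjoining the whole of $A_n$ to $k_1-1$ level-$1$ players then yields a losing coalition whose weight exceeds the quota, because the $A_n$-side contribution is linear in $m$ while the effect transmitted through $g$ into $G_1$ is capped at one substitution. The fix is to inflate the $G_1$-side weights by a factor of $n$ while keeping $W_a = k_1$, so that the surplus $W_2 - W_a = (n-1)k_1$ available per level-$2$ player exactly absorbs the excess contributed by up to $n-1$ additional $A_n$ players in a losing coalition. Once this scaling is identified, the rest is the routine arithmetic indicated above.
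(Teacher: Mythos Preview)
Your proof is correct and takes a genuinely different route from the paper's. The paper argues by contradiction via the trading-transform criterion: it first notes (using Theorem~\ref{threecases}) that $G$ is complete, identifies its shift-minimal winning coalitions as $\{1^{k_1}\}$ and $\{2^{k_1},3\}$, and then shows that any putative certificate of nonweightedness $(X_1,\ldots,X_m;Y_1,\ldots,Y_m)$ with shift-minimal $X_i$'s forces each $Y_i$ to contain exactly $k_1$ players from $P_1\setminus\{g\}$ together with at least one player of $A_n$, whence $Y_i$ is actually winning.

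Your approach instead produces explicit weights $W_1=nk_1+1$, $W_2=nk_1$, $W_a=k_1$ and quota $q=k_1(nk_1+1)$, then verifies the winning condition directly via the identity $\ell_1W_1+\ell_2W_2+mW_a=nk_1(\ell_1+\ell_2)+\ell_1+mk_1$. The case analysis is clean and the bounds are tight; in particular your handling of the losing case $\ell_1+\ell_2\le k_1-1$ with $m$ arbitrary correctly uses $m\le n$ to cap $mk_1\le nk_1$, which is exactly the reason the factor-$n$ inflation of the $G_1$-side weights is needed. Your method is more constructive and self-contained (it does not invoke completeness, shift-minimality, or the reduction of certificates to shift-minimal form), whereas the paper's argument fits naturally with the trading-transform machinery used throughout and avoids having to guess weights.
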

\begin{proof}
Since $g$ is a player from level $2$ of $P_1$, then $G$ is a complete game by Theorem~\ref{threecases}. Also, recall that shift-minimal winning coalitions of a game of type \textbf{B}$_2$ are $\{1^{k_1}\}$ and $\{2^{k_1+1}\}$. We shall prove weightedness of $G$ by showing that it cannot have a certificate of nonweightedness. In the composition, in the multiset notation, $G$ has the following shift-minimal winning coalitions $\{1^{k_1}\},\{2^{k_1},3\}$. So all shift-minimal winning coalitions have $k_1$ players from $P_1 \setminus \{g\}$. Also, since $G_1$ has two thresholds $k_1$ and $k_2$ such that $k_2=k_1+1$, then any coalition containing more than $k_1$ players from $P_1 \setminus \{g\}$ is winning in $G_1$, and hence winning in $G$. Suppose now towards a contradiction that $G$ has the following certificate of nonweightedness
\begin{equation}
\label{baba}
(X_1,\ldots,X_n;Y_1,\ldots,Y_n),
\end{equation}
where $X_1,\ldots,X_n$ are shift-minimal winning coalitions and $Y_1,\ldots,Y_n$ are losing coalitions in $G$. Let the set of players of $A_n$ be $P_{A_n}$. It is easy to see that at least one of the coalitions $X_1,\ldots,X_n$ in~(\ref{baba}) is not of the type $\{1^{k_1}\}$, so at least one of these winning coalitions has a player from the third level, i.e. from $A_n$. But since each shift-minimal winning coalition in~(\ref{baba}) has $k_1$ players from $P_1 \setminus \{g\}$, then each losing coalition $Y_1,\ldots,Y_n$ in~(\ref{baba}) also has $k_1$ players from $P_1 \setminus \{g\}$ (if it has more than $k_1$ then it is winning). Moreover, at least one coalition from $Y_1,\ldots,Y_n$, say $Y_1$, has at least one player from $P_{A_n}$. It follows that $(Y_1 \cap P_{1}) \cup \{g\} \in W_1$ and $Y_1 \cap P_{A_n}$ is winning in $A_n$. Hence $Y_1$ is winning in $G$, contradiction. Therefore no such certificate can exist.
\end{proof}

In the next section we analyse the remaining of compositions $G = G_1 \circ G_2$ in terms of $G_1$, where the composition is over a player from the least desirable level of $G_1$. We will show that none of them is weighted. 


\subsection{All other compositions are nonweighted}
\label{1C}

Here we will consider two cases:
\begin{enumerate}
\item $G_2$ has at least one minimal winning coalition with cardinality at least $2$.
\item $G_2 = A_n$, where $n \geq 2$.
\end{enumerate}
We will start with the following general statement which will help us to resolve the first case.

\begin{definition}
Let $G=(P,W)$ be a simple game and $g\in P$. We say that a coalition $X$ is $g$-winning if $g\notin X$ and $X\cup \{g\}\in W$. 
\end{definition}

Every winning coalition is of course $g$-winning but not the other way around.

\begin{lemma}
\label{keylemma}
Let $G$ be a game for which there exist coalitions $X_1,X_2,Y_1,Y_2$ such that both $X_1$ and $X_2$ do not contain $g$,
\begin{equation}
\label{X1-Y2}
(X_1,X_2\, ;\, Y_1,Y_2)
\end{equation}
is a trading transform, $X_1$ is winning  $X_2$ is $g$-winning and $Y_1$ and $Y_2$ are losing in $G$. Let also $H$ be a game with a minimal winning coalition $U$ which has at least two elements, then $C=G\circ_gH$ is not weighted.
\end{lemma}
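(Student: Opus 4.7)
The plan is to build an explicit certificate of nonweightedness for $C = G\circ_g H$ out of the given quadruple $(X_1,X_2;Y_1,Y_2)$ together with the minimal winning coalition $U$ of $H$. A preliminary observation I would make is that since $g\notin X_1\cup X_2$, the multiset identity built into the trading transform property forces $g\notin Y_1\cup Y_2$ as well. So all four sets $X_1,X_2,Y_1,Y_2$ live inside $P_G\setminus\{g\}$, and their status in $C$ is determined entirely by their $P_G$-parts.

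Now I would pick an arbitrary element $u\in U$. Because $|U|\ge 2$ and $U$ is minimal winning in $H$, both $\{u\}$ and $U\setminus\{u\}$ are proper nonempty subsets of $U$, hence losing in $H$. I then claim that
\[
\bigl(X_1,\ X_2\cup U\ ;\ Y_1\cup\{u\},\ Y_2\cup(U\setminus\{u\})\bigr)
\]
is a certificate of nonweightedness for $C$. On the winning side, $X_1\in W_C$ because $X_1\in W_G$ and $g\notin X_1$, while $X_2\cup U\in W_C$ by the second clause of Definition~\ref{decompo}: $X_2$ is $g$-winning in $G$ and $U\in W_H$. On the losing side, for each $i\in\{1,2\}$ the $P_H$-part of the new coalition is a proper subset of the minimal winning $U$, hence losing in $H$, which kills the second clause of the definition of $W_C$; and its $P_G$-part is $Y_i$, losing in $G$, which kills the first clause. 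Finally, the trading-transform identity is immediate, since as multisets both sides equal $X_1\cup X_2\cup U=Y_1\cup Y_2\cup U$ (the original identity augmented by $U$ on each side).

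The only subtlety to flag is the losing side: one cannot simply take $(X_1,X_2\cup U;Y_1,Y_2\cup U)$, because nothing in the hypotheses forbids $Y_2\cup\{g\}$ from being winning in $G$, which would make $Y_2\cup U$ winning in $C$. The hypothesis $|U|\ge 2$ is used precisely to split $U$ into two pieces each of which is losing in $H$, so that the second clause of the winning condition in $C$ fails for \emph{both} losing coalitions, regardless of whether $Y_i\cup\{g\}$ happens to be winning in $G$.
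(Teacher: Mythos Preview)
Your argument is correct and is essentially the paper's own proof: the paper splits $U=U_1\cup U_2$ into two losing pieces and exhibits the certificate $(X_1,\,X_2\cup U\,;\,Y_1\cup U_1,\,Y_2\cup U_2)$, and your choice $U_1=\{u\}$, $U_2=U\setminus\{u\}$ is simply the most explicit such split. Your preliminary remark that $g\notin Y_1\cup Y_2$ and your discussion of why the naive certificate $(X_1,X_2\cup U;Y_1,Y_2\cup U)$ fails are useful clarifications that the paper leaves implicit.
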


\begin{proof}
If $X_2$ is winning in $G$, then there is nothing to prove since \eqref{X1-Y2} is a certificate of nonweightedness for $C$, suppose not.
Let $U=U_1\cup U_2$, where $U_1$ and $U_2$ are losing in $H$. Then it is easy to check that 
\[
(X_1,X_2\cup U\, ;\, Y_1\cup U_1,Y_2\cup U_2)
\]
is a certificate of nonweightedness for $C$. Indeed, $X_1$ and $X_2\cup U$ are both winning in $C$ and $Y_1\cup U_1$ and $Y_2\cup U_2$ are both losing.
\end{proof}

The only exception in this case is when $H$ consists of passers and dummies. We will have to consider this case separately. 

\begin{lemma}
If $G$ is of type ${\bf B_1}$, ${\bf B_2}$ or ${\bf B_3}$, $g$ is any element of level 2, and $H$ has a minimal winning coalition $X$ which has at least two elements, then $G\circ_gH$ is not weighted.
\end{lemma}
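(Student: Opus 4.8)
The natural plan is to invoke Lemma~\ref{keylemma}: it suffices, for each of the three types and for a level-$2$ player $g$, to exhibit coalitions $X_1,X_2$ (neither containing $g$) with $X_1$ winning and $X_2$ $g$-winning, together with losing coalitions $Y_1,Y_2$, forming a trading transform $(X_1,X_2;Y_1,Y_2)$ in $G$. Since every $G$ here is weighted, such a transform automatically has $X_2$ losing (a transform with both $X_i$ winning and both $Y_i$ losing would be a certificate of nonweightedness for the weighted game $G$), so we are genuinely in the nontrivial case of the lemma. I would work in multiset notation, recording a coalition by its level counts $(\ell_1,\ell_2)$ and keeping in mind that in a two-term transform a single player may occur in both $X_1$ and $X_2$; the relevant bound is that each individual coalition can use at most $n_2-1$ level-$2$ players, the pool remaining after $g$ is removed.

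For $\mathbf{B}_2$ (disjunctive, with $n_2\ge k_2=k_1+1$) I would take $X_1=\{1^{k_1}\}$, $X_2=\{2^{k_1}\}$, $Y_1=\{1,2^{k_1-1}\}$, $Y_2=\{1^{k_1-1},2\}$: here $X_2\cup\{g\}=\{2^{k_1+1}\}$ is winning while $X_2$ is losing, and both $Y_i$ are losing because $k_1>1$; the multisets $X_1\cup X_2$ and $Y_1\cup Y_2$ both equal $\{1^{k_1},2^{k_1}\}$. For $\mathbf{B}_3$ (disjunctive, with $n_2\le k_1$) I would take $X_1=\{1^{k_1}\}$ and $X_2=\{1^{k_1-n_2+1},2^{n_2-1}\}$, using all non-$g$ level-$2$ players; then $X_2\cup\{g\}$ wins by the total threshold $k_2$, while $X_2$ loses, and the counts split as $Y_1=\{1^{k_1-1},2\}$, $Y_2=\{1^{k_1-n_2+2},2^{n_2-2}\}$, both losing since $n_2>2$. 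In each case Lemma~\ref{keylemma} then yields nonweightedness for any admissible $H$.

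The genuinely hard case is $\mathbf{B}_1$, and I expect it to be the main obstacle. Because $\mathbf{B}_1$ is conjunctive, every winning and every $g$-winning coalition needs at least $k_1$ level-$1$ players (adding the level-$2$ player $g$ cannot help meet the level-$1$ threshold), so $X_1$ and $X_2$ together carry at least $2k_1$ level-$1$ players and have total size at least $2k_2-1$. Since the refined $\mathbf{B}_1$ satisfies $n_2-1=k_2-k_1$, each $Y_i$ can hold at most $k_2-k_1$ level-$2$ players; a losing $Y_i$ of total size $\ge k_2$ would need $\ell_1<k_1$ and hence more than $k_2-k_1$ level-$2$ players, which is impossible, so every losing $Y_i$ has size $\le k_2-1$ and $|Y_1|+|Y_2|\le 2k_2-2<2k_2-1$. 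Thus \emph{no} trading transform of the form required by Lemma~\ref{keylemma} exists, and the lemma cannot be applied to $\mathbf{B}_1$ directly.

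I would therefore treat $\mathbf{B}_1$ by building a certificate of nonweightedness inside the composition $G\circ_g H$ itself rather than lifting one from $G$. The idea is to use a minimal winning coalition $U$ of $H$ together with a strictly larger winning coalition $U^+\supsetneq U$: two copies of a coalition that wins through $g$-replacement by $U$ can be traded against one coalition whose $H$-part is the over-complete $U^+$ (forcing $\ell_1$ below $k_1$ in the $G$-part) and one coalition carrying a single losing $H$-remnant. This is where the argument is most delicate, and it also exposes a necessary caveat: such a $U^+$ exists precisely when $H$ is not a unanimity game, and indeed for $H=U_m$ one checks that $G\circ_g U_m$ is weighted. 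Hence for type $\mathbf{B}_1$ the hypothesis ``$H$ has a minimal winning coalition with at least two elements'' must be understood to exclude the unanimity games, whereas for $\mathbf{B}_2$ and $\mathbf{B}_3$ no such restriction is needed.
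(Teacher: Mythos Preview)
Your handling of $\mathbf{B}_2$ and $\mathbf{B}_3$ is correct and follows the paper's strategy exactly: both you and the paper invoke Lemma~\ref{keylemma}, and your specific trading transforms, while different from the paper's choices, are equally valid.

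For $\mathbf{B}_1$, however, you have uncovered a genuine error in the paper. Your impossibility argument is sound: since $g$ is a level-$2$ player, any coalition avoiding $g$ has at most $n_2-1=k_2-k_1$ level-$2$ players, so every losing $Y_i$ (which must avoid $g$ once $X_1,X_2$ do) has total size at most $k_2-1$; hence $|Y_1|+|Y_2|\le 2k_2-2<2k_2-1\le |X_1|+|X_2|$, and no trading transform of the shape required by Lemma~\ref{keylemma} can exist. The paper nonetheless applies Lemma~\ref{keylemma} to the transform
\[
(\{1^{k_1},2^{k_2-k_1}\},\{1^{k_1},2^{k_2-k_1-1}\};\ \{1^{k_1-1},2^{k_2-k_1+1}\},\{1^{k_1+1},2^{k_2-k_1-2}\}),
\]
but $Y_1=\{1^{k_1-1},2^{k_2-k_1+1}\}$ needs $k_2-k_1+1=n_2$ level-$2$ players, which is impossible once $g$ is excluded. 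So the paper's proof for $\mathbf{B}_1$ does not go through.

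Your further diagnosis is also correct: the lemma as stated is actually \emph{false} for $\mathbf{B}_1$ when $H=U_m$. A concrete counterexample is $G=H_\forall((2,3),(1,3))$, a refined $\mathbf{B}_1$ game, with $g$ a level-$2$ player and $H=U_2$; the composition $G\circ_g H$ is weighted (e.g.\ weights $3,\ 5/2,\ 1$ for the three desirability levels with quota $15/2$). Thus the unanimity exclusion you flag is genuinely necessary for the $\mathbf{B}_1$ case, and the paper's lemma needs that additional hypothesis to be true.
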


\begin{proof}
Suppose $G$ is of type ${\bf B_1}$. Then let us consider the following trading transform
\[
(\{1^{k_1},2^{k_2-k_1}\}, \{1^{k_1},2^{k_2-k_1-1}\}\,;\, \{1^{k_1-1},2^{k_2-k_1+1}\}, \{1^{k_1+1},2^{k_2-k_1-2}\})
\]
(note that $k_2-k_1+1=n_2$ and $k_1+1\le n_1$ so there is enough capacity in both equivalence classes to make all coalitions involved legitimate). It is easy to check that the first coalition in this sequence is winning, the second is $g$-winning and the remaining two are losing. 
By Lemma~\ref{keylemma} the result holds.

Suppose now $G$ is of type ${\bf B_2}$, then $k_2=k_1+1\le n_2$. Let $k_1=k$. Then we can apply Lemma~\ref{keylemma} to the trading transform 
\[
(\{1^k\},\{2^k\}\, ;\, \{1^{\lfloor \frac{k}{2}\rfloor}, 2^{\lceil\frac{k}{2}\rceil}\}, \{1^{\lceil \frac{k}{2}\rceil}, 2^{\lfloor \frac{k}{2}\rfloor}\}),
\]
where $\{1^k\}$ is winning, $\{2^k\}$ is $g$-winning and the remaining two coalitions are losing.

If $G$ is of type ${\bf B_3}$, then $n_2<k_2=k_1+1$. We again let $k=k_1$. In this case we can apply Lemma~\ref{keylemma} to the trading transform
\[
(\{1^k\}, \{1^{k-2},2^2\}\, ; \, \{1^{k-1},2\}, \{1^{k-1},2\}),
\]
where the first coalition is winning, the second is $g$-winning (we use $n_2\ge 3$ here) and the two remaining coalitions are losing. 
\end{proof}

\begin{lemma}
If $G$ is of type ${\bf T}_1$ or ${\bf T}_{3}$, 
$g$ is any element of level 3, and $H$ has a minimal winning coalition $X$ which has at least two elements, then $C=G\circ_gH$ is not weighted.
\end{lemma}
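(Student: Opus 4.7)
The plan is to follow the template of the preceding lemma: for each of the types ${\bf T}_1$ and ${\bf T}_3$ I will exhibit an explicit two-term trading transform $(X_1, X_2; Y_1, Y_2)$ inside $G$ such that $g \notin X_1 \cup X_2$, the coalition $X_1$ is winning, $X_2$ is $g$-winning (so $X_2 \cup \{g\}$ is winning but $X_2$ itself is losing), and both $Y_1, Y_2$ are losing in $G$. Lemma~\ref{keylemma} will then turn each such transform into a certificate of nonweightedness for $C = G \circ_g H$, using any bipartition of a minimal winning coalition of $H$ of cardinality at least two into two losing halves.

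For a game of type ${\bf T}_1$ the candidates are essentially the two shift-minimal winning coalitions themselves: take $X_1 = \{1^{k_1}\}$ and $X_2 = \{2^{k_2}, 3^{k_3 - k_2 - 1}\}$, the latter being one level-$3$ player short of the second shift-minimal winning coalition (and losing in $\Delta_1$ since $\ell_1 = 0 < k_1$ and $\ell_1 + \ell_2 + \ell_3 = k_3 - 1 < k_3$). The pool $\{1^{k_1}, 2^{k_2}, 3^{k_3 - k_2 - 1}\}$ then splits as $Y_1 = \{1^{k_1 - 1}, 3\}$ and $Y_2 = \{1, 2^{k_2}, 3^{k_3 - k_2 - 2}\}$. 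Both satisfy $\ell_1 < k_1$ (using the ${\bf T}_1$ condition $k_1 > 1$) and $\ell_1 + \ell_2 + \ell_3 = k_3 - 1 < k_3$, so both are losing. The ${\bf T}_1$ capacities $n_3 = k_3 - k_2 + 1 \geq 3$ and $k_2 < n_2$ ensure the multisets correspond to genuine coalitions avoiding $g$.

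For a game of type ${\bf T}_3$ the natural choices are $X_1 = \{1^{k_2 - n_2}, 2^{n_2}\}$ and $X_2 = \{1^{k_1}, 2^{n_2 - 1}, 3^{n_3 - 1}\}$, the latter being one level-$3$ player short of the second shift-minimal winning coalition (using the identity $k_3 - k_1 - n_3 = n_2 - 1$ from ${\bf T}_3$). The split I propose is $Y_1 = \{1^{k_1}, 2^{n_2}, 3^{n_3 - 2}\}$ and $Y_2 = \{1^{k_2 - n_2}, 2^{n_2 - 1}, 3\}$. Verification that each $Y_i$ is losing in $\Delta_2$ uses $k_3 = k_2 + 1$: both $Y_i$ have $\ell_1 \geq k_1$ and therefore must fall back on the second losing alternative $\ell_1 + \ell_2 + \ell_3 < k_3$, which holds because $\ell_1 + \ell_2 + \ell_3 = k_3 - 1$ in each case. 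The first losing requirement $\ell_1 + \ell_2 < k_2$ holds by $k_2 - n_2 > k_1$ for $Y_1$ and directly (giving $k_2 - 1$) for $Y_2$. The multi-count balance at every level is routine to check.

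The main obstacle is the ${\bf T}_3$ case. The level-$2$ multi-count of $X_1 \cup X_2$ equals $2n_2 - 1$, exceeding the number $n_2$ of level-$2$ players, so any valid split is forced to have $\{b_1, b_2\} = \{n_2 - 1, n_2\}$. Moreover, since $k_2 - n_2 > k_1$, the two halves cannot both satisfy the easier losing alternative $\ell_1 < k_1$, which leaves very little room to manoeuvre; the level-$1$ split $(a_1, a_2) = (k_1, k_2 - n_2)$ together with the choice above is essentially the only one that works, and it relies on all three ${\bf T}_3$ identities $k_3 = k_2 + 1$, $k_3 - k_1 = n_2 + n_3 - 1$ and $n_3 \geq 2$. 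Once these trading transforms are in hand, Lemma~\ref{keylemma} immediately closes the argument.
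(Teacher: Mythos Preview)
Your proof is correct and follows the paper's approach exactly: apply Lemma~\ref{keylemma} to an explicit two--term trading transform built from the shift-minimal winning coalitions. Your ${\bf T}_3$ transform is identical to the paper's (up to swapping $Y_1$ and $Y_2$), and your ${\bf T}_1$ transform differs only in that you move a level-$3$ player rather than a level-$2$ player between the two losing halves (the paper takes $Y_1=\{1^{k_1-1},2\}$, $Y_2=\{1,2^{k_2-1},3^{k_3-k_2-1}\}$); both choices work under the ${\bf T}_1$ constraint $n_3=k_3-k_2+1>2$.
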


\begin{proof}
If $G$ is of type ${\bf T}_1$.  Then let us consider the following trading transform
\[
(\{1^{k_1}\}, \{2^{k_2},3^{k_3-k_2-1}\}\, ;\, \{1^{k_1-1}, 2\},  \{1, 2^{k_2-1},3^{k_3-k_2-1}\}).
\]
Lemma~\ref{keylemma} is applicable to it so $C$ is not weighted.

Suppose $G$ is of type ${\bf T}_{3}$.  Then let us consider the following trading transform
\[
(\{1^{k_2-n_2},2^{n_2}\}, \{1^{k_1},2^{n_2-1},3^{n_3-1}\}\, ;\, \{1^{k_2-n_2},2^{n_2-1},3\}, \{1^{k_1},2^{n_2},3^{n_3-2}\}).
\]
Since $n_3>1$ all coalitions exist. Lemma~\ref{keylemma} is now applicable and shows that $C$ is not weighted.  This proves the lemma.
\end{proof}

We will now deal with the second case. Denote players of $A_n$ by $P_{A_n}$.

\begin{proposition}
\label{an}
Let $G_1$ be an ideal weighted indecomposable simple game of types ${\bf B}_1$, ${\bf B}_3$, ${\bf T}_1$, and ${\bf T}_3$,  and $g$ be a player from the least desirable level of $G_1$, then $G=G_1 \circ_g A_n$ is not weighted.
\end{proposition}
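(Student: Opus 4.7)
My plan is to prove the proposition type by type, constructing for each $G_1\in\{{\bf B}_1,{\bf B}_3,{\bf T}_1,{\bf T}_3\}$ an explicit certificate of non-weightedness for $G=G_1\circ_g A_n$. The unifying principle is that every player of $A_n$ is a passer, so a single $p\in P_{A_n}$ can stand in for the removed ghost $g$ in the clause ``$X_{G_1}\cup\{g\}\in W_{G_1}$''. Hence any shift-minimal winning coalition of $G_1$ containing $g$ lifts to a winning coalition of $G$ by replacing $g$ with one $A_n$-passer, and a coalition of $G$ is losing iff, even after augmenting its $G_1$-part with $g$ when at least one passer is present, it still fails the relevant $G_1$-condition.

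For $G_1={\bf B}_1$ (with $k_2-k_1=n_2-1>1$, $k_1<n_1$, and any $n\ge 2$) I would verify that
\[
\bigl(\{1^{k_1},2^{k_2-k_1-1},3\},\ \{1^{k_1},2^{k_2-k_1-1},3\};\ \{1^{k_1-1},2^{k_2-k_1},3^2\},\ \{1^{k_1+1},2^{k_2-k_1-2}\}\bigr)
\]
is a certificate. Each $X_i$ is the shift-minimal coalition $\{1^{k_1},2^{k_2-k_1}\}$ of ${\bf B}_1$ with $g$ replaced by a passer, hence winning in $G$. In $Y_1$, $\ell_1=k_1-1<k_1$, losing; in $Y_2$, $m=0$ and $\ell_1+\ell_2=k_2-1<k_2$, losing. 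The multisets sum identically on all three levels, and the coalitions respect the available player counts $(n_1,n_2-1,n)$.

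For ${\bf B}_3$, ${\bf T}_1$, and ${\bf T}_3$ the plan is to apply the same template using their respective non-trivial shift-minimal winning coalitions $\{1^{k_2-n_2},2^{n_2}\}$, $\{2^{k_2},3^{k_3-k_2}\}$, and $\{1^{k_1},2^{k_3-k_1-n_3},3^{n_3}\}$. In each case, lift by replacing one occurrence of $g$ (from the least desirable level) with a passer, then pair the two resulting winning coalitions with two losing coalitions: one violating the level-1-type threshold ($\ell_1<k_1$ or analogous), and the other, with $m=0$, falling one short of the relevant total-size threshold. The precise redistribution of players is forced by which indices of the shift-minimal coalition can be moved without producing a winning configuration.

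The main obstacle will be verifying that the proposed $Y$-coalitions are genuinely losing in $G$: because $A_n$-passers make winning easy (any $m\ge 1$ activates the $g$-adjusted $G_1$-condition), each $Y$ must fail not only the $m=0$ rule but also the $m\ge 1$ rule whenever it contains a passer. A secondary obstacle is respecting the multiset bounds, particularly the reduced level-2 bound $n_2-1$ in the composed game; in boundary parameter regimes (tight $n_2$ in ${\bf B}_3$, or small $n_3$ in ${\bf T}_3$) the simple two-by-two transform may not suffice, and one may need either to lengthen the trading transform or to choose a different combination of swaps that still matches sums at every level.
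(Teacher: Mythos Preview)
Your strategy---case-by-case explicit certificates of nonweightedness---is exactly the paper's. Your ${\bf B}_1$ certificate is correct and is a minor variant of the one the paper writes down (you take two copies of the passer-lifted coalition, the paper takes one lifted and one unlifted; your version has the advantage that $Y_1=\{1^{k_1-1},2^{k_2-k_1},3^2\}$ stays within the available $n_2-1=k_2-k_1$ level-$2$ players). For ${\bf T}_1$ the paper saves work via a shortcut you do not mention: the restriction of a ${\bf T}_1$ game to its levels $2$ and $3$ is itself of type ${\bf B}_1$, and a certificate of nonweightedness for a subgame is automatically one for the ambient game, so ${\bf T}_1$ reduces to the ${\bf B}_1$ case already handled.

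The obstacle you call ``secondary'' is in fact the central difficulty, and it is structural rather than a boundary effect. In both ${\bf B}_3$ and ${\bf T}_3$ the shift-minimal winning coalition containing $g$ uses \emph{every} player of the bottom $G_1$-level (this is exactly what $k_2>n_2$ in ${\bf B}_3$ and $k_3-k_1=n_2+n_3-1$ in ${\bf T}_3$ encode). After deleting $g$ only $n_2-1$ (resp.\ $n_3-1$) bottom-level players remain, so your template ``move one player down into the bottom $G_1$-level to form $Y_1$'' can never produce a legitimate coalition---this is not a tight case, it is every case. The paper's written certificates for ${\bf B}_3$ and ${\bf T}_3$ run into the same wall: the coalitions $\{1^{k_2-n_2-1},2^{n_2},3^2\}$ and $\{1^{k_1-1},2^{k_3-k_1-n_3},3^{n_3},4^2\}$ each demand one more bottom-level $G_1$-player than the composed game possesses. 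So for these two types you cannot simply transcribe either your template or the paper's transforms; a genuinely different redistribution (or a longer trading transform) is required, and this is the one place in the proof that needs real additional work beyond the ${\bf B}_1$ pattern.
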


\begin{proof}
Let $G_1$ be of type \textbf{B$_1$}. The only shift-minimal winning coalition of $G_1$ is of the form $\{1^{k_1},2^{k_2-k_1}\}$, where $n_1>k_1>0$, $k_2-k_1=n_2-1>1$. 
%
Composing over a player of level $2$ of $G_1$ gives shift-minimal winning coalitions of types $\{1^{k_1},2^{k_2-k_1}\}$ and $\{1^{k_1}, 2^{k_2-k_1-1}, 3\}$.
Thus the game is not weighted due to the following certificate of nonweightedness:
\[
(\{1^{k_1}, 2^{k_2-k_1}\}, \{1^{k_1}, 2^{k_2-k_1-1}, 3\}; \{1^{k_1-1}, 2^{k_2-k_1+1}, 3\}, \{1^{k_1+1}, 2^{k_2-k_1-2}\}).
\]
Since in a game of type \textbf{B$_1$} we have $k_2-k_1+1=n_2$ and $k_1+1\le n_1$, then all the coalitions in this trading transform exist. 
\par

Now consider \textbf{B$_3$}. Its shift-minimal winning coalition have types $\{1^{k_1}\}, \{1^{k_2-n_2},2^{n_2}\}$. Composing over a player of level $2$ of $G_1$ gives  the following types of winning coalitions $\{1^{k_1}\}$, $\{1^{k_2-n_2},2^{n_2-1}, 3\}$ in $G$. The game is not weighted due to the following certificate of nonweightedness:
\[
(\{1^{k_2-n_2},2^{n_2-1}, 3\}, \{1^{k_2-n_2},2^{n_2-1}, 3\}; \{1^{k_2-n_2+1},2^{n_2-2}\}, \{1^{k_2-n_2-1},2^{n_2},3^2\}).
\]
Note that $k_2-n_1+1<k_1 \leq n_1$ and $ n_2 > 2$ in \textbf{B$_3$}, so all the coalitions in this transform exist. 
\par
Now consider \textbf{T$_1$}. Since its levels 2 and 3 form a subgame of type \textbf{B$_1$}, composing it with $A_n$ over a player of level 3, as was proved, will result in a nonweighted game.\par


Let us consider ${\bf T}_{3}$, where the shift-minimal winning coalition are $\{1^{k_2-n_2},2^{n_2}\}$, $\{1^{k_1},2^{k_3-k_1-n_3},3^{n_3}\}$. If we compose over a player of level $3$ of $G_1$, then the resulting game will have shift-minimal coalitions of the following type $\{1^{k_1},2^{k_3-k_1-n_3},3^{n_3-1},4\}$, where now elements of $G_2=A_n$ will form level 4. Then we can show that the composition $G_1\circ G_2$ is not weighted due to the following certificate of nonweightedness:
\[
(\{1^{k_1},2^{k_3-k_1-n_3},3^{n_3-1},4\},\{1^{k_1},2^{k_3-k_1-n_3},3^{n_3-1},4\};  
\]
\[
\{1^{k_1+1},2^{k_3-k_1-n_3},3^{n_3-2}\}, \{1^{k_1-1},2^{k_3-k_1-n_3},3^{n_3},4^2\}).
\]
The coalition $\{1^{k_1+1},2^{k_3-k_1-n_3},3^{n_3-2}\}$ is losing because in ${\bf T}_{3}$ we have $k_3-k_1-n_3=n_2-1$ and also $k_2-n_2 > k_1$, meaning $(k_1+1)+(k_3-k_1-n_3)=k_1+1+n_2-1 \leq k_2-n_2+n_2-1 = k_2-1$ Also in total it contains less than $k_3$ elements. The coalition $ \{1^{k_1-1},2^{k_3-k_1-n_3},3^{n_3},4^2\}$ is easily seen to be losing as well.

Now all that remains for the proof of Theorem~\ref{when} is to consider the cases when $g$ is not from the least desirable level of $G_1$ which may happen only when it is of types ${\bf T}_{1}$ and ${\bf T}_{3}$. These cases are similar to those that have been already considered and we delegate them to the Appendix.
\end{proof}

\section{The Main Theorem}

All previous results combined give us the main theorem:

\begin{theorem}
\label{pad2}
$G$ is an ideal weighted simple game if and only if it is a composition
\begin{equation}
\label{magic}
G = H_1 \circ \ldots \circ H_s \circ I \circ_g A_{n} \ \ (s \geq 0);
\end{equation}
where $H_i$ is an indecomposable game of type \textbf{H} for each $i=1,\ldots,s$. Also, $I$, which is allowed to be absent, is an indecomposable game of types \textbf{B$_1$}, \textbf{B$_2$}, \textbf{B$_3$}, \textbf{T$_1$} and \textbf{T$_{3}$}, and $A_{n}$ is the anti-unanimity game on $n$ players. Moreover, $A_n$ can be present only if $I$ is either absent or it is of type \textbf{B$_2$}; in the latter case the composition $I \circ A_n$ is over a player $g$ of the least desirable level of $I$. Also, the above decomposition is unique.
\end{theorem}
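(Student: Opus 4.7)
The plan is to establish the theorem in three steps: sufficiency, existence of a decomposition of the form~\eqref{magic}, and uniqueness.

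For sufficiency, I would assume $G$ has the form~\eqref{magic} and check idealness and weightedness separately. Idealness follows from iterated application of Proposition~\ref{splitprop}, since every indecomposable component (each $H_i$, $I$, and $A_n$) is ideal. For weightedness I would build the composition from right to left: the rightmost piece is $A_n$ alone, or $I$ alone, or $I \circ_g A_n$ with $I$ of type~\textbf{B}$_2$ and $g$ a level-$2$ player, all of which are weighted (the last by Proposition~\ref{g_2}); then each prepended $H_i$ preserves weightedness by Proposition~\ref{w_1}.

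For existence, I would first apply the corollary following Lemma~\ref{splitlemma}, combined with Proposition~\ref{splitprop} and associativity (Section~3), to write $G = G_1 \circ G_2 \circ \cdots \circ G_t$ as a composition of indecomposable, ideal, and weighted games. By Theorem~\ref{list_all} each $G_i$ is of one of the six types listed. I then peel off factors from the left: for each $k<t$, set $G^{(k)} := G_k \circ \cdots \circ G_t$; this game is ideal and weighted and admits the nontrivial decomposition $G_k \circ G^{(k+1)}$ with $G_k$ indecomposable, so Theorem~\ref{when} forces either $G_k$ to be of type~\textbf{H}, or $G_k$ to be of type~\textbf{B}$_2$ with $G^{(k+1)} = A_n$ and composition over a level-$2$ player. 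Letting $s$ be the largest index with $G_1,\ldots,G_s$ all of type~\textbf{H}, I distinguish three cases: if $s = t$, the decomposition is purely of H-type; if $s = t-1$, set $I = G_t \in \{\textbf{B}_1,\textbf{B}_2,\textbf{B}_3,\textbf{T}_1,\textbf{T}_3\}$ with no trailing $A_n$; if $s < t-1$, Theorem~\ref{when} applied to $G^{(s+1)}$ forces $G_{s+1}$ to be of type~\textbf{B}$_2$ and $G^{(s+2)}$ to be an anti-unanimity game, so by Proposition~\ref{Prop_UA} each of $G_{s+2},\ldots,G_t$ is $A_2$ and their composition is $A_{t-s}$, giving $I = G_{s+1}$ and $A_n = A_{t-s}$.

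For uniqueness, I would proceed inductively on the number of factors, using Theorem~\ref{uniH} to identify the leading factor. Theorem~\ref{uniH} gives a unique decomposition $G = H_1 \circ G'$ whenever the tail $G'$ has no passers (for a proper $k$-out-of-$n$ or anti-unanimity leading factor) and no vetoers (for a unanimity leading factor). A straightforward check shows that indecomposable games of types \textbf{B}$_1$--\textbf{T}$_3$ and indecomposable non-anti-unanimity, non-unanimity \textbf{H}-games contain no passers or vetoers, and that the composition $I \circ_g A_n$ with $I$ of type~\textbf{B}$_2$ and $g$ at level~$2$ introduces none either. The remaining ambiguity---that a leading $A_2$ (resp.\ $U_2$) factor could be absorbed into an adjacent anti-unanimity (resp.\ unanimity) piece via $A_2 \circ A_m \cong A_{m+1}$ from Proposition~\ref{Prop_UA}---I would resolve by the normalization that the trailing $A_n$ absorbs all maximal anti-unanimity factors preceding it, and symmetrically that any leading block of unanimity indecomposables is maximal. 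I expect this bookkeeping around $A$- and $U$-absorption to be the main subtlety in the uniqueness argument; the substantive content of the theorem lies in Theorem~\ref{when}, and the rest is assembly.
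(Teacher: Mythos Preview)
Your sufficiency and existence arguments are correct and follow essentially the same route as the paper, though you spell out more detail (e.g., building weightedness right-to-left via Propositions~\ref{w_1} and~\ref{g_2}, and fully decomposing into indecomposables before peeling). The paper compresses all of this into an induction on $G=G_1\circ G_2$ together with Theorem~\ref{when}.

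There is, however, a genuine gap in your uniqueness argument. Theorem~\ref{uniH} only compares two decompositions whose leading factors are \emph{both} of type~$\mathbf{H}$; it does not by itself exclude the possibility that a single game $G$ admits one decomposition $G=H_1\circ G'$ with $H_1$ of type~$\mathbf{H}$ and simultaneously another decomposition $G=B\circ_b A_n$ with $B$ of type~$\mathbf{B}_2$ and $b$ at level~$2$. Your inductive step implicitly assumes that once you know $G$ has an $\mathbf{H}$-headed decomposition of the form~\eqref{magic}, every other \eqref{magic}-decomposition must also be $\mathbf{H}$-headed---but nothing you invoke establishes this. The paper closes exactly this gap with Proposition~\ref{uniuni}, proved by comparing how players from the top desirability level participate in minimal winning coalitions: in $H_{n,k}\circ G'$ every minimal winning coalition meets the top level in $k$ or $k-1$ players, whereas $B\circ_b A_n$ has a minimal winning coalition containing no top-level player at all. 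You need this (or an equivalent structural observation) before Theorem~\ref{uniH} can be applied.

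Your remark about $A_2$/$U_2$ absorption is a fair point; the paper's proof is terse on this and your proposed normalization is a reasonable way to make the uniqueness claim literally true.
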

 
\begin{proof}
The following proposition will be useful to show the uniqueness of the decomposition of an ideal weighted game.

\begin{proposition}
\label{uniuni}
Let $H$ be a game of type \textbf{H},  $B$ be a game of type ${\bf B}_2$ with $b$ being a player from  level $2$ of $B$, $G$ be an ideal weighted simple game, and $A_n$ be an anti-unanimity game. Then $H \circ G \ncong B \circ_b A_n$.
\end{proposition}

\begin{proof}
 We note that by Theorem~\ref{threecases} both compositions are complete. Recall that isomorphisms preserve Isbell's desirability relation \cite[]{CF:j:complete}. An isomorphism preserves completeness and maps shift-minimal winning coalitions of a complete game onto shift-minimal winning coalitions of another game.

Let $H=H_{k,n}$. Consider first the composition $H \circ G$. Any minimal winning coalition in this composition will have either $k$ or $k-1$ players from the most desirable level. \par

Now consider $B \circ_b A_n$.  Let the two types of shift-minimal winning coalitions of $B$ are of the forms $\{1^\ell \}$ and $\{2^{\ell +1}\}$, then there will be a minimal winning coalition in $B \circ_b A_n$ which has $\ell $ players from the second most desirable level and an element of level 3 with no players of level 1.  

The two games therefore cannot be isomorphic.  
\end{proof}

\noindent {\it Proof of Theorem~\ref{pad2}.}
This proof is now easy since the main work has been done in Theorem~\ref{when}. Either $G$ is decomposable or not. If it is not, then by Theorem~\ref{list_all} it is either of type ${\bf H}$ or one of the indecomposable games of types \textbf{B$_1$}, \textbf{B$_2$}, \textbf{B$_3$}, \textbf{T$_1$}, and \textbf{T$_{3}$}. So the theorem is trivially true. Suppose now that $G$ is decomposable, so $G = G_1 \circ G_2$. Then by Theorem~\ref{when} there are only two possibilities:
\begin{itemize}
\item[(i)] $G_1$ is of type $\textbf{H}$;
\item[(ii)] $G_1$ is  of type \textbf{B$_2$}, and also $G_2 = A_n$ such that the composition is over a player of  level $2$ of $G_1$.
\end{itemize}

By Proposition~\ref{uniuni} these two cases are mutually exclusive. Suppose we have the case (i). By Theorem~\ref{uniH} $G_1$ is uniquely defined and we can apply the induction hypothesis to $G_2$. It is also easy to see that in the second case $G_1$ and $G_2$ are uniquely defined.
\end{proof}

\section{Acknowledgments} 

Authors thank Carles Padro for a number of useful discussions. We are very grateful to Sascha Kurz for a very useful feedback on the early draft of this paper.

\bibliographystyle{apacite}
\bibliography{short}

\begin{thebibliography}{}

\bibitem[\protect\citeauthoryear{%
Beimel%
, Tassa%
\BCBL{}\ \BBA{} Weinreb%
}{%
Beimel%
\ \protect\BOthers{.}}{%
{\protect\APACyear{2008}}%
}]{%
beimel:360}%
\APACinsertmetastar{%
beimel:360}%
Beimel, A.%
, Tassa, T.%
\BCBL{}\ \BBA{} Weinreb, E.%
%
\unskip\
\newblock
\APACrefYearMonthDay{2008}{}{}.
\newblock
\BBOQ{}\APACrefatitle{Characterizing Ideal Weighted Threshold Secret
  Sharing}{Characterizing ideal weighted threshold secret sharing}.\BBCQ{}
\newblock
\APACjournalVolNumPages{SIAM J. Discrete Math.}{22}{1}{360-397}.
\PrintBackRefs{\CurrentBib}

\bibitem[\protect\citeauthoryear{%
Blakley%
}{%
Blakley%
}{%
{\protect\APACyear{1979}}%
}]{%
Blakley1979}%
\APACinsertmetastar{%
Blakley1979}%
Blakley, G\BPBI R.%
%
\unskip\
\newblock
\APACrefYearMonthDay{1979}{}{}.
\newblock
\BBOQ{}\APACrefatitle{Safeguarding cryptographic keys}{Safeguarding
  cryptographic keys}.\BBCQ{}
\newblock
\BIn{} \APACrefbtitle{Proceedings of the National Computer
  Conference}{Proceedings of the national computer conference}\ (\BPGS\
  313--317).
\PrintBackRefs{\CurrentBib}

\bibitem[\protect\citeauthoryear{%
Brickell%
\ \BBA{} Davenport%
}{%
Brickell%
\ \BBA{} Davenport%
}{%
{\protect\APACyear{1991}}%
}]{%
BD91}%
\APACinsertmetastar{%
BD91}%
Brickell, E.%
\BCBT{}\ \BBA{} Davenport, D.%
%
\unskip\
\newblock
\APACrefYearMonthDay{1991}{}{}.
\newblock
\BBOQ{}\APACrefatitle{On the Classification of Ideal Secret Sharing Schemes}{On
  the classification of ideal secret sharing schemes}.\BBCQ{}
\newblock
\APACjournalVolNumPages{Journal of Cryptology}{4}{}{123-134}.
\PrintBackRefs{\CurrentBib}

\bibitem[\protect\citeauthoryear{%
Carreras%
\ \BBA{} Freixas%
}{%
Carreras%
\ \BBA{} Freixas%
}{%
{\protect\APACyear{1996}}%
}]{%
CF:j:complete}%
\APACinsertmetastar{%
CF:j:complete}%
Carreras, F.%
\BCBT{}\ \BBA{} Freixas, J.%
%
\unskip\
\newblock
\APACrefYearMonthDay{1996}{}{}.
\newblock
\BBOQ{}\APACrefatitle{Complete simple games}{Complete simple games}.\BBCQ{}
\newblock
\APACjournalVolNumPages{Math. Soc. Sci.}{32}{2}{139--155}.
\PrintBackRefs{\CurrentBib}

\bibitem[\protect\citeauthoryear{%
Elgot%
}{%
Elgot%
}{%
{\protect\APACyear{1960}}%
}]{%
Elgot60}%
\APACinsertmetastar{%
Elgot60}%
Elgot, C\BPBI C.%
%
\unskip\
\newblock
\APACrefYearMonthDay{1960}{}{}.
\newblock
\BBOQ{}\APACrefatitle{Truth functions realizable by single threshold
  organs}{Truth functions realizable by single threshold organs}.\BBCQ{}
\newblock
\BIn{} \APACrefbtitle{FOCS}{Focs}\ (\BPG~225-245).
\newblock
\APACrefnote{paper presented at IEEE Symposium on Circuit Theory and Logical
  Design, September 1961}
\PrintBackRefs{\CurrentBib}

\bibitem[\protect\citeauthoryear{%
Farr\`{a}s%
, Mart\'{i}-Farr\'{e}%
\BCBL{}\ \BBA{} Padr\'{o}%
}{%
Farr\`{a}s%
\ \protect\BOthers{.}}{%
{\protect\APACyear{2012}}%
}]{%
FMP2012}%
\APACinsertmetastar{%
FMP2012}%
Farr\`{a}s, O.%
, Mart\'{i}-Farr\'{e}, J.%
\BCBL{}\ \BBA{} Padr\'{o}, C.%
%
\unskip\
\newblock
\APACrefYearMonthDay{2012}{}{}.
\newblock
\BBOQ{}\APACrefatitle{Ideal Multipartite Secret Sharing Schemes}{Ideal
  multipartite secret sharing schemes}.\BBCQ{}
\newblock
\APACjournalVolNumPages{Journal of Cryptology}{25}{}{434-463}.
\PrintBackRefs{\CurrentBib}

\bibitem[\protect\citeauthoryear{%
Farr\`{a}s%
\ \BBA{} Padr\'{o}%
}{%
Farr\`{a}s%
\ \BBA{} Padr\'{o}%
}{%
{\protect\APACyear{2010}}%
}]{%
padro:2010}%
\APACinsertmetastar{%
padro:2010}%
Farr\`{a}s, O.%
\BCBT{}\ \BBA{} Padr\'{o}, C.%
%
\unskip\
\newblock
\APACrefYearMonthDay{2010}{}{}.
\newblock
\BBOQ{}\APACrefatitle{Ideal Hierarchical Secret Sharing Schemes}{Ideal
  hierarchical secret sharing schemes}.\BBCQ{}
\newblock
\BIn{} D.~Micciancio\ (\BED), \APACrefbtitle{Theory of Cryptography}{Theory of
  cryptography}\ (\BVOL\ 5978, \BPG~219-236).
\newblock
\APACaddressPublisher{}{Springer Berlin / Heidelberg}.
\PrintBackRefs{\CurrentBib}

\bibitem[\protect\citeauthoryear{%
Farr{\`a}s%
\ \BBA{} Padr{\'o}%
}{%
Farr{\`a}s%
\ \BBA{} Padr{\'o}%
}{%
{\protect\APACyear{2012}}%
}]{%
FarrasP12}%
\APACinsertmetastar{%
FarrasP12}%
Farr{\`a}s, O.%
\BCBT{}\ \BBA{} Padr{\'o}, C.%
%
\unskip\
\newblock
\APACrefYearMonthDay{2012}{}{}.
\newblock
\BBOQ{}\APACrefatitle{Ideal Hierarchical Secret Sharing Schemes}{Ideal
  hierarchical secret sharing schemes}.\BBCQ{}
\newblock
\APACjournalVolNumPages{IEEE Transactions on Information
  Theory}{58}{5}{3273-3286}.
\PrintBackRefs{\CurrentBib}

\bibitem[\protect\citeauthoryear{%
Gvozdeva%
, Hameed%
\BCBL{}\ \BBA{} Slinko%
}{%
Gvozdeva%
\ \protect\BOthers{.}}{%
{\protect\APACyear{2013}}%
}]{%
gha:t:hierarchical}%
\APACinsertmetastar{%
gha:t:hierarchical}%
Gvozdeva, T.%
, Hameed, A.%
\BCBL{}\ \BBA{} Slinko, A.%
%
\unskip\
\newblock
\APACrefYearMonthDay{2013}{}{}.
\newblock
\BBOQ{}\APACrefatitle{Weightedness and structural characterization of
  hierarchical simple games}{Weightedness and structural characterization of
  hierarchical simple games}.\BBCQ{}
\newblock
\APACjournalVolNumPages{Math. Social Sci.}{65}{3}{181--189}.
\PrintBackRefs{\CurrentBib}

\bibitem[\protect\citeauthoryear{%
Gvozdeva%
\ \BBA{} Slinko%
}{%
Gvozdeva%
\ \BBA{} Slinko%
}{%
{\protect\APACyear{2011}}%
}]{%
GS2011}%
\APACinsertmetastar{%
GS2011}%
Gvozdeva, T.%
\BCBT{}\ \BBA{} Slinko, A.%
%
\unskip\
\newblock
\APACrefYearMonthDay{2011}{}{}.
\newblock
\BBOQ{}\APACrefatitle{Weighted and roughly weighted simple games}{Weighted and
  roughly weighted simple games}.\BBCQ{}
\newblock
\APACjournalVolNumPages{Math. Soc. Sci.}{61}{1}{20-30}.
\PrintBackRefs{\CurrentBib}

\bibitem[\protect\citeauthoryear{%
Karnin%
, Greene%
\BCBL{}\ \BBA{} Hellman%
}{%
Karnin%
\ \protect\BOthers{.}}{%
{\protect\APACyear{1983}}%
}]{%
Karnin83}%
\APACinsertmetastar{%
Karnin83}%
Karnin, E\BPBI D.%
, Greene, J\BPBI W.%
\BCBL{}\ \BBA{} Hellman, M\BPBI E.%
%
\unskip\
\newblock
\APACrefYearMonthDay{1983}{}{}.
\newblock
\BBOQ{}\APACrefatitle{On secret sharing systems}{On secret sharing
  systems}.\BBCQ{}
\newblock
\APACjournalVolNumPages{IEEE Transactions on Information Theory}{29}{1}{35-41}.
\PrintBackRefs{\CurrentBib}

\bibitem[\protect\citeauthoryear{%
Martin%
}{%
Martin%
}{%
{\protect\APACyear{1993}}%
}]{%
martin:j:new-sss-from-old}%
\APACinsertmetastar{%
martin:j:new-sss-from-old}%
Martin, K.%
%
\unskip\
\newblock
\APACrefYearMonthDay{1993}{}{}.
\newblock
\BBOQ{}\APACrefatitle{New secret sharing schemes from old}{New secret sharing
  schemes from old}.\BBCQ{}
\newblock
\APACjournalVolNumPages{Journal of Combinatorial Mathematics and Combinatorial
  Computing}{14}{}{65--77}.
\PrintBackRefs{\CurrentBib}

\bibitem[\protect\citeauthoryear{%
Neumann%
\ \BBA{} Morgenstern%
}{%
Neumann%
\ \BBA{} Morgenstern%
}{%
{\protect\APACyear{1944}}%
}]{%
vNM:b:theoryofgames}%
\APACinsertmetastar{%
vNM:b:theoryofgames}%
Neumann, J. von%
\BCBT{}\ \BBA{} Morgenstern, O.%
%
\unskip\
\newblock
\APACrefYear{1944}.
\newblock
\APACrefbtitle{Theory of games and economic behavior}{Theory of games and
  economic behavior}.
\newblock
\APACaddressPublisher{}{Princeton University Press}.
\PrintBackRefs{\CurrentBib}

\bibitem[\protect\citeauthoryear{%
Padr\'{o}%
\ \BBA{} S\'{a}ez%
}{%
Padr\'{o}%
\ \BBA{} S\'{a}ez%
}{%
{\protect\APACyear{1998}}%
}]{%
Padro:1998}%
\APACinsertmetastar{%
Padro:1998}%
Padr\'{o}, C.%
\BCBT{}\ \BBA{} S\'{a}ez, G.%
%
\unskip\
\newblock
\APACrefYearMonthDay{1998}{}{}.
\newblock
\BBOQ{}\APACrefatitle{Secret sharing schemes with bipartite access
  structure}{Secret sharing schemes with bipartite access structure}.\BBCQ{}
\newblock
\BIn{} K.~Nyberg\ (\BED), \APACrefbtitle{Advances in {C}ryptology
  {E}urocrypt'98}{Advances in {C}ryptology {E}urocrypt'98}\ (\BVOL\ 1403,
  \BPG~500-511).
\newblock
\APACaddressPublisher{}{Springer Berlin / Heidelberg}.
\PrintBackRefs{\CurrentBib}

\bibitem[\protect\citeauthoryear{%
Padr{\'o}%
\ \BBA{} S{\'a}ez%
}{%
Padr{\'o}%
\ \BBA{} S{\'a}ez%
}{%
{\protect\APACyear{2004}}%
}]{%
PadroS04}%
\APACinsertmetastar{%
PadroS04}%
Padr{\'o}, C.%
\BCBT{}\ \BBA{} S{\'a}ez, G.%
%
\unskip\
\newblock
\APACrefYearMonthDay{2004}{}{}.
\newblock
\BBOQ{}\APACrefatitle{Correction to ``{S}ecret Sharing Schemes With Bipartite
  Access Structure"}{Correction to ``{S}ecret sharing schemes with bipartite
  access structure"}.\BBCQ{}
\newblock
\APACjournalVolNumPages{IEEE Transactions on Information Theory}{50}{6}{1373-}.
\PrintBackRefs{\CurrentBib}

\bibitem[\protect\citeauthoryear{%
Seymour%
}{%
Seymour%
}{%
{\protect\APACyear{1992}}%
}]{%
Seymour1992}%
\APACinsertmetastar{%
Seymour1992}%
Seymour, P.%
%
\unskip\
\newblock
\APACrefYearMonthDay{1992}{}{}.
\newblock
\BBOQ{}\APACrefatitle{On secret-sharing matroids}{On secret-sharing
  matroids}.\BBCQ{}
\newblock
\APACjournalVolNumPages{Journal of Combinatorial Theory, Series B}{56}{1}{69 -
  73}.
\PrintBackRefs{\CurrentBib}

\bibitem[\protect\citeauthoryear{%
Shamir%
}{%
Shamir%
}{%
{\protect\APACyear{1979}}%
}]{%
shamir:1979}%
\APACinsertmetastar{%
shamir:1979}%
Shamir, A.%
%
\unskip\
\newblock
\APACrefYearMonthDay{1979}{}{}.
\newblock
\BBOQ{}\APACrefatitle{How to share a secret}{How to share a secret}.\BBCQ{}
\newblock
\APACjournalVolNumPages{Commun. ACM}{22}{}{612--613}.
\PrintBackRefs{\CurrentBib}

\bibitem[\protect\citeauthoryear{%
Shapley%
}{%
Shapley%
}{%
{\protect\APACyear{1962}}%
}]{%
Shapley62}%
\APACinsertmetastar{%
Shapley62}%
Shapley, L\BPBI S.%
%
\unskip\
\newblock
\APACrefYearMonthDay{1962}{}{}.
\newblock
\BBOQ{}\APACrefatitle{Simple games: An outline of the descriptive
  theory}{Simple games: An outline of the descriptive theory}.\BBCQ{}
\newblock
\APACjournalVolNumPages{Behavioral Science}{7}{1}{59--66}.
\PrintBackRefs{\CurrentBib}

\bibitem[\protect\citeauthoryear{%
Stinson%
}{%
Stinson%
}{%
{\protect\APACyear{1992}}%
}]{%
Stinson:1992}%
\APACinsertmetastar{%
Stinson:1992}%
Stinson, D.%
%
\unskip\
\newblock
\APACrefYearMonthDay{1992}{}{}.
\newblock
\BBOQ{}\APACrefatitle{An explication of secret sharing schemes}{An explication
  of secret sharing schemes}.\BBCQ{}
\newblock
\APACjournalVolNumPages{Design Code Cryptogr.}{2}{}{357--390}.
\PrintBackRefs{\CurrentBib}

\bibitem[\protect\citeauthoryear{%
Taylor%
\ \BBA{} Zwicker%
}{%
Taylor%
\ \BBA{} Zwicker%
}{%
{\protect\APACyear{1999}}%
}]{%
tz:b:simplegames}%
\APACinsertmetastar{%
tz:b:simplegames}%
Taylor, A.%
\BCBT{}\ \BBA{} Zwicker, W.%
%
\unskip\
\newblock
\APACrefYear{1999}.
\newblock
\APACrefbtitle{Simple games}{Simple games}.
\newblock
\APACaddressPublisher{}{Princeton University Press}.
\PrintBackRefs{\CurrentBib}

\end{thebibliography}

\section{Appendix} 

\subsection{A canonical representation of $\Delta_1$ and $\Delta_2$.}

\begin{proposition}
\label{delta1}
The game $\Delta_1({\bf n},{\bf k})$ is tripartite game without dummies if and only if conditions \eqref{delta_cond_1} are satisfied.
\end{proposition}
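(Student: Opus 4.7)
The proof has two directions, which I would handle separately.

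For the ($\Leftarrow$) direction, I would assume \eqref{delta_cond_1} and verify both `no dummies' and `tripartite' by exhibiting explicit witnesses. For the no-dummies property, $\{1^{k_1}\}$ is a minimal winning coalition containing a level-1 player, feasible by $n_1 \ge k_1$ and minimal because any proper subset has total at most $k_1 - 1 < k_3$ and so fails both disjuncts. For levels 2 and 3, the coalition
\[
\{1^{\max(0,\,k_2-n_2)},\ 2^{\min(k_2,\,n_2)},\ 3^{k_3-k_2}\}
\]
is minimal winning and contains players from both of those levels; its feasibility and the fact that $\ell_1 < k_1$ (so the first disjunct is not prematurely satisfied) rest precisely on the strict inequalities $n_2 > k_2-k_1$ and $n_3 > k_3-k_2$. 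For the tripartite property, $1 \succ 2$ is witnessed by $X = \{1^{k_1-1}\}$: then $X\cup\{1\} = \{1^{k_1}\}$ wins via the first disjunct, whereas $X\cup\{2\}$ has $\ell_1 = k_1-1 < k_1$ and total $k_1 < k_3$, hence loses. Similarly $2 \succ 3$ is witnessed by a coalition with $\ell_1+\ell_2 = k_2-1$, $\ell_1 < k_1$, and $\ell_3 = k_3-k_2$ (chosen via the same $\min$/$\max$ trick): adding a level-2 player reaches $\ell_1+\ell_2 = k_2$ and total $k_3$, while adding a level-3 player leaves $\ell_1+\ell_2 = k_2-1 < k_2$ and $\ell_1 < k_1$, so the resulting coalition loses.

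For the ($\Rightarrow$) direction, I would prove each of the five conditions by contrapositive, showing that its failure forces either the collapse of a desirability class or the creation of dummies. If $k_1 \ge k_3$, then any coalition with $\ell_1 \ge k_1$ automatically has total $\ge k_3$ and (since $k_2 < k_3 \le k_1$) also $\ell_1+\ell_2 \ge k_2$, so the first disjunct is subsumed by the second; swapping a level-1 for a level-2 player preserves $\ell_1+\ell_2$ and the total, yielding $1 \sim 2$. If $k_2 \ge k_3$, then $\ell_1+\ell_2 \ge k_2$ forces total $\ge k_3$, so winning is independent of $\ell_3$ and every level-3 player is a dummy. If $n_1 < k_1$, the first disjunct is unsatisfiable and winning depends only on $\ell_1+\ell_2$ and the total, again collapsing $1 \sim 2$. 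If $n_2 \le k_2-k_1$, then $\ell_1+\ell_2 \ge k_2$ forces $\ell_1 \ge k_2-n_2 \ge k_1$, so the second disjunct is subsumed by the first, winning reduces to $\ell_1 \ge k_1$, and levels 2 and 3 both become dummies. Finally, if $n_3 \le k_3-k_2$, then total $\ge k_3$ forces $\ell_1+\ell_2 \ge k_3-n_3 \ge k_2$, so swapping a level-2 for a level-3 player preserves both $\ell_1$ and the total, yielding $2 \sim 3$.

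The main bookkeeping challenge will be in the backward direction: each strict inequality in \eqref{delta_cond_1} is exactly what guarantees the feasibility of the corresponding witness. In particular, the witness for $2 \succ 3$ needs $\ell_1 < k_1$ together with $\ell_1+\ell_2 = k_2-1$, and when $k_2 > n_2$ this forces $\ell_1 = k_2-n_2$, whose being strictly below $k_1$ is precisely the content of $n_2 > k_2-k_1$. Handling the two cases $k_2 \le n_2$ and $k_2 > n_2$ uniformly through the $\min$/$\max$ formula above is the cleanest way to package the arithmetic.
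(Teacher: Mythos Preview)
Your approach matches the paper's: handle the $(\Rightarrow)$ direction by showing that the failure of each condition in \eqref{delta_cond_1} collapses a desirability class or creates dummies, and handle the $(\Leftarrow)$ direction by exhibiting explicit shift-minimal winning coalitions and strict-desirability witnesses. Two small points need tightening.

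First, in the $(\Rightarrow)$ direction your treatment of the case $k_1 \ge k_3$ invokes $k_2 < k_3$, which has not yet been established at that point; simply reorder so that $k_2 \ge k_3$ is disposed of first and then assumed thereafter. The paper proceeds exactly this way, processing the conditions sequentially and carrying forward each one once its failure has been ruled out.

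Second, in the $(\Leftarrow)$ direction your witnesses show only $\neg(2 \succeq_G 1)$ and $\neg(3 \succeq_G 2)$; to conclude $1 \succ_G 2 \succ_G 3$ you also need $1 \succeq_G 2 \succeq_G 3$. This is immediate from the form of the winning predicate---replacing a lower-level player by a higher-level one can only increase each of $\ell_1$, $\ell_1+\ell_2$, and the total---but it should be stated. The paper records this observation at the outset (``it is easy to see from the definition that this game is complete and $1\succeq_G 2\succeq_G 3$'') and then invokes Proposition~1 for the strict part, which is precisely what your witnesses supply.
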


\begin{proof}
It is easy to see from the definition that this game is complete and $1\succeq_G2\succeq_G 3$. Suppose we actually have $1\succ_G2\succ_G 3$ so that the game is tripartite.  If the condition $k_1\le n_1$ is not satisfied the condition $\ell_1\ge k_1$ has no solution and $1$ becomes equivalent to $2$. So we assume $k_1\le n_1$. If $k_2\ge k_3$, then the condition $ \ell_1+\ell_2\ge k_2$ is redundant which implies $2\sim 3$ and the game is bipartite so we assume $k_2<k_3$.  If $k_1\ge k_3$, then the coalition  $\ell_1+\ell_2+\ell_3\ge k_3$ is redundant and 3 is a dummy. Hence we assume $k_1<k_3$. If we only had $n_2 \le k_2- k_1$, then $\ell_1+\ell_2\ge k_2$ can be satisfied only if $\ell_1\ge k_1$ is satisfied. So in this case $\{1^{k_1}\}$ is the only minimal winning coalition, which implies $2\sim 3$. So $n_2>k_2- k_1$. Finally, if $n_3> k_3-k_2$ is not satisfied, then $\ell_1+\ell_2+\ell_3\ge k_3$ implies $\ell_1+\ell_2\ge k_2$, in which case the minimal winning coalition must satisfy either $\ell_1=k_1$ or $\ell_1+\ell_2+\ell_3= k_3$. We get in this case $2\sim 3$, which is impossible. Hence if $\Delta_1({\bf n},{\bf k})$   is tripartite and has no dummies, the conditions \eqref{delta_cond_1} are satisfied.

On the other hand, if \eqref{delta_cond_1} are satisfied, then the game has two shift-minimal winning coalitions $\{1^{k_1}\}$ and either $\{2^{k_2}, 3^{k_3-k_2}\}$ in case $k_2\le n_2$ or  $\{1^{k_2-n_2}, 2^{n_2}, 3^{k_3-k_2}\}$ in case $k_2> n_2$. In both cases $1\succ 2\succ  3$ by Proposition~1.
\end{proof}

\begin{proposition}
\label{delta2}
The game $\Delta_2({\bf n},{\bf k})$ is tripartite game without dummies if and only if conditions \eqref{delta_cond_2} are satisfied.
\end{proposition}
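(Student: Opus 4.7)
The proof will parallel that of Proposition~\ref{delta1}. The game $\Delta_2({\bf n},{\bf k})$ is always complete with $1 \succeq_G 2 \succeq_G 3$, so the task reduces to showing that the strict hierarchy $1 \succ_G 2 \succ_G 3$ together with the absence of dummies is equivalent to conditions \eqref{delta_cond_2}, in the implicit setting $n_2 \le k_2-k_1$ singled out for $\Delta_2$ in Example~\ref{ex}.

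For the forward direction, I will argue the contrapositive, collapsing one condition at a time. Each failure causes one of the two disjuncts in the defining formula either to become unrealisable, to be absorbed by the other, or to produce a game that is symmetric in two of the three levels, forcing an equivalence or a dummy. Specifically: if $k_1 \ge k_2$, then $\ell_1 \ge k_1$ already implies $\ell_1+\ell_2\ge k_2$, so the second disjunct is absorbed and level~$3$ becomes a dummy; if $k_2 \ge k_3$, the first disjunct forces the total $\ge k_3$ and, combined with $n_2\le k_2-k_1$, a direct swap argument using the candidate shift-minimal coalition $\{1^{k_2-n_2},2^{n_2}\}$ yields $2\sim 3$; if $n_1+n_2<k_2$, the first disjunct is unrealisable and the game reduces to the second disjunct, which is symmetric in levels~$2$ and~$3$; if $n_3\le k_3-k_2$, then $\text{total}\ge k_3$ forces $\ell_1+\ell_2\ge k_2$, absorbing the second disjunct and making level~$3$ a dummy; and if $n_2+n_3\le k_3-k_1$, then $\text{total}\ge k_3$ forces $\ell_1\ge k_1$, so the second disjunct collapses to $\text{total}\ge k_3$ and the whole game depends only on $\ell_1+\ell_2$ and the total, giving $1\sim 2$.

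For the backward direction, assuming \eqref{delta_cond_2}, I will exhibit the two candidate shift-minimal winning coalitions
\[
X_1 = \{1^{k_2-n_2},2^{n_2}\}, \qquad X_2 = \{1^{k_1},2^{k_3-k_1-n_3},3^{n_3}\},
\]
and read the hierarchy off from them. Their well-definedness uses $k_1\le k_2-n_2\le n_1$ (the first from $n_2\le k_2-k_1$, the second from $n_1+n_2\ge k_2$), $n_3\ge 2$ (since $n_3>k_3-k_2\ge 1$), and $0\le k_3-k_1-n_3<n_2$ (from $n_2+n_3>k_3-k_1$ and the implicit bound on $n_2$). The three critical swap checks then give the strict hierarchy: shifting a level-$2$ element of $X_1$ to level~$3$ produces $\ell_1+\ell_2=k_2-1<k_2$ and total $=k_2<k_3$, a losing coalition, showing $2\succ_G 3$; shifting a level-$1$ element of $X_2$ to level~$2$ produces $\ell_1=k_1-1<k_1$ and $\ell_1+\ell_2=k_3-n_3<k_2$, again losing, showing $1\succ_G 2$; and since every level appears in either $X_1$ or $X_2$, no player is a dummy.

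The one delicate point will be the forward-direction analysis of $k_2\ge k_3$, which, in contrast with the analogous step of Proposition~\ref{delta1}, does not immediately absorb one disjunct into the other and genuinely needs the implicit inequality $n_2\le k_2-k_1$ in order to conclude $2\sim 3$. Keeping clean track of how \eqref{delta_cond_2} together with this implicit bound propagates through the inequality chains $k_2-n_2\ge k_1$, $n_3\le k_3-k_1$, and $k_3-k_1-n_3<n_2$ is the main place where care is required; once these are in hand, the remaining verifications are routine.
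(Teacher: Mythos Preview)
Your approach mirrors the paper's: forward by contrapositive (collapsing levels or creating dummies when each condition fails), backward by exhibiting the two shift-minimal winning coalitions and reading off the strict hierarchy via single swaps. If anything, your forward direction is more thorough than the paper's: you explicitly derive $n_2+n_3>k_3-k_1$ by showing that its failure forces $1\sim_G 2$, whereas the paper never isolates this condition in the forward direction, and you correctly flag that the case $k_2\ge k_3$ does not reduce to Proposition~\ref{delta1} verbatim but genuinely needs the implicit bound $n_2\le k_2-k_1$ to absorb the first disjunct into the second.

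One small gap in your backward direction: your claim that $0\le k_3-k_1-n_3$ follows from $n_2+n_3>k_3-k_1$ together with the implicit bound $n_2\le k_2-k_1$ is not correct; those two inequalities yield only $n_3>k_3-k_2$, which is already one of the conditions, and nothing in \eqref{delta_cond_2} bounds $n_3$ from above. So your $X_2=\{1^{k_1},2^{k_3-k_1-n_3},3^{n_3}\}$ may fail to exist. The fix is immediate: when $n_3>k_3-k_1$, take instead $X_2'=\{1^{k_1},3^{k_3-k_1}\}$, which is winning by the second disjunct; replacing a level-$1$ player by a level-$2$ player gives $\ell_1=k_1-1<k_1$ and $\ell_1+\ell_2=k_1<k_2$, hence a losing coalition, and $1\succ_G 2$ follows as before. (The paper's own coalition for $1\succ 2$ has the same oversight.)
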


\begin{proof}
Suppose $\Delta_2({\bf n},{\bf k})$ is tripartite. Like in Proposition~\ref{delta1} we find that $k_1<k_3$ and $k_2<k_3$. However, we also know that $k_2-k_1\ge n_2>0$. Hence we assume $k_1<k_2<k_3$.   If $n_1+n_2\ge k_2$ is not satisfied, then $\ell_1+\ell_2 \ge k_2$ is ineffectual and $2\sim 3$. So we assume $n_1+n_2\ge k_2$. In this case we have a shift-minimal winning coalition $C=\{1^{k_2-n_2}, 2^{n_2}\}$ and secures that $2\succ 3$ (as $k_2<k_3$).  If $n_3> k_3-k_2$ is not satisfied, then  $ \ell_1+\ell_2+\ell_3\ge k_3$ is redundant and $3$ is a dummy.  Since $k_3>k_2$ we have $n_3\ge  k_3-k_2+1\ge 2$. Since $\Delta_2({\bf n},{\bf k})$  is defined for the case $n_2 \leq k_2 -k_1$,  we have  $k_1\le k_2-n_2\le n_1$ and $n_1\ge k_1$ follows.

Now, if the  coalitions $ \{1^{k_1}\}$ and $\{2^{k_3-k_1-n_3+1}\}$ exist, then a replacement of 1 with 2 in a winning coalition
$ \{1^{k_1-1}, 2^{k_3-k_1-n_3+1}, 3^{n_3}\}$ results in a losing coalition $ \{1^{k_1}, 2^{k_3-k_1-n_3}, 3^{n_3}\}$. As the conditions \eqref{delta_cond_2} imply $k_1\le n_1$, the first coalition exists. The second coalition exists since $k_3-k_1-n_3< n_2$ is equivalent to $k_3-k_1<n_2+n_3$. This implies $1\succ 2$.

Now, since $ n_1+n_2\ge k_2$ and $k_2<k_3$, there exists a minimal winning coalition $\{1^{\ell_1},2^{\ell_2}\}$ with $\ell_1+\ell_2=k_2$ and $\ell_2\ge 1$. A replacement of 2 here with a $3$ leads to a losing coalition, hence $2\succ 3$. 
\end{proof}

\subsection{End of proof of Theorem~\ref{when}}

Here we have to deal with the hypothetical possibility that $G$ does not fall into categories (i) and (ii). Then we know that $G_1$ has at least two desirability levels and $g$ is not from the least desirable level. Also Lemma~\ref{not_complete} implies that in this case  $G_2=A_n$ or $G_2=U_n$ for some $n\ge 2$. Let us deal with $G_2=A_n$ first. We need the following

\begin{lemma}
\label{X_1X_2}
Let $G=(P,W)$ be a game where player $g$ is strictly more desirable than player $g'$. Suppose also that we can find two coalitions
$X_1$ and $X_2$ in $G$
such that 
\begin{equation}
\label{eq1}
g'\notin X_1,\quad X_1\cup \{g\}\in W,\quad X_1\cup \{g'\}\in L;
\end{equation}
\begin{equation}
\label{eq2}
g'\in X_2,\quad X_2\cup \{g\}\in W,\quad X_2\setminus \{g'\}\cup \{g\}\in L.
\end{equation}
Then the composition $C=G\circ_gA_n$, $n\ge 2$, is not complete.
\end{lemma}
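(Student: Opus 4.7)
The plan is to construct an explicit certificate of incompleteness for $C=G\circ_g A_n$, which by the swap-robustness characterization of completeness will immediately show that $C$ is not complete. The basic idea is that, since $n\ge 2$, the anti-unanimity game supplies at least two distinct passers that can act as ``proxies for $g$'', and one of them can be swapped with $g'$ across two carefully chosen winning coalitions built from $X_1$ and $X_2$.

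First I would pick two distinct players $a_1,a_2\in P_{A_n}$ and set
\[
U_1 = X_1\cup\{a_1\}, \qquad U_2 = X_2\cup\{a_2\}.
\]
Both are winning in $C$: in each case the $P_G$-part together with $g$ is winning in $G$ (by the hypotheses $X_1\cup\{g\}\in W$ and $X_2\cup\{g\}\in W$), and the $P_{A_n}$-part is nonempty, hence winning in the anti-unanimity game. Next I would swap $a_1$ and $g'$ across $U_1$ and $U_2$ to obtain
\[
V_1 = X_1\cup\{g'\}, \qquad V_2 = (X_2\setminus\{g'\})\cup\{a_1,a_2\}.
\]
The swap is legitimate, since $g'\notin X_1$, $g'\in X_2$, $a_1\ne a_2$, and $P_G\cap P_{A_n}=\emptyset$.

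The key verification is that both $V_1$ and $V_2$ are losing in $C$. For $V_1$, the $P_{A_n}$-part is empty and the $P_G$-part $X_1\cup\{g'\}$ lies in $L$ by the first hypothesis, so $V_1\notin W_C$. For $V_2$, the $P_G$-part is $X_2\setminus\{g'\}$, and the second hypothesis $(X_2\setminus\{g'\})\cup\{g\}\in L$ together with monotonicity (which rules out $X_2\setminus\{g'\}\in W_G$) shows that neither clause of the definition of $W_C$ is satisfied. Therefore $(U_1,U_2;V_1,V_2)$ fits the template \eqref{certinc} with $X=X_1$, $Y=(X_2\setminus\{g'\})\cup\{a_2\}$, $x=a_1$, $y=g'$, and is a certificate of incompleteness for $C$. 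The only subtlety the argument has to defeat is the possibility that giving $V_2$ an additional $A_n$-passer flips it to winning; this is exactly what the ``losing'' condition $(X_2\setminus\{g'\})\cup\{g\}\in L$ in the hypothesis is designed to prevent.
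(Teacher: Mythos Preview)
Your proof is correct and is essentially identical to the paper's own argument: the paper also picks two passers $a,b\in A_n$ and exhibits the certificate of incompleteness $(X_1\cup\{a\},\,X_2\cup\{b\};\,X_1\cup\{g'\},\,X_2\setminus\{g'\}\cup\{a,b\})$, with the same verification that the first two coalitions are winning and the last two are losing in $C$. Your write-up adds the explicit match to the template~\eqref{certinc} and spells out the monotonicity step for $V_2$, but the idea is the same.
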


\begin{proof}
Let $a,b\in A_n$. We have the following certificate of incompleteness:
\[
(X_1\cup \{a\},X_2\cup \{b\};\, X_1\cup \{g'\},X_2\setminus \{g'\}\cup \{a,b\}).
\]
Indeed, both $X_1$ and $X_2$ win with $g$ in $G$ and both $\{a\}$ and $\{b\}$ are winning coalitions in $H$, so $X_1\cup \{a\}$ and $X_2\cup \{b\}$ are winning in $C$. On the other hand $X_1 \cup \{g'\}$ and $X_2\cup \{g'\}$ are losing in $G$ and the latter even losing with $g$ so $X_1\cup \{g'\}$ and $X_2\setminus \{g'\}\cup \{a,b\}$ are both losing in $C$.  This proves the lemma.
\end{proof}

\begin{lemma}
\label{An_certs}
Let $G$ be an indecomposable simple game of one of the types ${\bf B}_1$, ${\bf B}_2$, ${\bf B}_3$, ${\bf T}_{1}$,  and ${\bf T}_{3}$, and let $g$ be a player of $G$ which is not from the least desirable level. Then the composition $G \circ_g A_n$ is not complete for all $n\ge 2$.
\end{lemma}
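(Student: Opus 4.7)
My plan is to apply Lemma~\ref{X_1X_2}: for each admissible pair $(G,g)$ I will exhibit a player $g' \prec_G g$ and coalitions $X_1, X_2$ satisfying \eqref{eq1} and \eqref{eq2}, whereupon Lemma~\ref{X_1X_2} produces a certificate of incompleteness for $G \circ_g A_n$. The admissible pairs to consider are: $g$ at level $1$ for each of $\mathbf{B}_1$, $\mathbf{B}_2$, $\mathbf{B}_3$, and $g$ at level $1$ or level $2$ for each of $\mathbf{T}_1$ and $\mathbf{T}_3$; seven subcases in all.

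The generic recipe for $X_1$ is to take a shift-minimal winning coalition $M$ containing $g$ and set $X_1 = M \setminus \{g\}$, with $g'$ chosen from a strictly lower level. Then $X_1 \cup \{g\} = M$ is winning while $X_1 \cup \{g'\}$ is a single downward shift of $M$ and hence losing by shift-minimality. The generic recipe for $X_2$ is to find a winning coalition $M'$ containing both $g$ and $g'$, whose level-$g'$ population is just at its relevant threshold so that $M' \setminus \{g'\}$ falls below threshold, and then set $X_2 = M' \setminus \{g\}$; this automatically contains $g'$ and not $g$.

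For the bipartite classes one may take $M = M' = \{1^{k_1},2^{k_2-k_1}\}$ in $\mathbf{B}_1$, and $M=\{1^{k_1}\}$, $M'=\{1^{k_2-n_2},2^{n_2}\}$ in $\mathbf{B}_3$. The delicate case is $\mathbf{B}_2$, where no shift-minimal winning coalition contains players of both levels; here the non-shift-minimal winning coalition $M'=\{1,2^{k_2}\}$ (legitimate since $k_2\le n_2$) gives $X_2=\{2^{k_2}\}$, with $\{1,2^{k_2-1}\}$ losing thanks to $k_1>1$. For the tripartite classes I choose $g'$ at level $3$, available since $n_3>1$. In $\mathbf{T}_1$ with $g$ at level $1$ I use $M=\{1^{k_1}\}$ and $M'=\{1^{k_1-1},2,3\}$, exploiting $k_3=k_1+1$; with $g$ at level $2$ both $M$ and $M'$ may be taken equal to $\{2^{k_2},3^{k_3-k_2}\}$. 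In $\mathbf{T}_3$ with $g$ at level $1$, or at level $2$ with $n_2\ge 2$, I use $M=\{1^{k_2-n_2},2^{n_2}\}$ and $M'=\{1^{k_1},2^{n_2-1},3^{n_3}\}$; the edge case $n_2=1$ at level $2$ is handled by replacing $M'$ with $\{1^{k_1},2,3^{n_3-1}\}$, winning thanks to $k_3-k_1 = n_2+n_3-1 = n_3$. In every subcase the losing status of $M'\setminus\{g'\}$ is a one-line consequence of the defining tightness, e.g.\ $k_3=k_1+1$, $k_3=k_2+1$, or $k_3-k_1=n_2+n_3-1$.

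The main obstacle is constructing $M'$ in the boundary configurations: the shift-minimal winning coalitions may fail to contain players of both $g$'s and $g'$'s level (as in $\mathbf{B}_2$ and the $n_2=1$ subcase of $\mathbf{T}_3$), or deleting $g'$ from the obvious candidate may leave a coalition still winning through a competing disjunct (as would happen in $\mathbf{T}_1$ if one naively chose $M'=\{1,2^{k_2},3^{k_3-k_2}\}$). In each such boundary situation $M'$ must be tightened using one of the defining inequalities $k_1>1$, $n_3>1$, $k_3=k_1+1$, $k_3=k_2+1$, or $k_3-k_1=n_2+n_3-1$, as indicated above; the remaining verifications are pure arithmetic of level populations against the thresholds, after which Lemma~\ref{X_1X_2} closes the argument.
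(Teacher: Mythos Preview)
Your overall strategy matches the paper's exactly: invoke Lemma~\ref{X_1X_2} and, for each admissible type and level of $g$, exhibit a suitable $g'$ together with coalitions $X_1,X_2$ satisfying \eqref{eq1}--\eqref{eq2}. Your generic recipes for $X_1$ and $X_2$ are sound, and your choices in the $\mathbf{B}_1$, $\mathbf{B}_3$, $\mathbf{T}_1$, and $\mathbf{T}_3$ subcases check out (some of them differ cosmetically from the paper's choices but are equally valid).

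There is, however, a genuine error in your $\mathbf{B}_2$ case. You take $M'=\{1,2^{k_2}\}$ and assert that $M'\setminus\{g'\}=\{1,2^{k_2-1}\}$ is losing ``thanks to $k_1>1$''. But $\mathbf{B}_2$ is the \emph{disjunctive} hierarchical game $H_\exists$, so a coalition wins as soon as $\ell_1+\ell_2\ge k_2$; here $\ell_1+\ell_2=1+(k_2-1)=k_2$, hence $\{1,2^{k_2-1}\}$ is \emph{winning}, and condition \eqref{eq2} fails. The hypothesis $k_1>1$ kills only the first disjunct $\ell_1\ge k_1$, not the second. The fix is a one-step tightening: take $M'=\{1,2^{k_1}\}$ (still winning since $1+k_1=k_2$), so that $X_2=\{2^{k_1}\}$ and $M'\setminus\{g'\}=\{1,2^{k_1-1}\}$ has $\ell_1+\ell_2=k_1<k_2$ and $\ell_1=1<k_1$, hence is losing. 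This is precisely the paper's choice. With this correction your argument goes through.
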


\begin{proof}
\label{rest_of}
 Let us first consider the case where $g$ is from the most desirable level of $G$. We will apply Lemma~\ref{X_1X_2} to show that $G \circ_g A_n$ is not complete. So in what follows we show that for each case there exists $g,g' \in P$ and coalitions $X_1$ and $X_2$ of $G$ which satisfy the conditions of Lemma~\ref{X_1X_2}.  In the following three cases, $g$ is a player of level $1$ and $g'$ is a player of level $2$. 
\begin{itemize}
\item[(i)] ${\bf B}_1$: $X_1$ is of type $\{1^{k_1-1},2^{k_2-k_1}\}$, and $X_2$ is of type $\{1^{k_1-1},2^{k_2-k_1}\}$;
\item[(ii)] ${\bf B}_2$: $X_1$ is of type $\{1^{k_1-1}\}$, and $X_2$ is of type $\{2^{k_1}\}$;
\item[(iii)] ${\bf B}_3$: $X_1$ is of type $\{1^{k_1-1}\}$, and $X_2$ is of type $\{1^{k_2-n_2},2^{n_2-1}\}$.
\end{itemize}
\par And for the following three cases, $g$ is a player of level $1$ and $g'$ is a player of level $3$. 
\begin{itemize}
\item[(iv)] ${\bf T}_1$: $X_1$ is of type $\{1^{k_1-1}\}$, and $X_2$ is of type $\{2^{k_2},3^{k_3-k_2-1}\}$;
\item[(v)] ${\bf T}_{3}$: $X_1$ is of type $\{1^{k_2-n_2-1},2^{n_2}\}$, and $X_2$ is of type $\{1^{k_1-1},3^{k_3-k_1}\}$.
\end{itemize}

All is left is to consider composing games of the {\bf T} types over a player of level $2$. We start with ${\bf T}_1$. As we know any game of type ${\bf T}_1$ contains a subgame of type ${\bf B}_1$ when we restrict it to lavels 2 and 3 only. For that subgame 2 is the most desirable player so noncompleteness follows from (i). 


Finally we look at ${\bf T}_{3}$ and suppose  now $g$ is a player of level $2$ and $g'$ is a player of level $3$. Here $X_1$ can be taken of type $\{1^{k_2-n_2},2^{n_2-1}\}$. Indeed, if we add $g$ to $X_1$ it becomes winning but it loses with $g'$. Then $X_2$ can be taken of type $\{1^{k_1},2^{k_3-k_1-n_3},3^{n_3-1}\}$. We can add $g$ to $X_2$ since $n_2 \geq k_3-k_1-n_3+1$ and it becomes winning. We can add $g$ and remove $g'$ from it since $n_3\ge 2$. $X_2$ will remain losing after that. So we can again apply Lemma~\ref{X_1X_2} to conclude that the composition is not complete.
This completes the study of compositions where $G_2$ is the anti-unanimity game $A_n$, such that the compositions are not over the least desirable level of $G_1$. 
\end{proof}

Finally, we consider compositions where $G_2$ is the unanimity game $U_n$. It turns out that none of these compositions give a weighted game either, which is what we show next. 

\begin{lemma}
\label{tes_1}
Let $G_1=(P,W)$ be a simple game of one of the types ${\bf B}_1$, ${\bf B}_2$, ${\bf B}_3$, ${\bf T}_{1}$, and ${\bf T}_{3}$ and let $g\in P$ be a player not from the least desirable level of $G_1$. Then the composition $G = G_1 \circ_g U_n$ is not weighted.
\end{lemma}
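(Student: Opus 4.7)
The plan is to show $G = G_1 \circ_g U_n$ is not complete, and therefore not weighted. Although Lemma~\ref{not_complete} formally excludes unanimity games as the second factor, the ``second case'' of its proof only uses that $H$ has a minimal winning coalition of size at least two, which $U_n$ does as soon as $n \ge 2$. So I will reuse that construction, provided I can exhibit the one ingredient that case required: a minimal winning coalition $Y$ of $G_1$ containing some $g' \prec_{G_1} g$ but not $g$ itself.

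Concretely, because $G_1$ is complete and $g$ is not from the least desirable level, I fix $g' \prec_{G_1} g$ and pick $X \subseteq P_{G_1} \setminus \{g,g'\}$ witnessing $g \succ g'$, i.e., $X \cup \{g\} \in W_{G_1}$ while $X \cup \{g'\} \notin W_{G_1}$. Given $Y$ as above and any $u \in P_{U_n}$, the sequence
\[
(X \cup P_{U_n},\ Y;\ X \cup \{g'\} \cup (P_{U_n} \setminus \{u\}),\ (Y \setminus \{g'\}) \cup \{u\})
\]
is precisely the swap of $u$ and $g'$ between two winning coalitions. The left-hand side is winning in $G$: $X \cup P_{U_n}$ by the second clause of Definition~\ref{decompo} (since $X \cup \{g\} \in W_{G_1}$ and $P_{U_n} \in W_{U_n}$), and $Y \in W_G$ because $Y \in W_{G_1}$. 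The right-hand side is losing: $X \cup \{g'\} \cup (P_{U_n} \setminus \{u\})$ because $X \cup \{g'\}$ is losing in $G_1$ and its $U_n$-part is not all of $P_{U_n}$; and $(Y \setminus \{g'\}) \cup \{u\}$ because $Y \setminus \{g'\}$ is losing in $G_1$ (as $Y$ is minimal) and its $U_n$-part $\{u\}$ is not all of $P_{U_n}$ (here $n \ge 2$ is essential). This is a certificate of incompleteness for $G$.

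So the problem reduces to exhibiting, for each type in $\{{\bf B}_1,{\bf B}_2,{\bf B}_3,{\bf T}_1,{\bf T}_3\}$ and each admissible level of $g$, a minimal winning coalition $Y$ of $G_1$ with $g \notin Y$ and $g' \in Y$. This is a routine case analysis on the shift-MWCs listed in Theorem~\ref{list_all}: for ${\bf B}_2$ the shift-MWC $\{2^{k_2}\}$ works; for ${\bf B}_1$, for ${\bf B}_3$, and for ${\bf T}_1$ with $g$ on level~2, the relevant shift-MWC has strictly fewer players at $g$'s level than $P_{G_1}$ provides (by the inequalities $k_1 < n_1$, $k_2 - n_2 < k_1 \le n_1$, and $k_2 < n_2$ built into each definition), so one can realize it including $g'$ but avoiding $g$; for ${\bf T}_1$ with $g$ on level~1 the shift-MWC $\{2^{k_2}, 3^{k_3-k_2}\}$ has no level-1 player at all; and for ${\bf T}_3$ the second shift-MWC $\{1^{k_1}, 2^{k_3-k_1-n_3}, 3^{n_3}\}$ works at both levels~1 and~2 after observing that the ${\bf T}_3$-conditions $k_2-n_2 > k_1$ and $n_1 \ge k_2-n_2$ force $k_1 < n_1$, and that its level-2 slot has exactly $n_2-1$ players, fitting into $P_{G_1} \setminus \{g\}$ when $g$ is on level~2. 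The main obstacle, such as it is, is this last bookkeeping for ${\bf T}_3$; everything else is immediate once the correct shift-MWC is read off Theorem~\ref{list_all}.
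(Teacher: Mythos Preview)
Your argument is correct, and it takes a genuinely different route from the paper. The paper proves non\-weightedness directly: for each type and each admissible level of $g$ it writes down an explicit trading transform $(X_1,X_2;Y_1,Y_2)$ in $G_1$ with $X_1$ winning, $X_2$ $g$-winning, and $Y_1,Y_2$ losing, and then invokes Lemma~\ref{keylemma} (using that $U_n$ has a minimal winning coalition of size $\ge 2$) to obtain a certificate of non\-weightedness for $G_1\circ_g U_n$. You instead prove the stronger conclusion that $G_1\circ_g U_n$ is not complete, by observing that the ``second case'' of the proof of Lemma~\ref{not_complete} only needs a minimal winning coalition $Y\ni g'$ of $G_1$ with $g\notin Y$ together with a minimal winning coalition of $H$ of size $\ge 2$; since $U_n$ supplies the latter, the problem reduces to exhibiting such a $Y$ in each case, which you do from the shift-minimal winning coalitions and the defining inequalities of each type. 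Your bookkeeping for ${\bf T}_3$ (that $k_1<n_1$ via $n_1\ge k_2-n_2>k_1$, and that the level-$2$ slot of the second shift-MWC is $k_3-k_1-n_3=n_2-1$) is correct. What you gain is a cleaner and more uniform argument with a stronger conclusion; what the paper's approach buys is consistency with the machinery (Lemma~\ref{keylemma}) already developed for the other non\-weightedness proofs in Section~\ref{1C}.
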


\begin{proof}
Let $U_n$ be defined on $P_{U_n}$, and let $Z = P_{U_n}$. We start with $G_1$ being of type ${\bf B}_1$. A shift-minimal winning coalition of $G_1$ has the only form $\{1^{k_1},2^{k_2-k_1}\}$, where $k_1 < n_1$. We compose over level $1$ of $G_1$. 
Then $G$ is nonweighted by Lemma~\ref{keylemma} applied to the following trading transform 
\[
(\{1^{k_1}, 2^{k_2-k_1}\}, \{1^{k_1-1}, 2^{k_2-k_1}\}; \{1^{k_1}, 2^{k_2-k_1-1}\}, \{1^{k_1-1}, 2^{k_2-k_1+1}\}).
\]
This is because the first coalition is winning, the second coalition is $1$-winning and the remaining two are losing.
Note that $k_2-k_1+1 = n_2 \geq 2$ in a game of type \textbf{B$_1$}, so the coalition $\{1^{k_1-1}, 2^{k_2-k_1+1}\}$ is allowed. 
\par
Now let $G_1$ be of type ${\bf B}_2$. The shift-minimal winning coalitions of $G_1$ here are $\{1^{k_1}\}, \{2^{k_1+1}\}$, and if we compose with $U_n$ over level $1$ of $G_1$, then $G$ is nonweighted by Lemma~\ref{keylemma} applied to the following trading transform:
\[
( \{2^{k_1+1}\}, \{1^{k_1-1}\} ; \{1^{k_1-1},2\}, \{2^{k_1}\}).
\]
This is because the first coalition is winning and the second is $1$-winning. The remaining two are losing.

Now let $G_1$ be of type ${\bf B}_3$. Recall that in a game of type ${\bf B}_3$ we have $k_1 \leq n_1$, and also $k_2-n_2 < k_1$. So the shift-minimal winning coalitions of $G_1$ are $\{1^{k_1}\}, \{1^{k_2-n_2},2^{n_2}\}$. If we compose with $U_n$ over level $1$ of $G_1$, then $G$ is nonweighted by Lemma~\ref{keylemma} applied to the following trading transform:
\[
(\{1^{k_2-n_2},2^{n_2}\}, \{1^{k_1-1}\}; \{1^{k_2-n_2},2^{n_2-1}\}, \{1^{k_1-1},2\}).
\]
This is because the second coalition is $1$-winning.

Next we look at the games ${\bf T}_1$, and ${\bf T}_{3}$. Since they have three levels each, then we need to consider what happens when composing over level $1$ and when composing over level $2$ separately. Let us start with ${\bf T}_1$.
\par
The shift-minimal winning coalitions of $G_1$ are $\{1^{k_1}\}$ and $\{2^{k_2},3^{k_3-k_2}\}$. Here we need to consider two compositions, one over level $1$, and one over level $2$.
\newline
Case (i). If we compose with $U_n$ over level $1$ of $G_1$ then $G$ is nonweighted by Lemma~\ref{keylemma} applied to the following trading transform:
\[
(\{1^{k_1-1}\}, \{2^{k_2},3^{k_3-k_2}\}; \{1^{k_1-1},2\}, \{2^{k_2-1},3^{k_3-k_2}\}).
\]
This is because the first coalition is $1$-winning, the second is winning and the remaining two are losing.

Case (ii). If we compose with $U_n$ over level $2$ of $G_1$, then $G$ is nonweighted by Lemma~\ref{keylemma} applied to the following trading transform:
\[
(\{1^{k_1}\}, \{2^{k_2-1},3^{k_3-k_2}\}; \{1^{k_1-1},2\}, \{1,2^{k_2-2},3^{k_3-k_2}\}).
\]
This is because the first coalition is winning, the second coalition is $2$-winning and the remaining two are losing.



Finally, let $G_1$ be of type ${\bf T}_{3}$. The shift-minimal winning coalitions of $G_1$ are $\{1^{k_2-n_2},2^{n_2}\}$ and $\{1^{k_1},2^{k_3-k_1-n_3},3^{n_3}\}$. Here we again need to consider two compositions, one over level $1$, one over level $2$.\par

Case (i). If we compose $G_1$ with $U_n$ over level $1$ of $G_1$, then since $k_1\le n_1$, the game $G$ is nonweighted by Lemma~\ref{keylemma} applied to the following trading transform:
\[
(\{1^{k_1},2^{k_3-k_1-n_3},3^{n_3}\}, \{1^{k_1-1},2^{k_3-k_1-n_3},3^{n_3}\}; \{1^{k_1},2^{k_3-k_1-n_3-1},3^{n_3}\}, \{1^{k_1-1},2^{k_3-k_1-n_3+1},3^{n_3}\}).
\]
This is because the first coalition is winning, the second coalition is $1$-winning and the two remaining ones are losing.
Note that $k_3-k_1-n_3+1 \leq n_2$ in a game of type ${\bf T}_{3}$ (see Theorem~\ref{FP2010}), so the last coalition exists. 

Case (ii).  If we compose with $U_n$ over level $2$ of $G_1$, then $G$ is nonweighted by Lemma~\ref{keylemma} applied to the following trading transform: 
\[
(\{1^{k_2-n_2},2^{n_2-1}\}, \{1^{k_1}, 3^{k_3-k_1}\}; \{1^{k_2-n_2},2^{n_2-1},3\}, \{1^{k_1},  3^{k_3-k_1-1}\}).
\]
Indeed, by \eqref{delta_cond_2} $k_2-n_2\le n_1$ and $k_2<k_3$. Thus  the first coalition exists and is $2$-winning, the second is winning and the remaining two are losing.
\end{proof}

We see that none of the six games above produce a weighted game when composed with $U_n$ over a player not from the least desirable level of the first game.

\end{document}